\newtheorem{theorem}{Theorem}[section]
\newtheorem{lemma}[theorem]{Lemma}
\theoremstyle{definition}
\newtheorem{definition}{Definition}[section]
\newtheorem{example}{Example}[section]
\newcommand{\bx}{\bm{x}}
\newcommand{\bv}{\bm{v}}
\newcommand{\bw}{\bm{w}}
\newcommand{\bb}{\bm{b}}
\newcommand{\balpha}{\bm{\alpha}}
\DeclareMathOperator{\b1}{\mathbbm{1}}
\newcommand{\bR}{\mathbb{R}}
\newcommand{\bE}{\mathbb{E}}
\newcommand{\bP}{\mathbb{P}}
\newcommand{\bZ}{\mathbb{Z}}
\DeclareMathOperator*{\argmax}{arg\,max}
\DeclareMathOperator*{\argmin}{arg\,min}
\newcommand{\low}{\mathrm{low}}
\newcommand{\mmid}{\mathrm{mid}}
\newcommand{\uup}{\mathrm{up}}
\title{Contest Design with Threshold Objectives}
\author[2]{Edith Elkind}
\author[1]{Abheek Ghosh}
\author[1]{Paul W.\ Goldberg}
\affil[1]{University of Oxford}
\affil[2]{Northwestern University}
\date{}
\begin{document}

\maketitle

\begin{abstract}
We study contests where the designer's objective is an extension of the widely studied objective of maximizing the total output: The designer gets zero marginal utility from a player's output if the output of the player is very low or very high. We consider two variants of this setting, which correspond to two objective functions: \textit{binary threshold}, where the designer's utility is a non-decreasing function of the number of players with output above a certain threshold; and \textit{linear threshold}, where a player's contribution to the designer's utility is linear in her output if the output is between a lower and an upper threshold, and becomes constant below the lower and above the upper threshold. For both of these objectives, we study \textit{rank-order allocation} contests and \textit{general} contests.
We characterize the contests that maximize the designer's objective and indicate techniques to efficiently compute them.
\end{abstract}

\begingroup
\renewcommand\thefootnote{}\footnote{This paper was published in the \textit{International Journal of Game Theory}, 2025. A preliminary version was presented at the 17th Conference on Web and Internet Economics, 2021.}%
\addtocounter{footnote}{-1}
\endgroup

\section{Introduction}
Contests are games in which (1) players invest effort and produce outputs toward winning one or more prizes, (2) those investments of effort are costly and irreversible, and (3) the prizes are allocated based on the values of outputs. They are prevalent in many areas, including sports, rent-seeking, patent races, innovation inducement, labor markets, college admissions, scientific projects, crowdsourcing and other online services.\footnote{See the book by Vojnovic~\cite{vojnovic2015contest} for an introduction to contest theory and relevant resources.} 

We study contests as incomplete information games. We assume that the players are self-interested and exert costly effort in order to win valuable prizes. Each player is associated with a private \textit{ability} (or \textit{quality}), and their cost, as a function of their output, is linear with a slope equal to the inverse of their ability. The players know the prize allocation scheme, their own ability, and the prior distributions of other players' abilities, and play strategically, reaching a Bayes--Nash equilibrium. On the other hand, the contest designer knows the prior distributions of the players' abilities, and can therefore compute the equilibrium behavior of the players. She wants to design the prize allocation scheme to elicit equilibrium behavior that optimizes her own objective.

\subsection{Designer's Objective}
The most widely studied designer's objective in the literature is the total output, i.e., the sum of the outputs generated by the players. 
Under the total output objective, the designer values equally the marginal output by weak players (producing low output) and strong players (producing high output). In particular, if there are $n$ players in the contest and $b_i$ is the output of the $i$-th player, then the total output objective is
\[
    \text{total output: }\quad \max \sum_{i = 1}^n b_i.
\]
However, in several practical scenarios, the designer may want to focus on the output generated by a section of players producing low/middle/high level of output or to elicit an adequate output from several players instead of very high output from a few players. Motivated by this observation, we consider objectives of the following type
\[
    \text{our designer's objective: }\quad \max \sum_{i = 1}^n f(b_i),
\]
where the function $f$ takes the \textit{threshold} form shown in Figure~\ref{fig:objectives} and described below.

\begin{figure}
    \centering
    \includegraphics[width=.32\textwidth]{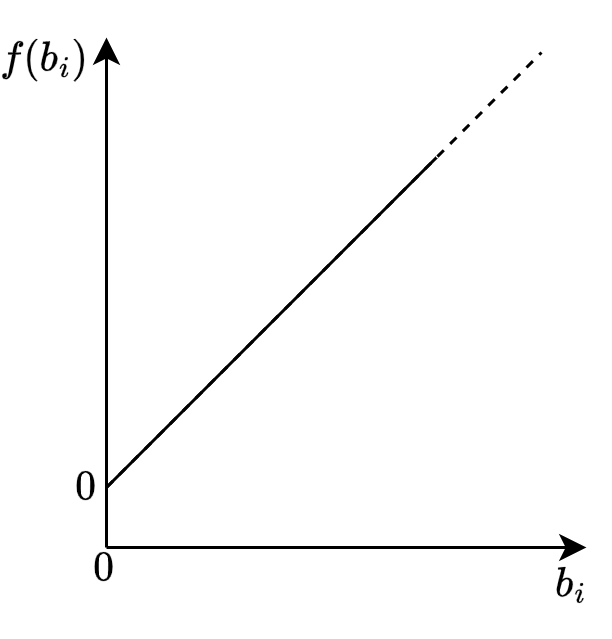}\hfill
    \includegraphics[width=.32\textwidth]{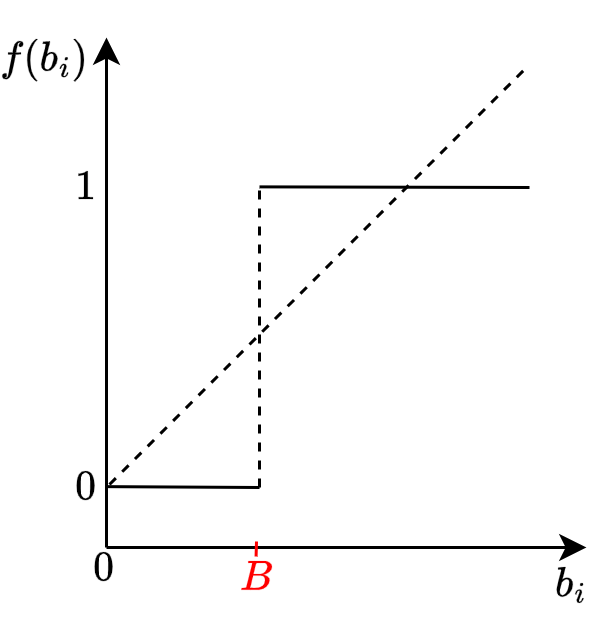}\hfill
    \includegraphics[width=.32\textwidth]{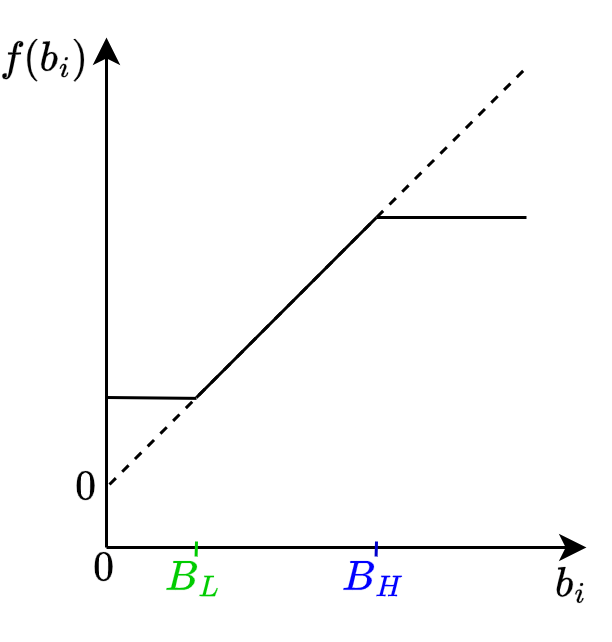}\hfill
    \caption{Designer's objectives (left to right): total output, binary threshold, and linear threshold.}
    \label{fig:objectives}
\end{figure}

We focus on two types of objective functions. Under the {\em binary threshold objective}, there is a fixed threshold $B$, and the designer's objective value weakly increases with the number of \textit{successful} players. A player is successful if she produces an output of at least $B$. In contrast, under the {\em linear threshold objective}, there is a lower threshold $B_L$ and an upper threshold $B_H$. All players who produce an output below $B_L$ make the same contribution to the designer's objective; similarly, all players who produce an output above $B_H$ contribute the same amount. However, between these thresholds, a player's output contributes linearly to the designer's objective, just like in the case of the total output objective.
These threshold objectives are approximations of an S-shaped curve; the similarity can be observed from the plots in Figure~\ref{fig:objectives}. An S-shaped objective, and the threshold objectives that approximate it, discount the marginal output from the tails and focus on the players in the middle.
In the appendix, we also consider convex and concave objectives, which correspond to the bottom- and top-halves of S-shaped objectives, respectively.

A designer with a binary threshold objective aims to elicit an adequate output from several players.
Consider, for instance, an instructor who is preparing the students for a standardized test that measures the school performance, but has a limited impact on the students' educational trajectories. The instructor/school may want to incentivize her students to perform better by giving rewards. If a student is far below the pass/fail threshold, they are likely to fail even if they improve their performance a bit; similarly, there is no need to push students who are sure to excel to work even harder. The crucial students are the ones in between, and the instructor wants to award the prizes to elicit additional output from these students.
Another example could be online competition and crowdsourcing platforms (e.g., Topcoder\footnote{\url{https://www.topcoder.com/}}) and hackathons. These competitions are sometimes organized to encourage learning and adoption of new tools and technologies, and the designer may prefer to get sincere participation from many players.
Similarly, a health insurance company promoting a fitness/exercise app aims to encourage many subscribers to start exercising regularly rather than to get a few fitness enthusiasts to log many hours each day. They could use their budget to give rewards to the active subscribers.

The binary threshold objective has a jump in the designer's utility at the threshold, but all players above it (or below it) are the same for the designer. This may be reasonably expressive for some applications, but for others, a smoother threshold region may be desirable. The linear threshold objective provides this added expressiveness. 
Slightly extending the motivating examples mentioned earlier, for the standardized school test, if the test is of average difficulty but provides relatively fine-grained grades, the excellent students who are sure to get a full score  (or the students at the other tail) are still less relevant, but the other students who can increase their score with additional effort are relevant.
Similarly, in the hackathon (and online competition) or the fitness app examples, the designer may benefit from a marginal increase in the output of moderately performing individuals.

The examples discussed above are likely to be best modeled using a general S-shaped objective. However, the threshold objectives can be good approximations of such S-shaped objectives for specific applications and are analytically tractable. Also, the threshold functions only have a few parameters---the thresholds---compared to a general S-shaped curve, which can be a desirable property for a practitioner.


\subsection{Other Modeling Choices}
We consider the following constraints that a designer may face while designing contests to optimize the threshold objectives. The most natural is rank-order allocation of prizes with unit-sum constraints, but we study the other cases as well to get a complete picture of the structure of the optimal contest.

\subsubsection{Rank-Order Allocation and General All-Pay Allocation}
A contest designer might be restricted to using only the relative value of the players' outputs to award prizes, which motivates the study of contests with a rank-order allocation of prizes (see, e.g.,~\cite{moldovanu2001optimal}). For example, the instructor may rank the students based on their performance and award the prizes based on the ranks.
On the other hand, the contest designer might be allowed to use the numerical values of the players' outputs and design the optimal general all-pay contest (see, e.g.,~\cite{chawla2019optimal}). For example, the instructor may decide that the students get prizes based on their ranks but only if they get above a minimum score in the test.

\subsubsection{Unit-Sum and Unit-Range}
We assume that the prizes are non-negative, and we normalize them in two ways: \textit{unit-sum} and \textit{unit-range}. 
The unit-sum constraint is a budget constraint, which requires that the total prize awarded does not exceed $1$ (normalized). For example, the instructor (or the hackathon organizer or the fitness app) may have a limited budget to reward the participating players.
The unit-range constraint restricts the individual prizes awarded to the players to be between $0$ and $1$ (normalized). 
Such a constraint is suitable when the designer is not restricted by a budget. For example, the instructor may award medals or provide letters of recommendation as a result of the test scores, and she might not have a rigid budget constraint. Similarly, the fitness app may provide reputation badges to incentivize agents.


\subsection{Our Results}
Overall, our results indicate that the contests that optimize the threshold objectives are strict generalizations of the contests that optimize the total output. For example, for the total output, the optimal contest allocates the entire prize budget (for unit-sum) to the top-ranked player, but for threshold objectives, lower-ranked players may also be awarded positive prizes. This may explain the widespread practice of awarding multiple prizes in a contest. On the other hand, the optimal contests are still simple, i.e., they are easy to interpret, compute, and implement. 

\subsubsection{Contests with Rank-Order Allocation}
We call a contest that equally distributes the prize among the top $k$ players (for some $k \le n$, where $n$ is the number of players) and gives $0$ to others a \textit{simple} contest. Any rank-order allocation contest can be written as a convex combination of these $n$ simple contests, and the equilibrium in the rank-order allocation contest is also the same convex combinations of the equilibria in the simple contests. This observation allows us to derive the following results.

For the binary threshold objective, the optimal contest is a simple contest, i.e., it equally distributes the prize among the top $k$ players. 
The value of $k$ depends upon the threshold $B$ and the distribution $F$ of the players' abilities but not upon the non-decreasing function, say $\rho : \{0\} \cup [n] \rightarrow \bR_{\ge 0}$, that maps the number of successful players to the utility of the designer (Theorem~\ref{thm:rankBinary}). The independence on $\rho$ follows from Lemma~\ref{thm:binaryEquiv}, which shows that the optimal contest for any non-decreasing $\rho$ is the same.
Intuitively, among all the simple contests, there is one that maximizes the probability that an arbitrary agent (with ability i.i.d. from $F$) produces output above the threshold $B$, and this contest is the optimal contest.


The linear threshold objective is more complex than the binary threshold. An optimal contest for this case, in addition to increasing the number of players producing output above the upper threshold, also cares about increasing the output of the players between the two thresholds. We show that the optimal contest is a convex combination of at most $3$ simple contests.
In other words, the optimal contest has up to three levels of prizes: the top $k$ players get the first-level prize, the next $\ell$ players get the second-level prize, the next $m$ players get the third-level prize, and the last $n-k-\ell-m$ players do not get anything; here, $n$ is the number of players, and the values of $k$, $\ell$, and $m$ depend upon the value of the thresholds and $F$ (Theorem~\ref{thm:rankLinear}). 
We also prove that a simple contest (recall that this format is optimal for the binary threshold) has an approximation ratio of $2$ for the linear threshold (Theorem~\ref{thm:rankSimpleVsOptimal}). 

For both objectives, the results apply to both unit-sum and unit-range constraints on the prizes. Qualitatively, the results are the same for these constraints, but the number of players at the different prize levels may vary slightly depending upon the type of constraint.

    
\subsubsection{General All-Pay Contests}
General all-pay contests can use the numerical values of the agents' outputs (in addition to the relative ordering, as used by rank-order allocation contests). The optimal contests strongly leverage this added expressive power.

For the binary threshold objective, the optimal contest is the most intuitive one. It equally distributes the prize to the players who produce an output above a reserve output level, and this reserve is equal to the threshold (Theorem~\ref{thm:optBinary}). This is true for both unit-sum and unit-range constraints, and is independent of the function that maps the number of successful players to the utility of the designer (Lemma~\ref{thm:binaryEquiv}).

For the linear threshold objective, the optimal contest is an extension of the revenue-maximizing all-pay auction with a reserve~\citep{myerson1981optimal}. 
For the unit-range constraint, the problem becomes a standard principal-agent problem (single agent) as the allocation for a given agent can be optimized independently of others.
If the distribution of the players' abilities, $F$, is \textit{regular}\footnote{See Definition~\ref{def:regular}. This is a weaker assumption than the \textit{monotone hazard rate} condition.}, then there is a reserve output between the lower and the upper threshold, and any player with an output above the reserve gets a prize of $1$ (Theorem~\ref{thm:optLinearReg}). 
For the unit-sum constraint and regular $F$, the contest has a reserve output and a \textit{saturation} output. In this case, the prize allocation depends on whether the player with the highest output is: 
(i) below the reserve, (ii) between the reserve and the saturation level, or (iii)
above the saturation level. In case (i), no one gets a prize; in case (ii), the player with the highest output gets the entire prize; and in case (iii), the prize is distributed equally among the players with outputs above the saturation level (Theorem~\ref{thm:optLinearReg}). The reserve and the saturation levels depend upon $F$. For \textit{irregular} $F$, following techniques from optimal auction design, we \textit{iron} the \textit{virtual ability function} to get an optimal contest that is a generalization of the optimal contest for the regular case (Theorem~\ref{thm:optLinearIr}).


\subsection{Related Work}
Rank-order allocation of prizes is the dominant paradigm in contest theory. Our work is closely related to prior work on contest design with incomplete information and unit-sum constraints~\citep{glazer1988optimal,moldovanu2001optimal,chawla2019optimal}.
Glazer and Hassin~\cite{glazer1988optimal} show that for linear cost functions and players' abilities sampled i.i.d. from a uniform distribution, 
the contest that maximizes the total output awards the entire prize to the top-ranked player. Moldovanu and Sela~\cite{moldovanu2001optimal} give the symmetric Bayes--Nash equilibrium (Theorem~\ref{thm:beta}) that we use in our analysis. They also generalize the result of Glazer and Hassin~\cite{glazer1988optimal} and show that awarding the entire prize to the top-ranked player is optimal when the players have (weakly) concave cost functions with the abilities sampled i.i.d. from any distribution with continuous density function; however, with convex cost functions, the optimal mechanism can have multiple prizes.%
\footnote{Specific cases of our problem (such as the linear threshold objective with only an upper threshold) are related, although not equivalent, to the problem of maximizing total output when players have non-linear cost functions. A non-linear cost function affects the players' equilibrium behavior, but a threshold objective is associated with the designer and does not directly affect the equilibrium behavior. For example, if every player has a convex cost function that is linear up to a certain threshold and then goes to infinity (this setting is similar in spirit to the linear threshold objective with an upper threshold), then no player would produce an output above the threshold, but we will see that the optimal rank-order allocation contest with a linear threshold objective may have players that produce an output above the upper threshold.}
Chawla et al.~\cite{chawla2019optimal} optimize maximum individual output instead of total output, in a similar incomplete information setup with linear cost functions. Here too the optimal contest allocates the entire prize to the top-ranked player. 

Our study of the optimal general all-pay contest design rests on the framework established in the seminal work by Myerson~\cite{myerson1981optimal} on revenue optimal auction design. DiPalantino and Vojnovic~\cite{dipalantino2009crowdsourcing} and Chawla et al.~\cite{chawla2019optimal} connect crowdsourcing contests with all-pay auctions. For general all-pay contests with linear cost functions, the optimal contest for total output has been studied by Vojnovic~\cite{vojnovic2015contest} and the optimal contest for maximum individual output has been considered by Chawla et al.~\cite{chawla2019optimal}. 
The optimal contests for both these objectives have a structure similar to Myerson's optimal auction: they allocate the total budget to the player with the highest output above a reserve output level for regular distributions (and the highest ironed output for irregular distributions).

Mechanisms that are similar to the binary threshold objective have been discussed in previous works~\citep{taylor1995digging,halac2017contests,loury1979market}. For example, Taylor~\cite{taylor1995digging} mentions ``in a research tournament, the terminal date is fixed, and the quality of innovations varies, while in an innovation race, the quality standard is fixed, and the date of discovery is variable.'' In the innovation race~\citep{loury1979market,taylor1995digging}, the objective is to get at least one very good outcome, and a winning criterion that corresponds to a binary threshold (with ties broken in favor of the player who first reaches this threshold) is a proposed mechanism. In our work, the binary threshold is the objective (and not the mechanism), and our aim is to find the optimal mechanism to maximize this objective. Also, even if we interpret the innovation race's mechanism as an objective, it would be a specific instance of the binary threshold objective where the designer's utility is $1$ if there is at least one successful player and $0$ otherwise (rather than an arbitrary increasing function of the number of successful players).

To the best of our knowledge, there has not been any work on maximizing the linear threshold objectives studied in this paper. In addition to maximizing the total output (e.g.,~\cite{glazer1988optimal,moldovanu2001optimal,moldovanu2006contest,minor2011increasing}) and the maximum individual output (e.g.,~\cite{chawla2019optimal,moldovanu2001optimal,taylor1995digging,ales2017optimal,mihm2019sourcing}), other objectives that have been investigated include maximizing the cumulative output from the top $k$ agents (e.g.,~\cite{archak2009optimal,gavious2014revenue}).

On the technical side, our work on rank-order allocation builds upon the equilibrium characterization of Moldovanu and Sela~\cite{moldovanu2001optimal}. As in the prior work, the \textit{single-crossing} property (Definition~\ref{def:singleCrossing}) and properties of order statistics are useful for the characterization of the optimal rank-order allocation contest. For general all-pay contests, we build upon the work on revenue optimal auction design by Myerson~\cite{myerson1981optimal} and on the implementability of auctions by Matthews~\cite{matthews1984implementability}. Matthews~\cite{matthews1984implementability} characterizes which interim allocation functions can be implemented by some allocation function, and therefore allows us to focus on interim allocation functions instead of allocation functions. Previous works in general contest design, such as the work of Chawla et al.~\cite{chawla2019optimal}, did not require the result of Matthews~\cite{matthews1984implementability} because, unlike in our model, their objective functions were linear in the interim allocation.

There have also been several studies in the complete information settings (e.g.,~\cite{baye1996all,barut1998symmetric}). We point the readers to the book by Vojnovic~\cite{vojnovic2015contest}, particularly Chapter~3, for a survey on related topics.

\section{Notation and Preliminaries}\label{sec:prelim}
For an integer $k \in \bZ_{\ge 0}$, let $[k] = \{1,2, \ldots, k\}$.

There are $n$ players. Let $\bv = (v_1, v_2, \ldots, v_n)$ be the ability profile of the players, where the values $v_i$ are drawn independently from a continuous and differentiable distribution $F$ with support $[0,1]$. Let $f$ be the probability density function (PDF) of $F$. The $n$ players simultaneously produce outputs $\bb = (b_1, b_2, \ldots, b_n) \in \bR^n_+$. Player $i$ has a cost of $b_i/v_i$ for producing output $b_i$; we shall use a scaled version of this cost for convenience, as given in \eqref{eq:rankPlayerUtility}. 
For a function $g(x)$ that is not one-to-one, let $g^{-1}(y)$ denote the minimum value of $x$ such that $g(x) \ge y$, for $y$ in the range of $g$.

\subsection{Contests with a Rank-Order Allocation of Prizes}
The contest has $n$ prizes $\bw = (w_1, w_2, \ldots, w_n)$, where $0 \le w_{j+1} \le w_j \le 1$ for $j \in [n-1]$. For the unit-sum model, we additionally require that $\sum_j w_j \le 1$. The prize $w_1$ is awarded to the highest-performing player, $w_2$ to the second highest, and so on; a player receives one of the prizes based on the rank of their outputs, with ties broken uniformly. 
Fix an output vector $\bb = (b_i)_{i \in [n]}$ and suppose that $b_{i_1} \ge b_{i_2} \ge \ldots \ge b_{i_n}$. Then the utility of player $i$ (scaled up by $v_i$ for convenience) is given by:
\begin{equation}\label{eq:rankPlayerUtility}
   u(v_i, \bb) = v_i \sum_{j \in [n]} w_j \frac{ \b1\{b_i = b_{i_j}\} }{| \{ k \mid b_k = b_{i_j} \} |}  - b_i,
\end{equation}
where $\b1$ is the indicator random variable. To interpret this formula, observe that
$\frac{ \b1\{b_i = b_{i_j}\} }{| \{ k \mid b_k = b_{i_j} \} |}$ is the probability that player $i$ receives the $j$-th prize, whereas
$\frac{b_i}{v_i}$ is the cost of producing output $b_i$ for player $i$.

Let $p_j(v)$ be the probability that a value $v \in [0, 1]$ is the $j$th highest among $n$ i.i.d. samples from $F$, given by the expression:
\begin{equation}
    p_j(v) = \binom{n-1}{j-1} F(v)^{n-j}(1-F(v))^{j-1}.
\end{equation}
Let $f_{n,j}$ be the PDF of the $j$-th highest order statistic out of $n$ i.i.d. samples from $F$, given by the expression:
\begin{equation}
    f_{n,j}(v) = \frac{n!}{(j-1)!(n-j)!}F(v)^{n-j}(1-F(v))^{j-1}f(v).
\end{equation}
A key role in the symmetric Bayes--Nash equilibrium of the contest is played by the order statistics of the abilities of the players.

Moldovanu and Sela~\cite{moldovanu2001optimal} characterize the unique symmetric Bayes--Nash equilibrium in rank-order allocation contests. Chawla and Hartline~\cite{chawla2013auctions} prove the uniqueness of this equilibrium in general (see also \cite{chawla2019optimal} for more details).

\begin{theorem}\citep{moldovanu2001optimal,chawla2013auctions}\label{thm:beta}
Consider the game that models the rank-order allocation contest with the values of placement prizes $w_1 \ge w_2 \ge \ldots \ge w_n \ge 0$. The unique Bayes--Nash equilibrium is given by
\begin{equation}\label{eq:rankOutput}
    \beta(v) = \sum_{j \in [n]} w_j \int_0^v t p'_j(t) dt,
\end{equation}
where $\beta(v)$ is the output generated by a player with ability $v$. 
\end{theorem}
\noindent Note that $\beta$ depends on the prize vector $\bw$, but we are suppressing it to keep the notation cleaner.
\begin{definition}[Simple Contest]
A rank-order allocation contest is called {\em simple} if there exists a $j\in [n]$ such that it gives a positive prize of equal value to the first $j$ players and $0$ to the other $n-j$ players.
\end{definition}

\begin{definition}[Single-Crossing]\cite{moldovanu2006contest}, \cite[Chapter 3]{vojnovic2015contest}\label{def:singleCrossing}
A function $f : [a,b] \rightarrow \bR$ is {\em single-crossing with respect to a function $g : [a,b] \rightarrow \bR$} if there exists a point $x^* \in [a,b]$ such that $f(x) \le g(x)$ for all $x \le x^*$ and $f(x) > g(x)$ for all $x > x^*$; when this is the case, we will also say that $f$ is single-crossing with respect to $g$ at $x^*$.
\end{definition}




\subsection{General All-Pay Contests}
We utilize the connection between contests and all-pay auctions made by \citep{dipalantino2009crowdsourcing,chawla2019optimal}. Leveraging the \textit{revelation principle}~\citep{myerson1981optimal}, for most of the analysis of general all-pay contests, we shall restrict our attention to direct revelation mechanisms and optimize over allocation rules $\bx(\bv) = (x_i(\bv))_{i \in [n]}$ that determine the allocation based on the abilities of the players $\bv = (v_i)_{i \in [n]}$. We interpret $x_i(\bv)$ as the expected value of the prize obtained by player $i$ given the ability profile $\bv$, where the expectation is taken over the choices of the tie-breaking mechanism. We recognize that, when running a contest, we do not have direct access to players' abilities; rather, the players must produce outputs, and the allocation function must be based upon the observed outputs. After deriving the optimal allocation function based on the players' abilities, we shall convert it to an allocation function based on the players' outputs.

For both unit-range and unit-sum settings, we have the restriction $0 \le x_i(\bv) \le 1$ for all $i\in [n]$. For unit-sum, we additionally have $\sum_i x_i(\bv) \le 1$. The unit-range case is comparatively easier, because $x_i(\bv)$ can be optimized independently for every player $i$, whereas for unit-sum, for two players $j$ and $\ell$, $x_j(\bv)$ and $x_\ell(\bv)$ are not independent as we need to satisfy the $\sum_i x_i(\bv) \le 1$ constraint. 
While studying unit-sum contests, we shall by default assume that $n \ge 2$.

We assume that the allocation rule $\bx(\bv)$ is symmetric with respect to the players. As $\bx(\bv)$ is symmetric, we have $\bE[x_i(\bv) \mid v_i = v] = \bE[x_j(\bv) \mid v_j = v]$ for any $i,j \in [n]$. Let the interim (or expected) allocation function be $\xi(v) = \bE[x_i(\bv) \mid v_i = v]$. Using Myerson's  characterization of allocation rules that allow a Bayes--Nash equilibrium, we conclude that $\xi(v)$ should be non-negative and non-decreasing in $v$, and in the equilibrium, the output of a player as a function of her ability is given by:
\begin{equation}\label{eq:optOutput}
    \beta(v) = v \xi(v) - \int_0^v \xi(t)dt.
\end{equation}
We slightly abuse the notation by representing the output function as $\beta$ for both rank-order allocation contests and general all-pay contests.

Let us make a few crucial observations about $\bx(\bv)$ and $\xi$. Both our objective functions, binary and linear threshold~(formally defined in Definitions~\ref{def:bt} and \ref{def:lt}, respectively), depend upon the output function $\beta(v)$, which further depends upon the interim allocation function $\xi(v)$, but not directly on the allocation function $\bx(\bv)$. So, any allocation function $\bx(\bv)$ that leads to the same interim allocation function $\xi(v)$ leads to the same objective value. 

Observe that for any interim allocation function $\xi$ that is induced by an allocation function with unit-sum constraints, the following condition holds (check \cite{matthews1984implementability} for more details):
\begin{equation}\label{eq:optAvgAlloc1}
    \int_V^1 \xi(v) f(v) dv \le \frac{1-F(V)^n}{n}.
\end{equation}
In plain words, inequality~\eqref{eq:optAvgAlloc1} says that the probability that any player with an ability above $V$ gets a prize is at most the probability that any player has an ability above $V$. Now, we state a result in \cite{matthews1984implementability} that we shall use in the remainder of the paper.

\begin{theorem}\citep{matthews1984implementability}\label{thm:matthews}
    Any non-decreasing interim allocation function $\xi$ that satisfies inequality~(\ref{eq:optAvgAlloc1}) is implementable by some allocation function $\bx$ that satisfies unit-sum constraints.
\end{theorem}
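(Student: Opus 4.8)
The statement is a \emph{reduced-form implementability} result, so my plan is to prove the sufficiency direction only, since the preceding discussion already establishes \eqref{eq:optAvgAlloc1} as a necessary condition. Concretely, given a non-decreasing $\xi$ satisfying \eqref{eq:optAvgAlloc1}, I must exhibit a symmetric allocation rule $\bx(\bv)$ with $x_i(\bv)\ge 0$ and $\sum_i x_i(\bv)\le 1$ whose interim expectation $\bE[x_i(\bv)\mid v_i=v]$ equals $\xi(v)$. I would organize the argument around the characterization of implementable reduced forms (Border's theorem): first recast \eqref{eq:optAvgAlloc1} as the full family of feasibility inequalities indexed by \emph{arbitrary} type sets, and then prove that this family is sufficient for the existence of $\bx$ via a max-flow/min-cut argument.

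The first step is the reduction over type sets. A reduced form can be feasible only if, for every measurable $S\subseteq[0,1]$, the aggregate interim allocation to types in $S$ does not exceed the probability that some player's ability lands in $S$; writing $P(S)=\int_S f(v)\,dv$, this reads $n\int_S \xi(v)f(v)\,dv \le 1-(1-P(S))^n$. Because $\xi$ is non-decreasing, among all sets $S$ of a fixed mass $P(S)=\mu$ the left-hand side is maximized by the upper set $S=(V,1]$ with $F(V)=1-\mu$ (a rearrangement/bathtub argument), while the right-hand side depends on $S$ only through $\mu$. Hence the entire family collapses to the sets $(V,1]$, for which $1-(1-P(S))^n = 1-F(V)^n$, recovering exactly \eqref{eq:optAvgAlloc1}. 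So the hypothesis is equivalent to all the set-indexed inequalities holding simultaneously.

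The second and main step is to show that these inequalities suffice. I would discretize $[0,1]$ into buckets $I_1,\dots,I_m$ and build a transportation network: a profile node for each $\sigma\in\{1,\dots,m\}^n$ with supply equal to its probability, a demand node for each bucket $I_k$ requiring $n\int_{I_k}\xi f$, and an edge from $\sigma$ to $I_k$ exactly when some coordinate of $\sigma$ lies in $I_k$ (these are the only profiles that can allocate to a player of that type). By max-flow/min-cut (equivalently, a deficiency version of Hall's condition), a flow meeting all demands exists iff for every set of buckets the total demand is at most the total probability of the profiles touching those buckets; taking the union of the buckets to be $S$, this is precisely $n\int_S\xi f \le 1-(1-P(S))^n$, which holds by the first step. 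A feasible flow then prescribes how much allocation each profile gives to each type-bucket; distributing this equally among the player(s) whose type lies in that bucket (generically a single player, since $F$ is continuous) yields an $\bx$ with per-profile mass at most $1$, and averaging over the $n!$ player permutations makes it symmetric without altering the interim expectations. Finally I would let $m\to\infty$ and extract a weak-$*$ limit of these allocation rules, verifying that the limit is feasible and has interim expectation exactly $\xi$.

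The main obstacle is the heart of the second step: showing that the continuum of cut constraints collapses to the single-parameter family \eqref{eq:optAvgAlloc1}. The monotonicity of $\xi$ is exactly what forces the binding cuts to be upper sets, so the rearrangement argument of the first step is doing the real work and must be made rigorous for general measurable $S$. A secondary difficulty is the analytic passage from the finite transportation problem back to the continuum: one must argue that the demands are met \emph{exactly} in the limit and that the constraints $x_i\ge 0$ and $\sum_i x_i\le 1$ survive it, which requires a compactness argument in the space of allocation rules rather than a purely combinatorial one.
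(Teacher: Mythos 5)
The paper does not prove this statement at all: it is imported verbatim from Matthews~\cite{matthews1984implementability} as a black box (the surrounding text only derives the necessity of inequality~\eqref{eq:optAvgAlloc1} and then invokes Matthews for sufficiency). So there is no in-paper proof to compare against, and your attempt should be judged as a standalone reconstruction of the cited result.

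As such a reconstruction, your outline follows the now-standard Border-style route (rearrangement to upper sets, then a network-flow/Hall argument on a discretization), which is a legitimate and essentially correct program; it is, however, not Matthews' original argument, which proceeds by a direct hierarchical construction of the allocation rule for the symmetric single-object case rather than via min-cut duality. Your first step is sound: since $\xi$ is non-decreasing, $\int_S \xi f$ over sets of fixed mass $\mu$ is maximized by the upper set, while $1-(1-\mu)^n$ depends only on $\mu$, so the single-parameter family~\eqref{eq:optAvgAlloc1} does imply the full set-indexed family. The genuine gap is in the second step, and it is twofold. First, the max-flow argument only produces an allocation whose interim expectation matches the \emph{bucket averages} of $n\int_{I_k}\xi f$, not $\xi(v)$ pointwise; recovering $\xi$ exactly requires the limit $m\to\infty$, and you do not specify the topology in which the allocation rules converge, why the limit remains a bona fide allocation rule satisfying $\sum_i x_i\le 1$ pointwise, or why the interim expectation passes to the limit (one needs something like weak-$*$ compactness of the set of feasible $\bx$ together with continuity of the interim-expectation operator, and then an identification of the limit's reduced form with $\xi$ a.e.). Second, in the flow network the cut condition must be verified for \emph{arbitrary} subsets of demand nodes, and your identification of the worst cut with a union of buckets is asserted rather than argued (it is true, because the neighborhood of a set of buckets depends only on their union, but this should be said). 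Neither gap is fatal --- both are closable by standard arguments --- but as written the proof is an outline of a known theorem rather than a complete derivation, which is consistent with the paper's own decision to cite it rather than prove it.
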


Given this result, we can focus on finding a $\xi$ that is non-decreasing and satisfies inequality~\eqref{eq:optAvgAlloc1} without worrying about the unit-sum constraint $\sum_i x_i(\bv) \le 1$, because there will be some $\bx$ that implements $\xi$. Further, for the optimal $\xi$, as we shall see later, the optimal $\bx$ will be easy to derive.

In our study of general all-pay contests, we shall give special attention to regular distributions, defined below. This property of the distribution of the players' abilities allows for simpler and efficiently computable optimal contests (see \cite{myerson1981optimal} for their use in optimal auction design).

\begin{definition}[Regular Distributions]\label{def:regular}
A distribution $F$ is {\em regular} if $v(1-F(v))$ is concave with respect to $(1-F(v))$, or, equivalently, if the virtual ability $\psi(v) = v - \frac{1-F(v)}{f(v)}$ is non-decreasing in $v$.\footnote{In the auction theory literature, e.g., \cite{myerson1981optimal}, $v$ corresponds to the valuation and $\psi(v)$ to the virtual valuation, and, again, $F$ is regular if $\psi(v)$ is non-decreasing. Further, in auctions, $\psi(v)$ may be interpreted as the slope of the ``revenue curve'' at $v$, and more coarsely, the $v$ term in $\psi(v)$ as the maximum revenue obtainable and $\frac{1-F(v)}{f(v)}$ as the revenue loss due to not knowing $v$ in advance (a.k.a. ``information rent''). Similarly, in contests, $\psi(v)$ may be interpreted as the slope of the ``output curve'' at ability $v$.}
\end{definition}

\subsection{Objective Functions}
We formally define the binary threshold and linear threshold objective functions. They apply to both rank-order allocation contests and general all-pay contests. We use the same notation for the output function for both cases, $\beta$.

\begin{definition}[Binary Threshold Objective]\label{def:bt}
Under the binary threshold objective, the contest designer gets a utility of $\rho(k) \in \bR_{\ge 0}$ if there are $k \in \{0\} \cup [n]$ players with output equal to or above a specified threshold $B$. The expected utility of the designer is given by:
\begin{equation}
    BT = \bE_{\bv}[ \rho( \sum_{i \in [n]} \b1\{\beta(v_i) \ge B\} ) ].
\end{equation}
We assume that $\rho$ is non-decreasing, i.e., $\rho(k-1) \le \rho(k)$ for all $k \in [n]$, and normalize $\rho(0) = 0$.
\end{definition}
The next lemma, Lemma~\ref{thm:binaryEquiv}, proves that the optimal contest for the binary threshold objective (for both rank-order allocation and general all-pay contests) is the same for any non-decreasing $\rho$. Further, given Lemma~\ref{thm:binaryEquiv}, we can maximize $\bE_v[\b1\{\beta(v) \ge B\}]$ to get the optimal contest for the binary threshold objective.
\begin{lemma}\label{thm:binaryEquiv}
The contest that maximizes $\bE_v[\b1\{\beta(v) \ge B\}]$ maximizes the binary threshold objective $\bE_{\bv}[ \rho( \sum_{i \in [n]} \b1\{\beta(v_i) \ge B\} ) ]$ for any non-decreasing $\rho$.
\end{lemma}

\begin{definition}[Linear Threshold Objective]\label{def:lt}
Under the linear threshold objective, the contest designer's utility increases linearly with a player’s output if the player’s output is between a lower threshold of $B_L$ and an upper threshold of $B_H$. Formally, we define it as:
\begin{multline}
    LT = \bE_{\bv}[\sum_{i \in [n]} \max(0, \min(B_H,\beta(v_i))-B_L)] \\ 
    = n \bE_{v}[\max(0, \min(B_H,\beta(v))-B_L)] = n \bE_{v}[\max(B_L, \min(B_H,\beta(v)))] - nB_L \\
    \equiv \bE_{v}[\max(B_L, \min(B_H,\beta(v)))] = \int_0^1 \max(B_L, \min(B_H,\beta(v))) f(v) dv,
\end{multline}
where equivalence above denotes the fact that maximizing the left hand side is equivalent to maximizing the right hand side.
\end{definition}

Note that the thresholds---$B$ for the binary threshold objective and $B_L$ and $B_H$ for the linear threshold objective---are exogenous.

\section{Rank-Order Allocation of Prizes}\label{sec:rank}
In this section, we study contests that allocate prizes based on the players' ranks. We first present some useful properties of the equilibrium output function $\beta$ given in Theorem~\ref{thm:beta}. Then we study the two objective functions based on these properties.

From Theorem~\ref{thm:beta}, we have
\begin{equation*}
    \beta(v) = \sum_{j \in [n]} w_j \int_0^v t p'_j(t) dt.
\end{equation*}
Writing $p'_j(t)$ using order statistics:
\begin{align*}
    p'_j(t) = 
    \begin{cases}
    f_{n-1,1}(t), &\text{ if } j = 1 \\
    f_{n-1,j}(t) - f_{n-1,j-1}(t), &\text{ if } 1 < j < n \\
    -f_{n-1,n-1}(t), &\text{ if } j = n
    \end{cases}.
\end{align*}
Substituting $p'_j(t)$ into the formula for $\beta(v)$, we get
\begin{equation}\label{eq:beta}
    \beta(v) = \sum_{j \in [n-1]} (w_j - w_{j+1}) \int_0^v t f_{n-1,j}(t) dt.
\end{equation}
From equation~\eqref{eq:beta} we can observe that decreasing $w_n$ to $0$ does not decrease $\beta(v)$ for any $v$. Changing $\beta(v)$ so that it becomes greater or higher for every $v$ leads to an equal or higher utility for the designer for both binary and linear threshold objectives. So, from now on, we shall assume that $w_n = 0$.

Depending upon whether we are looking at the unit-range or the unit-sum constraint on prizes, we have different constraints on $\bw$. We now transform $\beta(v)$ further to make it more convenient to work with.

\paragraph{Unit-Sum} We have the constraints $\sum_j w_j \le 1$ and $w_j \ge w_{j+1} \ge 0$. Let $\alpha_j = j (w_j - w_{j+1})$. The set of constraints on $\balpha = (\alpha_j)_{j \in [n-1]}$ that are equivalent to the constraints on $\bw$ are: $\alpha_j \ge 0$ for all $j \in [n-1]$ and $\sum_{j \in [n-1]} \alpha_j \le 1$. Let $\beta_S$ denote the output function $\beta$ with unit-sum constraints. We can rewrite equation~\eqref{eq:beta} as
\begin{equation}\label{eq:betaS}
    \beta_S(v) = \sum_{j \in [n-1]} \alpha_j \frac{1}{j} \int_0^v t f_{n-1,j}(t) dt = \sum_{j \in [n-1]} \alpha_j \beta_{Sj}(v),
\end{equation}
where $\beta_{Sj}(v) := \frac{1}{j} \int_0^v t f_{n-1,j}(t)dt$. Observe that the simple contest that awards a prize of $1/j$ to the first $j$ players and $0$ to others has $\alpha_j = 1$ and $\alpha_k = 0$ for $k \neq j$. Moreover, this contest induces an output of $\beta_{Sj}(v)$ from a player with ability $v$. Thus, any rank-order prize structure can be written as a convex combination of these $(n-1)$ simple contests where the first $j$ players get awarded $1/j$, for $j \in [n-1]$. 

\paragraph{Unit-Range} We have the constraints $1 \ge w_j \ge w_{j+1} \ge 0$ for all $j \in [n-1]$. In this case, let $\alpha_j = w_j - w_{j+1}$. We have the same set of  constraints on $\balpha$ as with unit-sum: $\alpha_j \ge 0$ for all $j$ and $\sum_{j \in [n-1]} \alpha_j \le 1$. However, we have a slightly different formula for $\beta$, which we denote by $\beta_R$:
\begin{equation}\label{eq:betaR}
    \beta_R(v) = \sum_{j \in [n-1]} \alpha_j \int_0^v t f_{n-1,j}(t) dt = \sum_{j \in [n-1]} \alpha_j \beta_{Rj}(v),
\end{equation}
where $\beta_{Rj}(v) := \int_0^v t f_{n-1,j}(t) dt$. Thus, similarly to the unit-sum case, any unit-range contest and the respective $\beta_R(v)$ can be written as a convex combination of $(n-1)$ simple unit-range contests $\beta_{Rj}$, $j \in [n-1]$. However, the unit-range contest that induces $\beta_{Rj}$ awards a prize of $1$ to the first $j$ players and $0$ to others, whereas the unit-sum contest $\beta_{Sj}$ awards a prize of $1/j$ to the first $j$ players.

Using the characterization of $\beta$ in equation~\eqref{eq:betaS} for unit-sum and~\eqref{eq:betaR} for unit-range, we can easily prove that the \textit{total output} objective is maximized by $\beta_{S1}$ for unit-sum contests (proved by \cite{glazer1988optimal,moldovanu2001optimal}) and by a simple contest for unit-range contests (the proof is provided in Appendix~\ref{sec:app:totalOutput}). 

Most of our analysis in this section applies to both unit-range and unit-sum settings; we shall use $\beta$ to denote either $\beta_S$ or $\beta_R$, and $\beta_j$ to denote either $\beta_{Sj}$ or $\beta_{Rj}$. Also, we shall assume without loss of generality that $\sum_j \alpha_j = 1$, because increasing $\alpha_i$ for some $i \in [n-1]$ while keeping $\alpha_j$ constant for all $j \in [n-1] \setminus \{i\}$ does not decrease $\beta(v)$ for any $v \in [0,1]$, and therefore does not decrease either of our two objective functions.

\begin{theorem}\label{thm:singleCrossing5}
Fix an $\balpha$ and the corresponding output function $\beta$. Consider a pair of indices $j,k$ s.t. $1 \le j < k \le n-1$, and $\epsilon > 0$. 
Suppose both the vector $\balpha$ and the vector $\balpha'$ given by $\alpha'_j = \alpha_j + \epsilon$, $\alpha'_k = \alpha_k - \epsilon$, $\alpha'_\ell = \alpha_\ell$ for $\ell \notin \{j,k\}$ satisfy the required constraints. Let $\beta'$ be the output function that corresponds to $\balpha'$. Then, $\beta'$ is single-crossing w.r.t. $\beta$, and $\beta^{-1}$ is single-crossing w.r.t. $\beta'^{-1}$.
\end{theorem}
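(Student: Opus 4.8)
The plan is to reduce both claims to the sign behaviour of a single difference function. Since $\beta$ and $\beta'$ differ only in coordinates $j$ and $k$, subtracting \eqref{eq:betaS} (or \eqref{eq:betaR}) gives $\beta'(v)-\beta(v)=\epsilon\bigl(\beta_j(v)-\beta_k(v)\bigr)$, so, because $\epsilon>0$, showing that $\beta'$ is single-crossing w.r.t. $\beta$ (Definition~\ref{def:singleCrossing}) amounts to showing that $D(v):=\beta_j(v)-\beta_k(v)$ is nonpositive for small $v$ and strictly positive for large $v$, with a single sign change. I would handle both normalisations uniformly by writing $\beta_\ell(v)=c_\ell\int_0^v t\,f_{n-1,\ell}(t)\,dt$, where $c_\ell=1$ in the unit-range case and $c_\ell=1/\ell$ in the unit-sum case.

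First I would differentiate: $D'(v)=v\bigl(c_j f_{n-1,j}(v)-c_k f_{n-1,k}(v)\bigr)$, so for $v>0$ the sign of $D'(v)$ is the sign of $c_j f_{n-1,j}(v)-c_k f_{n-1,k}(v)$. Using the explicit order-statistic densities, the ratio $f_{n-1,j}(v)/f_{n-1,k}(v)$ equals a positive constant times $\bigl(F(v)/(1-F(v))\bigr)^{k-j}$; since $j<k$ and $F$ is strictly increasing, this ratio is strictly increasing in $v$ (a monotone-likelihood-ratio property of order statistics), and the positive constant factor $c_j/c_k$ does not affect monotonicity. Hence $D'$ changes sign exactly once, from negative to positive, at some $v_0\in(0,1)$. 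Combined with the boundary data $D(0)=0$ and $D(1)>0$ --- the latter because $D(1)=c_j\bE[V_{(j)}]-c_k\bE[V_{(k)}]>0$, where $V_{(\ell)}$ denotes the $\ell$th highest of $n-1$ i.i.d.\ draws from $F$, using both $c_j\ge c_k$ and the stochastic dominance $V_{(j)}\succeq V_{(k)}$ for $j<k$ --- this forces $D$ to start at $0$, decrease to a negative minimum at $v_0$, and then increase to a positive value, so $D$ has a unique zero $v^*\in(v_0,1)$ with $D\le 0$ on $[0,v^*]$ and $D>0$ on $(v^*,1]$. This establishes the first claim.

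To pass to the inverses, I would use that under $\sum_j\alpha_j=1$ both $\beta$ and $\beta'$ are strictly increasing on $(0,1)$, so $\beta^{-1},\beta'^{-1}$ are the ordinary inverses, and $\beta(v^*)=\beta'(v^*)=:y^*$. For $y<y^*$, set $v=\beta^{-1}(y)$; then $v<v^*$, so $\beta'(v)\le\beta(v)=y$, and monotonicity of $\beta'$ gives $\beta'^{-1}(y)\ge v=\beta^{-1}(y)$. The symmetric argument for $y>y^*$ gives $\beta'^{-1}(y)<\beta^{-1}(y)$. Thus $\beta^{-1}(y)\le\beta'^{-1}(y)$ for $y\le y^*$ and $\beta^{-1}(y)>\beta'^{-1}(y)$ for $y>y^*$, which is exactly single-crossing of $\beta^{-1}$ w.r.t. $\beta'^{-1}$ at $y^*$.

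I expect the only delicate step to be the middle one: establishing that $D$ has a unique zero crossing of the correct orientation. The monotone-likelihood-ratio computation pins down the sign pattern of $D'$, but one must combine it carefully with the endpoint values $D(0)=0$ and $D(1)>0$ to rule out a second crossing and to fix the direction (from $\le 0$ to $>0$). Once $v^*$ is identified, the inverse statement follows by the elementary monotonicity argument above.
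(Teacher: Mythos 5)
Your proof is correct and follows essentially the same route as the paper's: the key step in both is that the likelihood ratio $f_{n-1,j}(v)/f_{n-1,k}(v)$ is a positive constant times $\bigl(F(v)/(1-F(v))\bigr)^{k-j}$ and hence monotone, which makes $v\,c_j f_{n-1,j}(v)-v\,c_k f_{n-1,k}(v)$ single-crossing; integrating yields the claim for $\beta'$ versus $\beta$ (the paper's Lemma~\ref{thm:singleCrossing1}), and your elementary inversion argument is exactly the paper's Lemma~\ref{thm:singleCrossing2}. Your endpoint computation $D(1)>0$ via stochastic dominance of order statistics is a correct but inessential addition, since the single-crossing property does not require that a crossing actually occur.
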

The proof of Theorem~\ref{thm:singleCrossing5} for unit-sum is available in \cite[Chapter 3]{vojnovic2015contest}; in Appendix~\ref{sec:app:rank}, for completeness, we provide a similar proof for both unit-sum and unit-range settings.

\subsection{Binary Threshold Objective}
We first focus on the binary threshold objective (Definition~\ref{def:bt}; Lemma~\ref{thm:binaryEquiv}): $\bE_v[\b1\{\beta(v) \ge B\}]$. 

\begin{theorem}\label{thm:rankBinary}
The rank-order allocation contest that optimizes the binary threshold objective is simple, and the output function for the optimal contest is $\beta_{j^*}$, where $j^*$ is selected from the set
\[ \argmin_j \beta_j^{-1}(B). \]
\end{theorem}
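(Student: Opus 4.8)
The plan is to collapse the objective to a single real number and then exploit the fact that the output function $\beta=\sum_{j\in[n-1]}\alpha_j\beta_j$ is \emph{linear} in the weight vector $\balpha$. First I would note that each $\beta_j$ is continuous and non-decreasing on $[0,1]$ with $\beta_j(0)=0$ (the integrand $t f_{n-1,j}(t)$ is non-negative), so any convex combination $\beta$ shares these properties. Hence $\{v:\beta(v)\ge B\}$ is an interval $[\beta^{-1}(B),1]$ whenever $B$ lies in the range of $\beta$, and the binary threshold objective becomes
\begin{equation*}
\bE_v[\b1\{\beta(v)\ge B\}]=\int_{\beta^{-1}(B)}^{1} f(v)\,dv = 1-F\bigl(\beta^{-1}(B)\bigr).
\end{equation*}
Since $F$ is increasing, maximizing this expression over all feasible $\balpha$ is equivalent to \emph{minimizing} the threshold ability $\beta^{-1}(B)$.

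The crux is to evaluate $\min_{\balpha}\beta^{-1}(B)$, where $\balpha$ ranges over the simplex $\{\balpha:\alpha_j\ge 0,\ \sum_j\alpha_j=1\}$. I would rewrite this nested minimization (first over $v$ inside $\beta^{-1}$, then over $\balpha$) as a single minimization over the region where \emph{some} feasible contest clears the bar at $v$:
\begin{equation*}
\min_{\balpha}\beta^{-1}(B)=\min\bigl\{\,v:\ \exists\,\balpha\ \text{with}\ {\textstyle\sum_j}\alpha_j\beta_j(v)\ge B\,\bigr\}.
\end{equation*}
For a fixed $v$ the map $\balpha\mapsto\sum_j\alpha_j\beta_j(v)$ is linear, so its maximum over the simplex is attained at a vertex and equals $\max_j\beta_j(v)$. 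Consequently a feasible $\balpha$ clearing $B$ at $v$ exists if and only if $\max_j\beta_j(v)\ge B$, which holds exactly when $v\ge\min_j\beta_j^{-1}(B)$. This gives $\min_{\balpha}\beta^{-1}(B)=\min_j\beta_j^{-1}(B)$, and the bound is realized by the pure contest placing all weight on any $j^*\in\argmin_j\beta_j^{-1}(B)$, i.e.\ by $\beta_{j^*}$.

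To make the two directions airtight I would argue as follows. For the inequality $\ge$, continuity gives $\beta\bigl(\beta^{-1}(B)\bigr)=B$, so $\max_j\beta_j\bigl(\beta^{-1}(B)\bigr)\ge\sum_j\alpha_j\beta_j\bigl(\beta^{-1}(B)\bigr)=B$, placing $\beta^{-1}(B)$ in the feasible region above; taking the minimum over $\balpha$ then yields $\min_{\balpha}\beta^{-1}(B)\ge\min_j\beta_j^{-1}(B)$. For $\le$, the vertex $\balpha=e_{j^*}$ exhibits a simple contest attaining the value. Because the optimum is a vertex of the simplex, the optimal contest is simple with output function $\beta_{j^*}$, as claimed.

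The main obstacle is setting up the interchange of the two minimizations together with the linearity/vertex observation correctly; once that is in place the remainder is bookkeeping. I would also dispatch the degenerate case separately: if $B>\beta_j(1)$ for every $j$, then $\max_j\beta_j(v)<B$ for all $v$, no player ever reaches the threshold under any contest, every contest (in particular a simple one) achieves the optimal value $0$, and $\beta_j^{-1}(B)$ is read as $+\infty$. I note that this route does not invoke the single-crossing machinery of Theorem~\ref{thm:singleCrossing5}: an alternative proof would start from an arbitrary mixture and repeatedly shift weight between two indices, using single-crossing to show each shift moves $\beta^{-1}(B)$ monotonically and hence that an extreme point is optimal, but the direct computation above is shorter.
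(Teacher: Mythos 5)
Your proposal is correct and follows essentially the same route as the paper: reduce the objective to $1-F(\beta^{-1}(B))$, observe that a convex combination $\sum_j\alpha_j\beta_j(v)$ cannot exceed $\max_j\beta_j(v)$, and conclude $\min_j\beta_j^{-1}(B)\le\beta^{-1}(B)$ with equality attained at a vertex of the simplex. The paper's version is terser (and omits the degenerate case $B>\beta_j(1)$ for all $j$, which you handle), but the key idea is identical and, as you note, neither argument needs the single-crossing machinery.
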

Given Theorem~\ref{thm:rankBinary}, we can design the optimal contest by first finding the root of the equation $\beta_j(v) - B = 0$ for each $j \in [n-1]$. We can do this efficiently using a root-finding algorithm such as the bisection method because $\beta_j$ is continuous and monotone. Then, we select a $j$ with the smallest $\beta_j^{-1}(B)$.




\subsection{Linear Threshold Objective}\label{sec:rankLinear}
Next, we consider the linear threshold objective:
$\bE [\max(B_L, \min(B_H,\beta(v))) ]$ (Definition~\ref{def:lt}). For the binary threshold objective, the optimal contest was simple, but for the linear threshold this is not true in general. The following example illustrates this:


\begin{example}\label{ex:rank1}
Consider a contest with: 
unit-sum prizes; 
three players, $n=3$; 
uniform distribution, $F(v) = v$, $f(v) = 1$; 
lower threshold $B_L = 0.01$; 
upper threshold $B_H = 0.15$. 
The output function is $\beta(v) = \alpha_1 \beta_1(v) + \alpha_2 \beta_2(v)$ where 
$\alpha_1 + \alpha_2 = 1$ and 
$\beta_1(v) = \int_0^v t f_{2,1}(t) dt = \int_0^v 2 t^2 dt = \frac{2}{3} v^3$ and 
$\beta_2(v) = \frac12 \int_0^v t f_{2,2}(t) dt = \int_0^v t(1-t) dt = \frac{v^2}{2} - \frac{v^3}{3}$. 
We consider three contests: the two simple contests and a mixed one.
\begin{itemize}
    \item Simple Contest 1: $\alpha_1 = 1$ and $\beta(v) = \beta_1(v)$. 
    We have $\beta_1^{-1}(B_L) \approx 0.2467$ and $\beta_1^{-1}(B_H) \approx 0.6082$.
    The objective value is $\approx 0.08342$.
    
    \item Simple Contest 2: $\alpha_2 = 1$ and $\beta(v) = \beta_2(v)$. 
    We have $\beta_1^{-1}(B_L) \approx 0.1490$ and $\beta_1^{-1}(B_H) \approx 0.8042$.
    The objective value is $\approx 0.08218$.

    \item Mixed Contest: $\alpha_1 = \alpha_2 = 1/2$ and $\beta(v) = \beta_1(v)/2 + \beta_2(v)/2$.
    We have $\beta_1^{-1}(B_L) \approx 0.1885$ and $\beta_1^{-1}(B_H) \approx 0.6474$.
    The objective value is $\approx 0.08409$.
    
\end{itemize}
We observe that the given mixed contest outperforms the two simple contests.
\end{example}

For the case where there is only an upper threshold, i.e., $B_L = 0$, there is an optimal contest that is a convex combination of only two simple contests.  

\begin{theorem}\label{thm:rankLinearUp}
For a linear threshold objective with upper threshold only, i.e., with $B_L = 0$, there is an optimal $\balpha$ with at most two positive entries $\alpha_i$ and $\alpha_j$, i.e., with $\alpha_k=0$ for $k \in [n-1] \setminus \{i,j\}$. For this $\balpha$, $i$ and $j$, we also have:
\[ \int_0^{V_H} \beta_i(v) f(v) dv = \int_0^{V_H} \beta_j(v) f(v) dv = \int_0^{V_H} \beta(v) f(v) dv,\]
where $V_H = \beta^{-1}(B_H)$ and $\beta$ is the output function induced by $\balpha$.
\end{theorem}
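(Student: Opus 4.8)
The plan is to first rewrite the objective so that its only nonlinearity is concentrated in the threshold-crossing point, and then to collapse the support of an optimal $\balpha$ by a linear-programming argument performed on a level set of that crossing point. Since $\beta_j'(v) = v\,f_{n-1,j}(v) > 0$ on $(0,1)$, each $\beta_j$, and hence $\beta = \sum_j \alpha_j \beta_j$, is strictly increasing, so $V_H = \beta^{-1}(B_H)$ is the unique point where $\beta$ crosses $B_H$: for $v < V_H$ we have $\beta(v) < B_H$, and for $v \ge V_H$ we have $\min(B_H,\beta(v)) = B_H$. With $B_L = 0$ the objective therefore becomes $\Phi(\balpha) = \int_0^{V_H}\beta(v) f(v)\,dv + B_H\bigl(1 - F(V_H)\bigr)$, where $V_H = V_H(\balpha)$. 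The simplex $\{\alpha_j \ge 0,\ \sum_j \alpha_j = 1\}$ is compact and $\Phi$ is continuous, so a maximizer $\balpha^*$ exists; let $V^* = V_H(\balpha^*)$, which I may assume lies in $(0,1)$ (if $\beta(1) < B_H$, or $V^* = 1$, the objective is affine in $\balpha$ and is maximized at a single simple contest, a degenerate case I would dispatch separately).

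The key idea, and the step I expect to be the crux, is to restrict attention to the level set $P = \{\balpha : \sum_j \alpha_j = 1,\ \alpha_j \ge 0,\ \sum_j \alpha_j \beta_j(V^*) = B_H\}$. Every $\balpha \in P$ satisfies $\beta(V^*) = B_H$, so by strict monotonicity $V_H(\balpha) = V^*$ is \emph{constant} on $P$; consequently $\Phi$ collapses on $P$ to the affine function $\Phi(\balpha) = \sum_j \alpha_j c_j + B_H(1 - F(V^*))$, where $c_j := \int_0^{V^*}\beta_j(v) f(v)\,dv$. Since $\balpha^* \in P$ and $\balpha^*$ maximizes $\Phi$ over the entire simplex, it maximizes this affine function over the polytope $P$. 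But $P$ is cut out by exactly two linear equalities together with nonnegativity, so each of its vertices has at most two positive coordinates, and an affine function attains its maximum over a bounded polytope at a vertex. Selecting such an optimal vertex $\balpha^{**}$ yields a globally optimal $\balpha$ with at most two positive entries $\alpha_i,\alpha_j$ and with $V_H(\balpha^{**}) = V^*$. The whole force of the argument is that, although $\Phi$ is genuinely nonlinear in $\balpha$ through $V_H$, freezing $V_H$ linearizes it, and the constraint $\beta(V^*) = B_H$ supplies precisely the second hyperplane needed to force two-sparsity.

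It remains to establish the stated integral equalities at $\balpha^{**}$ (assuming it has two positive entries; with a single entry the equality $\int_0^{V_H}\beta_i f = \int_0^{V_H}\beta f$ is immediate). Here I would compute the directional derivative of $\Phi$ along a feasible direction $\bm{d}$ with $\sum_j d_j = 0$. Writing $\beta_{\bm d} = \sum_j d_j \beta_j$ and differentiating the identity $\beta(V_H) = B_H$ gives $V_H'(0) = -\beta_{\bm d}(V^*)/\beta'(V^*)$, which is well defined because $\beta'(V^*) > 0$; substituting into $\Phi$ and using $\beta(V^*) = B_H$ produces an envelope-type cancellation of the boundary terms, leaving the clean expression $\tfrac{d}{d\epsilon}\Phi = \sum_j d_j c_j$. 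Applying this with the direction that adds $\epsilon$ to coordinate $i$ and subtracts $\epsilon$ from coordinate $j$ (feasible in both signs since $\alpha_i^{**},\alpha_j^{**} > 0$) and invoking optimality of $\balpha^{**}$ forces $c_i = c_j$; and since $\int_0^{V^*}\beta f = \alpha_i^{**} c_i + \alpha_j^{**} c_j = c_i$ (as $\alpha_i^{**}+\alpha_j^{**}=1$), we obtain $c_i = c_j = \int_0^{V^*}\beta(v) f(v)\,dv$, which is exactly the claimed identity with $V_H = V^*$. The only delicate point is the differentiability of $V_H$, guaranteed by $\beta'(V^*)>0$; the remaining manipulations are routine.
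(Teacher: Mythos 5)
Your proof is correct and follows essentially the same route as the paper's: the first-order condition along directions $e_i - e_j$ (with the envelope cancellation at $V_H$ coming from $\beta(V_H)=B_H$) yields $\int_0^{V_H}\beta_i f\,dv = \int_0^{V_H}\beta_j f\,dv$, and the observation that freezing $V_H$ linearizes the objective reduces sparsification to a corner-point argument. The only difference is organizational: the paper first derives the integral equalities at the optimum and then sparsifies by an explicit two-point construction, whereas you sparsify first via the vertex-of-polytope argument (the very technique the paper deploys for the more general Theorem~\ref{thm:rankLinear}) and then read off the integral equalities from the first-order condition at the sparse optimum.
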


Theorem~\ref{thm:rankLinearUp}  suggests an algorithm for finding the optimal $\balpha$, sketched below:
\begin{enumerate}
    \item \sloppy For every $i,j \in [n-1]$, $i < j$, find a $V_{ij} > 0$ (if any) such that $\int_0^{V_{ij}} \beta_i(v) f(v) dv = \int_0^{V_{ij}} \beta_j(v) f(v) dv$. Note that there might be multiple such values for $V_{ij}$, but these values form an interval of $[0,1]$ because $\int_0^{v} \beta_i(v) f(v) dv$ is single-crossing w.r.t. $\int_0^{v} \beta_j(v) f(v) dv$. Select any one of those values as $V_{ij}$. 
    \item If $V_{ij}$ exists and $\beta_i(V_{ij}) \ge B_H \ge \beta_j(V_{ij})$ then this pair $i,j$ is a candidate for being the optimal.
    \item The objective value for this pair is $\int_0^{V_{ij}} \beta_i(v) f(v) dv + B_H(1-F(V_{ij}))$. 
    \item Comparing $O(n^2)$ such pairs along with the $O(n)$ simple contests, we find the optimal contest.
    \item $\balpha$ corresponding to pair $i,j$ can be calculated as $\alpha_i = \frac{B_H - \beta_j(V_{ij})}{\beta_i(V_{ij}) - \beta_j(V_{ij})}$, $\alpha_j = 1 - \alpha_i$, and $\alpha_k = 0$ for $k \in [n-1] \setminus \{i,j\}$. (Check the proof of Theorem~\ref{thm:rankLinearUp} for more details about this step.)
\end{enumerate}

We can prove a result analogous to Theorem~\ref{thm:rankLinearUp} if we only have a lower threshold and no upper threshold (i.e., $B_H = 1$): there is an optimal contest that is a convex combination of at most two simple contests. Now, we give a result that applies for arbitrary thresholds.

\begin{theorem}\label{thm:rankLinear}
For a linear threshold objective, there is an optimal $\balpha$ with at most three positive entries $\alpha_i$, $\alpha_j$, and $\alpha_k$, i.e., with $\alpha_\ell=0$ for $\ell \in [n-1] \setminus \{i,j,k\}$. For this $\balpha$ and $i,j,k$, we also have:
\[ \int_{V_L}^{V_H} \beta_i(v) f(v) dv = \int_{V_L}^{V_H} \beta_j(v) f(v) dv = \int_{V_L}^{V_H} \beta_k(v) f(v) dv = \int_{V_L}^{V_H} \beta(v) f(v) dv, \]
where $V_L = \beta^{-1}(B_L)$, $V_H = \beta^{-1}(B_H)$, and $\beta$ is the output function induced by $\balpha$.
\end{theorem}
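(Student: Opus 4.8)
The plan is to first rewrite the objective in closed form, then extract first‑order optimality conditions through a careful envelope‑type computation, and finally invoke a linear‑programming extreme‑point argument to cap the number of active contests at three. Since every $\beta_j$ is increasing, so is $\beta=\sum_j\alpha_j\beta_j$; writing $V_L=\beta^{-1}(B_L)$ and $V_H=\beta^{-1}(B_H)$ and splitting $[0,1]$ at these points, the objective of Definition~\ref{def:lt} becomes
$$G(\balpha)=B_L\,F(V_L)+\int_{V_L}^{V_H}\beta(v)f(v)\,dv+B_H\bigl(1-F(V_H)\bigr).$$
An optimal $\balpha$ exists because $G$ is continuous on the compact simplex $\{\alpha_j\ge 0,\ \sum_j\alpha_j=1\}$. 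I fix such an optimum, with induced thresholds $V_L^*,V_H^*$; in the generic case $0<V_L^*<V_H^*<1$ we have $\beta(V_L^*)=B_L$ and $\beta(V_H^*)=B_H$ by continuity.

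First I would compute the directional derivative of $G$ along any feasible direction $\bm{d}=(d_j)$ with $\sum_j d_j=0$, which perturbs $\beta$ by $t\sum_j d_j\beta_j$ and moves both $V_L$ and $V_H$. The key observation is that the boundary terms produced by the motion of the thresholds cancel: the $V_H$‑term from differentiating the middle integral is $B_H f(V_H)\,\dot V_H$, exactly offsetting the derivative of $B_H(1-F(V_H))$, and likewise at $V_L$, precisely because $\beta(V_L)=B_L$ and $\beta(V_H)=B_H$. Hence $\left.\tfrac{d}{dt}G\right|_{0}=\sum_j d_j\,c_j$, where $c_j:=\int_{V_L}^{V_H}\beta_j(v)f(v)\,dv$. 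Taking $\bm{d}=e_i-e_j$ for two active indices $i,j$ (those with $\alpha_i,\alpha_j>0$) and using optimality in both directions forces $c_i=c_j=:c^*$; taking $\bm{d}=e_i-e_j$ with $\alpha_i=0$ forces $c_i\le c^*$. Since $\sum_j\alpha_j=1$ we also obtain $\int_{V_L}^{V_H}\beta f=\sum_j\alpha_j c_j=c^*$, which already establishes all the claimed equalities at \emph{any} optimum.

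To cap the support at three I would linearize. With $V_L^*,V_H^*$ frozen the constant terms of $G$ are fixed, so maximizing $G$ over those $\balpha$ that reproduce these thresholds is the linear program: maximize $\sum_j\alpha_j c_j$ subject to $\sum_j\alpha_j=1$, $\sum_j\alpha_j\beta_j(V_L^*)=B_L$, $\sum_j\alpha_j\beta_j(V_H^*)=B_H$, and $\alpha_j\ge 0$. The chosen optimum is feasible here, and since any convex combination of the $\beta_j$ is automatically increasing, any feasible point with $\beta(V_L^*)=B_L$ and $\beta(V_H^*)=B_H$ really does have $V_L^*,V_H^*$ as its true thresholds; thus the LP optimum coincides with the global optimum of $G$. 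As the LP has only three equality constraints, it admits an optimal basic feasible solution $\balpha'$ with at most three positive coordinates, and being globally optimal this $\balpha'$ inherits the equalities of the previous paragraph, completing the proof.

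The main obstacle is that $V_L$ and $V_H$ themselves depend on $\balpha$, so $G$ is neither linear nor obviously concave; the whole argument hinges on the envelope cancellation that removes the $\dot V_L,\dot V_H$ terms and exposes the hidden linear structure. Making this rigorous requires differentiability of $V_L,V_H$ in $t$ (guaranteed on the interior of the support, where $f>0$ and $\beta$ is strictly increasing) and a separate, easy treatment of the degenerate cases where a threshold fails to bind — e.g. $B_L=0$ gives $V_L=0$, the constraint $\sum_j\alpha_j\beta_j(V_L^*)=B_L$ becomes vacuous, the LP drops to two constraints, and we recover the bound of at most two active contests in Theorem~\ref{thm:rankLinearUp}. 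One could instead replace the LP step by iterating the single‑crossing exchange of Theorem~\ref{thm:singleCrossing5}, but the extreme‑point argument is cleaner and handles both the unit‑sum and unit‑range normalizations uniformly.
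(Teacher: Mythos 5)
Your proposal is correct and follows essentially the same route as the paper's proof: the first-order/envelope computation yielding $\int_{V_L}^{V_H}\beta_i f\,dv=\int_{V_L}^{V_H}\beta_j f\,dv$ for active indices (the paper does this via the pairwise exchange $\gamma=\alpha_i/(\alpha_i+\alpha_j)$, you via general feasible directions), followed by freezing $V_L,V_H$ and taking a basic feasible solution of the resulting linear program with three equality constraints. Your treatment is if anything slightly more explicit about the boundary-term cancellation and the degenerate cases, but the underlying argument is the same.
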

The main ingredients used in the proof of Theorem~\ref{thm:rankLinear} (and Theorem~\ref{thm:rankLinearUp}) are: first-order optimality condition of the objective w.r.t. $\balpha$; single-crossing property of $\beta_i$ w.r.t. $\beta_j$ for $i < j$; and the fact that every linear programming problem has an optimal solution that lies at a corner of the feasible region.

Recall that for the case $B_L=0$, we used Theorem~\ref{thm:rankLinearUp} to obtain an algorithm that compares at most $O(n^2)$ contests to find an optimal one. In a similar spirit, for general $B_L$ and $B_H$, we can use Theorem~\ref{thm:rankLinear} to obtain an algorithm that finds an optimal contest by comparing at most $O(n^3)$ contests.

Example~\ref{ex:rank2} provides the optimal rank-order allocation contest for a instance given in Example~\ref{ex:rank1}.
\begin{example}\label{ex:rank2}
We have the same scenario as Example~\ref{ex:rank1}:
unit-sum prizes; 
three players, $n=3$; 
uniform distribution, $F(v) = v$, $f(v) = 1$; 
lower threshold $B_L = 0.01$; 
upper threshold $B_H = 0.15$. 
The output function is $\beta(v) = \alpha_1 \beta_1(v) + \alpha_2 \beta_2(v)$ where $\alpha_1 + \alpha_2 = 1$ and $\beta_1(v) = \int_0^v t f_{2,1}(t) dt = \int_0^v 2 t^2 dt = \frac{2}{3} v^3$ and $\beta_2(v) = \frac12 \int_0^v t f_{2,2}(t) dt = \int_0^v t(1-t) dt = \frac{v^2}{2} - \frac{v^3}{3}$. 

The two simple contests and the mixed contest described in Example~\ref{ex:rank1} have objective values $0.08342$, $0.08218$, and $0.08409$, respectively. 
The optimal contest has $\alpha_1 \approx 0.43$ and $\alpha_2 = 1 - \alpha_1 \approx 0.57$, so $\beta(v) \approx 0.43\beta_1(v) + 0.57\beta_2(v)$, 
which gives $\beta_1^{-1}(B_L) \approx 0.1818$ and $\beta_1^{-1}(B_H) \approx 0.6561$.
And the objective value is $\approx 0.08411$.
\end{example}

\subsubsection{Simple vs. Optimal} 
In Theorem~\ref{thm:rankLinear}, we proved that a convex combination of at most three simple contests is optimal. We now compare this optimal contest with the best simple contest.

\begin{theorem}\label{thm:rankSimpleVsOptimal}
For the linear threshold objective, the objective value of the optimal contest is at most $2$ times that of the best simple contest.
\end{theorem}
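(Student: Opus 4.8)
The plan is to combine the three-contest structure from Theorem~\ref{thm:rankLinear} with a clean split of the optimal objective into an ``unsaturated'' part and a ``saturated'' part, and then to exhibit two (possibly distinct) simple contests, each of which alone recovers one of the two parts. Write $\mathrm{Obj}(\beta)=\int_0^1\max(0,\min(B_H,\beta(v))-B_L)f(v)\,dv$, so that the designer's objective is $n\cdot\mathrm{Obj}(\beta)$ and the factor $n$ cancels in any ratio. Let the optimal output function be $\beta=\alpha_i\beta_i+\alpha_j\beta_j+\alpha_k\beta_k$ with $i<j<k$ and $\alpha_i,\alpha_j,\alpha_k>0$ (the cases of one or two positive entries are easier and are handled the same way), and set $V_L=\beta^{-1}(B_L)$, $V_H=\beta^{-1}(B_H)$. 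Since $\beta$ is increasing with $\beta(v)\in[B_L,B_H]$ on $[V_L,V_H]$, the objective splits as $\mathrm{Obj}(\beta)=A+C$, where $A=\int_{V_L}^{V_H}(\beta(v)-B_L)f(v)\,dv$ is the contribution of the unsaturated range and $C=(B_H-B_L)(1-F(V_H))$ is the contribution of the players saturated at $B_H$. It therefore suffices to find one simple contest of value at least $C$ and one of value at least $A$, since the best simple contest then has value at least $\max(A,C)\ge\tfrac12(A+C)=\tfrac12\,\mathrm{Obj}(\beta)$.

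The crucial preliminary step is to locate the crossing points of the basis functions. By Theorem~\ref{thm:singleCrossing5}, applied with all weight moved from index $b$ to index $a$, for any $a<b$ the function $\beta_a$ is single-crossing with respect to $\beta_b$; thus $\beta_a\le\beta_b$ below a unique point $c_{ab}$ and $\beta_a>\beta_b$ above it. The key observation is that the equal-integral conditions of Theorem~\ref{thm:rankLinear}, namely $\int_{V_L}^{V_H}\beta_a f=\int_{V_L}^{V_H}\beta_b f$ for all $a,b\in\{i,j,k\}$, force every crossing point $c_{ab}$ to lie inside $[V_L,V_H]$: if $c_{ab}$ were below $V_L$ (respectively above $V_H$) then $\beta_a-\beta_b$ would keep a strict sign on $(V_L,V_H)$, so its integral against the positive density $f$ could not vanish.

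For the saturated part, since $c_{ij},c_{ik}\le V_H$, every $v\ge V_H$ satisfies $\beta_i(v)\ge\beta_j(v)$ and $\beta_i(v)\ge\beta_k(v)$, hence $\beta_i(v)\ge\beta(v)\ge B_H$. Thus the most top-heavy contest $\beta_i$ is saturated on $[V_H,1]$ and already contributes $(B_H-B_L)(1-F(V_H))=C$ there; as the integrand is non-negative elsewhere, $\mathrm{Obj}(\beta_i)\ge C$. For the unsaturated part, because $c_{ik},c_{jk}\le V_H$ we get $\beta_k(V_H)\le\beta_i(V_H),\beta_j(V_H)$, so $\beta_k(V_H)\le\beta(V_H)=B_H$; since $\beta_k$ is increasing, $\beta_k\le B_H$ on all of $[V_L,V_H]$, i.e.\ the least top-heavy contest $\beta_k$ is never capped there. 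Consequently $\max(0,\min(B_H,\beta_k(v))-B_L)\ge\beta_k(v)-B_L$ on $[V_L,V_H]$, and the equal-integral identity $\int_{V_L}^{V_H}\beta_k f=\int_{V_L}^{V_H}\beta f$ yields
\[ \mathrm{Obj}(\beta_k)\ge\int_{V_L}^{V_H}(\beta_k(v)-B_L)f(v)\,dv=\int_{V_L}^{V_H}(\beta(v)-B_L)f(v)\,dv=A. \]

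Combining the last two bounds, the best simple contest has value at least $\max(A,C)\ge\tfrac12(A+C)$, which is half the optimum, giving the factor-$2$ guarantee. The main work lies in the second paragraph: converting the equal-integral conditions of Theorem~\ref{thm:rankLinear} into the statement that all crossings sit in $[V_L,V_H]$, which is precisely what lets $\beta_i$ dominate on $[V_H,1]$ and $\beta_k$ remain unsaturated on $[V_L,V_H]$. I expect the only remaining subtleties to be the degenerate configurations (fewer than three active contests, or crossings falling exactly at $V_L$ or $V_H$), which are absorbed by keeping all the inequalities above non-strict.
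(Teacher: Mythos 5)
Your proof is correct and follows essentially the same route as the paper's: both use the equal-integral conditions of Theorem~\ref{thm:rankLinear} together with the single-crossing property to show that the steepest active simple contest $\beta_i$ alone covers the saturated part of the optimal objective while the flattest one $\beta_k$ alone covers the unsaturated part, so the optimum is at most the sum of two simple contests' values and the factor $2$ follows from $\max\ge$ average. The only cosmetic difference is that you localize the crossing points inside $[V_L,V_H]$ directly from the equal-integral conditions, whereas the paper obtains the same endpoint orderings via an exchange argument on a minimal-support optimal solution and then reads the bound off an area decomposition in a figure.
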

In the proof of Theorem~\ref{thm:rankSimpleVsOptimal}, we use the expression given in Theorem~\ref{thm:rankLinear} and the single-crossing property of $\beta_i$ w.r.t. $\beta_j$ for any $i < j$ to show that the objective value of the optimal contest is at most the sum of the objective values of two simple contests. So, one of these two simple contests gives us an objective value at least half of the optimal.


\section{General All-Pay Contests}\label{sec:opt}
In the previous section, we restricted our focus to contests that awarded prizes based on players' ranks only. In this section, we relax this restriction and consider contests that may use the numerical values of the players' outputs to award prizes. 

As we discussed in the preliminaries (Section~\ref{sec:prelim}), for the unit-range constraint, the allocation function must satisfy $0 \le x_i(\bv) \le 1$. Equivalently, the expected allocation function $\xi$ must satisfy $0 \le \xi(v) \le 1$. For the unit-sum constraint, in addition to the constraints for the unit-range case, the allocation function must also satisfy $\sum_i x_i(\bv) \le 1$. Equivalently (by Theorem~\ref{thm:matthews}), the expected allocation function $\xi$ must satisfy inequality~\eqref{eq:optAvgAlloc1}: $\int_V^1 \xi(v) f(v) dv \le \frac{1-F(V)^n}{n}$. From Myerson's characterization of allocation functions that allow a Bayes--Nash equilibrium, we know that $\xi$ must be monotonically non-decreasing.

The following allocation rules award the entire prize to players with abilities above $V$, if there are such players in a given ability profile. Therefore, they maximize $\int_V^1 \xi(v) f(v) dv$, and the inequality~\eqref{eq:optAvgAlloc1} is satisfied with an equality. 
\begin{enumerate}
    \item Give the prize to the player with the highest ability. Then the expected allocation function is $\xi(v) = F(v)^{n-1}$, and $\int_V^1 \xi(v) f(v) dv = \int_V^1 F(v)^{n-1} f(v) dv = \int_{F(V)}^1 y^{n-1} dy = \frac{1-F(V)^n}{n}$. We can also observe that for this allocation rule, inequality~\eqref{eq:optAvgAlloc1} is tight for every $V \in [0,1]$.
    \item \sloppy Uniformly distribute the prize among the players with abilities above $V$. Then the expected allocation function is $\xi(v) = \frac{1 - F(V)^n}{n(1 - F(V))}$, and $\int_V^1 \xi(v) f(v) dv = \int_V^1 \frac{1 - F(V)^n}{n(1 - F(V))} f(v) dv = \frac{1-F(V)^n}{n}$.
\end{enumerate}

\subsection{Binary Threshold Objective}
For the binary threshold objective (Definition~\ref{def:bt}; Lemma~\ref{thm:binaryEquiv}), we have: 
\begin{equation*}
    \max \bE[\b1\{\beta(v) \ge B\}] = \max \int_0^1 \b1\{\beta(v) \ge B\} f(v) dv \equiv  \min (\beta^{-1}(B)).
\end{equation*}
Thus, we want to find a $\xi$ that minimizes $\beta^{-1}(B)$.

\begin{theorem}\label{thm:optBinary}
The optimal general all-pay contest with the binary threshold objective gives a prize of $1$ to all players who produce an output above the threshold $B$ in the unit-range model and equally distributes the total prize of $1$ to all players who produce an output above the threshold $B$ in the unit-sum model.
\end{theorem}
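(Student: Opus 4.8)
The plan is to use the reduction already recorded in the excerpt, namely that maximizing $\bE[\b1\{\beta(v)\ge B\}]$ is equivalent to minimizing $\beta^{-1}(B)$ (valid because $\xi$ non-decreasing forces $\beta$ non-decreasing). So the entire problem becomes: among all feasible expected allocation functions $\xi$, find the one whose induced output $\beta$ reaches the level $B$ at the smallest ability. The first step is to rewrite the output function so this is transparent. Integrating by parts in \eqref{eq:optOutput} gives the Stieltjes representation $\beta(V)=\int_0^V t\,d\xi(t)$, and since the integral can place mass only on $(0,V]$, this immediately yields the pointwise bound $\beta(V)\le V\,(\xi(V)-\xi(0))\le V\,\xi(V)$.

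Next I would bound $\xi(V)$ in each model to obtain an upper envelope $g(V)$ on the best achievable output at ability $V$. In the unit-range model $\xi(V)\le 1$ gives $\beta(V)\le V=:g(V)$. In the unit-sum model I would combine monotonicity of $\xi$ with Matthews' inequality \eqref{eq:optAvgAlloc1} evaluated at $V$: since $\xi(v)\ge\xi(V)$ for $v\ge V$, we get $\xi(V)(1-F(V))\le\int_V^1\xi(v) f(v)\,dv\le\frac{1-F(V)^n}{n}$, hence $\xi(V)\le\frac{1-F(V)^n}{n(1-F(V))}=\frac{1}{n}\sum_{i=0}^{n-1}F(V)^i$ and therefore $\beta(V)\le g(V):=\frac{V(1-F(V)^n)}{n(1-F(V))}$. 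A short calculation shows $g$ is continuous and strictly increasing with $g(0)=0$, so there is a unique $V^*$ with $g(V^*)=B$ (with $V^*=B$ in the unit-range case).

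Then I would exhibit the candidate and show it meets the envelope. Take $\xi$ to be the step function that jumps at $V^*$ to the value $g(V^*)/V^*$; that is, $\xi\equiv 1$ above $V^*$ in the unit-range model, and $\xi\equiv\frac{1-F(V^*)^n}{n(1-F(V^*))}$ above $V^*$ in the unit-sum model. These are exactly the two allocation rules displayed just before the theorem (reward-the-top / uniform-above-reserve), so both are feasible, non-decreasing, and saturate Matthews' bound; the jump at $V^*$ contributes $V^*\cdot\xi(V^*)=g(V^*)=B$, giving $\beta(v)=B$ for $v\ge V^*$ and $\beta(v)=0$ for $v<V^*$, so that $\beta^{-1}(B)=V^*$. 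Optimality then follows from the envelope: for any feasible $\xi$ and any $V<V^*$ we have $\beta(V)\le g(V)<g(V^*)=B$, so no contest can cross $B$ before $V^*$, i.e. $\beta^{-1}(B)\ge V^*$. Hence the step contest is optimal.

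Finally I would translate the ability-based $\xi$ into the output-based description in the statement: under the candidate, every player with ability at least $V^*$ produces output exactly $B$ and every other player produces $0$, so the rule ``give prize $1$ to each output $\ge B$'' (unit-range), respectively ``split the unit prize equally among outputs $\ge B$'' (unit-sum), induces precisely this $\xi$, and a one-line incentive check confirms that a player with ability $v$ prefers to produce $B$ iff $v\ge V^*$. The main obstacle is conceptual rather than computational: the objective cares only about the \emph{first} crossing of $B$, so maximizing $\beta(V)$ at a single point is not obviously the right target; this is resolved by the strict monotonicity of the envelope $g$, which guarantees that the $\xi$ maximizing $\beta$ at $V^*$ simultaneously keeps $\beta$ below $B$ everywhere to its left. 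The only delicate computation is checking that the reserve step function makes \eqref{eq:optAvgAlloc1} tight, which is exactly the content of the two allocation rules recalled before the theorem.
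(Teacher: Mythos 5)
Your proposal is correct and follows essentially the same route as the paper's proof: both reduce the problem to minimizing $\beta^{-1}(B)$, bound $\beta(V)\le V\xi(V)$ and then $\xi(V)\le 1$ (unit-range) or $\xi(V)\le\frac{1-F(V)^n}{n(1-F(V))}$ via Matthews' inequality plus monotonicity (unit-sum), and exhibit the reserve step contest that attains the bound. The only difference is organizational — you derive the pointwise envelope $g(V)$ directly for every feasible $\xi$, whereas the paper first normalizes $\xi$ to a step function via two w.l.o.g. lemmas; the substance is the same.
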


\subsection{Linear Threshold Objective}
The linear threshold objective is $\bE[\max(B_L,\min(B_H,\beta(v)))]$ (Definition~\ref{def:lt}).


We focus on the case when $F$ is \textit{regular}; the extension to irregular distributions is given in Appendix~\ref{sec:app:irregular}. We shall see that the optimal contest resembles a highest bidder wins all-pay auction with a \textit{reserve bid} (if every player bids below this, no one gets the prize) and a \textit{saturation bid} (all bids above this level are considered equal), and the ties are broken uniformly. We shall also give an efficient way to compute the reserve and the saturation bids.
 
We can, w.l.o.g., make the following assumptions on the optimal expected allocation function:
\begin{lemma}\label{thm:optLinear1}
If $\xi$ is optimal, we can assume w.l.o.g. that $\xi(v) = 0$ for $v < \beta^{-1}(B_L)$.
\end{lemma}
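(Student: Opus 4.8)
The plan is to take any optimal $\xi$, set $V_L=\beta^{-1}(B_L)$, and replace $\xi$ by the allocation $\tilde\xi$ that vanishes on $[0,V_L)$ and coincides with $\xi$ on $[V_L,1]$. I will show that $\tilde\xi$ is feasible and that its objective value is no smaller than that of $\xi$; optimality of $\xi$ then forces $\tilde\xi$ to be optimal as well, and by construction $\tilde\xi$ has the desired form. The intuition is that a player with $\beta(v)<B_L$ contributes the constant $B_L$ to the objective no matter her exact output, so allocation handed to such low-ability players is ``wasted'' and can be removed; doing so frees up output that effectively boosts the high-ability players.

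First I would check feasibility in both models. The map $\tilde\xi$ is still non-decreasing (it is $0$ below $V_L$, jumps up to $\xi(V_L)\ge 0$ at $V_L$, and then follows the non-decreasing $\xi$), and $0\le\tilde\xi\le\xi\le 1$ gives the unit-range bounds. For the unit-sum constraint~\eqref{eq:optAvgAlloc1}, the tail integral is unchanged for $V\ge V_L$, while for $V<V_L$ I would bound
\[ \int_V^1 \tilde\xi(v)f(v)\,dv=\int_{V_L}^1 \xi(v)f(v)\,dv\le \frac{1-F(V_L)^n}{n}\le \frac{1-F(V)^n}{n},\]
using $F(V)\le F(V_L)$. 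The key computation is the effect on the output function: starting from $\beta(v)=v\xi(v)-\int_0^v\xi(t)\,dt$ one finds $\tilde\beta(v)=0$ for $v<V_L$ and $\tilde\beta(v)=\beta(v)+C$ for $v\ge V_L$, where $C=\int_0^{V_L}\xi(t)\,dt\ge 0$. Thus deleting the low-ability allocation shifts the entire high-ability branch of $\beta$ upward by the nonnegative constant $C$.

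To compare objectives I would split $[0,1]$ at $V_L$. For $v<V_L$ the original output satisfies $\beta(v)<B_L$, so the integrand $\max(B_L,\min(B_H,\beta(v)))$ equals $B_L$; the new output is $0\le B_L$, which also yields $B_L$, so the two allocations agree on $[0,V_L)$. For $v\ge V_L$ we have $\tilde\beta(v)\ge\beta(v)$, and since $\max(B_L,\min(B_H,\cdot))$ is non-decreasing, the integrand does not decrease. Hence the objective of $\tilde\xi$ is at least that of $\xi$, as claimed. Finally, since $\tilde\beta(v)=0<B_L$ for $v<V_L$ (the case $B_L=0$ being vacuous, as then $\beta^{-1}(0)=0$), we get $\tilde\beta^{-1}(B_L)=V_L$, so $\tilde\xi$ indeed vanishes below its own $\tilde\beta^{-1}(B_L)$. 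The step I expect to need the most care is exactly this last point together with the unit-sum check: I must confirm that the self-referential threshold $\beta^{-1}(B_L)$ is not moved leftward by the edit, so that the ``w.l.o.g.'' statement is internally consistent; everything else reduces to the monotonicity of the clipping function and the nonnegativity of the shift $C$.
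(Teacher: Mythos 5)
Your proposal is correct and follows essentially the same route the paper uses for this type of statement (compare the proof of the analogous Lemma~\ref{thm:optBinarySum1}): zeroing out $\xi$ below $V_L=\beta^{-1}(B_L)$ shifts $\beta$ up by the nonnegative constant $\int_0^{V_L}\xi(t)\,dt$ on $[V_L,1]$, leaves the clipped contribution equal to $B_L$ on $[0,V_L)$, and preserves monotonicity and the constraint~\eqref{eq:optAvgAlloc1}. Your additional checks (feasibility under unit-sum and the self-consistency of the threshold after the edit) are valid and only make the argument more complete.
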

\begin{lemma}\label{thm:optLinear2}
If $\xi$ is optimal, we can assume w.l.o.g. that $\xi(v) = \xi(V_H)$ for $v \ge \beta^{-1}(B_H)$, i.e., $\xi(v)$ is constant for $v \ge \beta^{-1}(B_H)$.
\end{lemma}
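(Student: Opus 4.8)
The plan is to prove the lemma by a ``flattening'' construction: given an optimal $\xi$, I replace it beyond $V_H := \beta^{-1}(B_H)$ with the constant value $\xi(V_H)$, obtaining the candidate $\xi^*(v) = \xi(v)$ for $v < V_H$ and $\xi^*(v) = \xi(V_H)$ for $v \ge V_H$, and then argue that $\xi^*$ is feasible and attains the same objective value. Since $\xi^*$ is constant on $[V_H, 1]$ by construction, establishing this equivalence proves the w.l.o.g.\ claim.

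First I would check feasibility. Non-negativity and monotonicity of $\xi^*$ are immediate, since on $[V_H, 1]$ it is pinned at $\xi(V_H)$, which by monotonicity of $\xi$ dominates every earlier value while being dominated by no later one. In the unit-range model, $\xi^* \le 1$ follows because its values form a subset of those of $\xi$. For the unit-sum model I would use that $\xi^* = \xi$ on $[0, V_H)$ and $\xi^* \le \xi$ pointwise on $[V_H, 1]$ (monotonicity again), so that $\int_V^1 \xi^*(v) f(v)\,dv \le \int_V^1 \xi(v) f(v)\,dv$ for every threshold $V$; inequality~\eqref{eq:optAvgAlloc1} for $\xi^*$ then follows from the same inequality for $\xi$, since reducing the allocation only relaxes that constraint.

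Next I would show the objective is unchanged by tracking $\beta^*(v) = v\xi^*(v) - \int_0^v \xi^*(t)\,dt$. The key computation is that $\beta^*$ agrees with $\beta$ on $[0, V_H]$ (the integrands coincide there) and is constant at the level $\beta(V_H)$ on $[V_H, 1]$ (the linear-in-$v$ term exactly cancels the added integral). Consequently $\beta^*(v) = \beta(v) < B_H$ for $v < V_H$, so the capped integrand $\max(B_L, \min(B_H, \cdot))$ is identical under the two allocations; while for $v \ge V_H$ both $\beta(v)$ and $\beta^*(v)$ lie at or above $B_H$, so every such player contributes exactly $B_H$ in both cases. Hence the linear threshold objective is preserved, and as a byproduct $\beta^{*-1}(B_H) = V_H$ is itself unchanged under $\xi^*$, keeping the construction self-consistent.

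The one place calling for care, which I would flag as the main (albeit minor) subtlety, is that the inverse convention $V_H = \min\{v : \beta(v) \ge B_H\}$ permits $\beta$ to jump strictly above $B_H$ at $V_H$, so I should not assume $\beta(V_H) = B_H$. This is harmless: all I actually use is $\beta(V_H) \ge B_H$, which already guarantees that the flattened output stays in the saturated region and that the relevant contributions remain capped at $B_H$. With this observation the construction goes through verbatim for both the unit-range and the unit-sum settings.
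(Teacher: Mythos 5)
Your argument is correct, and it is essentially the flattening construction the paper has in mind: the paper does not spell out a proof of this lemma, but its analogue for the binary objective (Lemma~\ref{thm:optBinarySum2}) replaces $\xi$ on $[V_H,1]$ by the conditional average $\frac{1}{1-F(V_H)}\int_{V_H}^1 \xi(t)f(t)\,dt$, whereas you truncate to $\xi(V_H)$. The only substantive difference is which constant you pick: your truncation weakly decreases $\xi$ pointwise, which makes feasibility (monotonicity, the unit-range bound, and inequality~\eqref{eq:optAvgAlloc1}) immediate, while the paper's averaging preserves the mass $\int_{V_H}^1 \xi(t)f(t)\,dt$, which is the quantity the later derivation of the explicit formula~\eqref{eq:expAllocationPartial} pins down (via Lemma~\ref{thm:optLinearRegRight}); either choice suffices for the w.l.o.g.\ claim, since in both cases $\beta$ is unchanged on $[0,V_H]$ and saturates at a value at least $B_H$ beyond it. Your handling of the inverse convention (using only $\beta(V_H)\ge B_H$ rather than equality) is the right observation and matches the paper's definition of $g^{-1}$.
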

From now on, we shall assume that $\xi$ satisfies the assumptions formulated in these two lemmas. 

We can write our linear threshold objective, using the notation $V_L = \beta^{-1}(B_L)$ and $V_H = \beta^{-1}(B_H)$, as:
\begin{equation}\label{eq:optLinearObj}
     B_L F(V_L) + \int_{V_L}^{V_H} \beta(v) f(v) dv + B_H (1-F(V_H)).
\end{equation}
 
Intuitively, the following lemma says that we should push the area under the curve $\xi$ to the right, as much as possible. We prove this result for unit-sum; for unit-range it holds, as a corollary, if we set $n=1$.
\begin{lemma}\label{thm:optLinearRegRight}
Let $F$ be a regular distribution and suppose that the allocation function has to satisfy unit-sum constraints. Then there is an optimal $\xi$ such that $\int_v^1 \xi(t) f(t) dt = \frac{1}{n}(1-F(v)^n)$ for all $v \in [0,1]$ where $\beta(v) < B_H$ and $\xi(v) > 0$.
\end{lemma}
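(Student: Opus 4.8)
The plan is to recast the optimisation over $\xi$ as a problem that is \emph{linear} in $\xi$ and then read off the tightness of~\eqref{eq:optAvgAlloc1} as a complementary-slackness property obtained by a rightward exchange. Throughout I assume $\xi$ already satisfies the conclusions of Lemmas~\ref{thm:optLinear1} and~\ref{thm:optLinear2}, so that $\xi(v)=0$ for $v<V_L$ and $\xi$ is constant for $v\ge V_H$, where $V_L=\beta^{-1}(B_L)$ and $V_H=\beta^{-1}(B_H)$, and the objective is expression~\eqref{eq:optLinearObj}. The crucial observation is that the only term of~\eqref{eq:optLinearObj} sensitive to the shape of $\xi$ on the active region is $\int_{V_L}^{V_H}\beta(v)f(v)\,dv$: changes to $\beta$ where $\beta\notin(B_L,B_H)$ are clipped away by the $\max(B_L,\min(B_H,\cdot))$ of Definition~\ref{def:lt}.

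The key step is to make this term explicit in $\xi$. Substituting $\beta(v)=v\xi(v)-\int_0^v\xi(t)\,dt$, using $\xi=0$ below $V_L$ to replace $\int_0^v\xi$ by $\int_{V_L}^v\xi$, and integrating by parts once (with the antiderivative $F(v)-F(V_H)$ of $f$ that vanishes at $V_H$), one obtains
\[ \int_{V_L}^{V_H}\beta(v)f(v)\,dv=\int_{V_L}^{V_H}\xi(v)\,\lambda(v)\,dv,\qquad \lambda(v)=vf(v)+F(v)-F(V_H)=f(v)\psi(v)+\bigl(1-F(V_H)\bigr), \]
where $\psi$ is the virtual ability of Definition~\ref{def:regular}. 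Thus, for fixed $V_L,V_H$, the objective is linear in $\xi$ with gain density $\lambda$, and since $1-F(V_H)>0$ and $\psi$ is non-decreasing under regularity, the coefficient of $\xi$ is positive near the top of the active region (indeed $\lambda(V_H)=V_Hf(V_H)>0$). Hence the objective is improved by making $\xi$ as large as feasibility allows there. The only restriction on how large $\xi$ can be is the unit-sum constraint~\eqref{eq:optAvgAlloc1}, and pushing $\xi$ up until that constraint saturates is exactly the assertion $\int_v^1\xi(t)f(t)\,dt=\tfrac1n(1-F(v)^n)$, which on differentiation is the highest-ability allocation $\xi(v)=F(v)^{n-1}$. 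This is the precise sense in which one ``pushes the area under $\xi$ to the right''.

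To make this rigorous I argue by exchange on an optimal $\xi$. If~\eqref{eq:optAvgAlloc1} were slack on a positive-measure subset of $\{v:\beta(v)<B_H,\ \xi(v)>0\}$, I would shift allocation toward higher abilities — raising $\xi$ into the available slack and, if necessary, lowering it at smaller abilities so as to preserve monotonicity of $\xi$ and all the constraints~\eqref{eq:optAvgAlloc1} at once. For a perturbation $\eta=\delta\xi$ supported on $[V_L,1]$ the first-order change in the objective is $\int_{V_L}^{V_H}\eta(v)\lambda(v)\,dv$; choosing the shift to move mass toward larger values of $\lambda$ makes this non-negative, so the shift keeps $\xi$ optimal while moving it strictly closer to the tight configuration. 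Iterating (or passing to a maximally rightward optimum, e.g.\ one maximising $\beta(1)=\int_0^1 t\,d\xi(t)$ among all optima) then forces~\eqref{eq:optAvgAlloc1} to hold with equality wherever $\xi>0$ and $\beta<B_H$. Theorem~\ref{thm:matthews} is what justifies using the tail inequality~\eqref{eq:optAvgAlloc1} as the feasibility condition, and setting $n=1$ recovers the unit-range statement, where the constraint is merely $\xi\le1$ and tightness reads $\xi=1$.

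The main obstacle is the endogeneity of the thresholds: $V_L$ and $V_H$ are determined by $\xi$, appearing both as integration limits and, through $F(V_H)$, inside $\lambda$, so the ``linear program'' holds only for the self-consistent $V_L,V_H$ induced by the optimum, and the exchange must respect this fixed-point coupling. A second difficulty is that~\eqref{eq:optAvgAlloc1} is a whole family of constraints, one per threshold $v$, so the beneficial perturbation cannot be selected pointwise but must remain feasible against all of them simultaneously; consequently it is the \emph{ordering} of the virtual density, not a single coefficient, that controls the sign. This is exactly where regularity of $F$ enters: monotonicity of $\psi$ guarantees that the relevant (ironed) virtual density is monotone, so that saturating~\eqref{eq:optAvgAlloc1} along the \emph{entire} active region — rather than only on part of it — is optimal. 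Regularity is indispensable here, and its failure is precisely what necessitates the ironing procedure of Theorem~\ref{thm:optLinearIr} in the irregular case.
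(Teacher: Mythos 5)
Your overall strategy---a rightward mass-transfer/exchange argument on $\xi$, with regularity of $\psi$ supplying the sign and iteration (or an extremal choice of optimum) forcing the tail constraint~\eqref{eq:optAvgAlloc1} to saturate---is the same as the paper's, and your integration-by-parts identity $\int_{V_L}^{V_H}\beta(v)f(v)\,dv=\int_{V_L}^{V_H}\xi(v)\lambda(v)\,dv$ with $\lambda(v)=vf(v)+F(v)-F(V_H)$ is correct and a clean way to expose the linear structure. However, the key sign computation has a genuine gap. First, a feasibility-preserving exchange must conserve $f$-weighted mass (that is what keeps the whole family of tail constraints~\eqref{eq:optAvgAlloc1} intact), so moving mass $\delta$ from $u$ to $v>u$ changes your linear functional by $\delta\bigl(\lambda(v)/f(v)-\lambda(u)/f(u)\bigr)=\delta\bigl(\psi_{V_H}(v)-\psi_{V_H}(u)\bigr)$ with $\psi_{V_H}(y)=y-\frac{F(V_H)-F(y)}{f(y)}$, not by a difference of $\lambda$-values; ``moving mass toward larger $\lambda$'' is therefore not the right criterion, and $\lambda$ itself need not be monotone under regularity (it equals $f\psi+(1-F(V_H))$, and multiplying a monotone $\psi$ by $f$ destroys monotonicity). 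Second, your premise that changes to $\beta$ where $\beta\notin(B_L,B_H)$ are ``clipped away'' fails on the upper side: a perturbation of $\xi$ inside $(V_L,V_H)$ shifts $\int_0^y\xi$ and hence changes $\beta(y)$ for all $y\ge v$, and by Lemma~\ref{thm:optLinearRegUp} one has $\beta\le B_H$ there, so a downward shift of $\beta$ on $[V_H,1]$ is \emph{not} absorbed by the $\min(B_H,\cdot)$ and does reduce the objective. The paper's proof tracks exactly this tail contribution, $\bigl(\tfrac{\delta}{f(u)}-\tfrac{\delta}{f(v)}\bigr)(1-F(v))$, and it is only after adding it to the middle-region terms that the clean quantity $\delta\bigl(\psi(v)-\psi(u)\bigr)$ emerges, whose non-negativity is literally Definition~\ref{def:regular}.

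Your argument is salvageable---$\psi_{V_H}$ is also non-decreasing under regularity (the paper proves this separately inside the proof of Theorem~\ref{thm:optLinearReg}), so the best-case ``fully clipped tail'' accounting also has the right sign---but as written you neither identify $\psi_{V_H}$ as the controlling quantity nor justify its monotonicity, and you do not reconcile the two accountings. Two further points are asserted rather than established: (i) that the perturbation can always be chosen to respect monotonicity of $\xi$ and \emph{all} the tail constraints simultaneously (the paper does this constructively, taking $u$ near the rightmost point where $\int_u^1\xi f=\frac{1-F(u)^n}{n}$ is tight below $v$, and $v$ near the rightmost slack point with $\beta(v)<B_H$, with $\delta,\epsilon$ small); and (ii) that a limit/extremal optimum with every such constraint tight on the active region actually exists. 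You should either carry out the paper's worst-case tail accounting or prove monotonicity of $\psi_{V_H}$ and justify the clipping claim before invoking it.
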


The previous lemma effectively says that for $v \in [V_L,V_H)$ we have 
\begin{equation*}
    \int_v^1 \xi(t) f(t) dt = \frac{1-F(v)^n}{n}.
\end{equation*}
As both sides of the above equation are continuous, taking the limit $v \rightarrow V_H$, we have:
\[\int_{V_H}^1 \xi(t) f(t) dt = \frac{1-F(V_H)^n}{n}.\]
We already know that $\xi$ is $0$ for $v < V_L$ and constant for $v \ge V_H$. So, we get 
\begin{equation}\label{eq:expAllocationPartial}
    \xi(v) = \begin{cases}
    0, & \text{ if $ v < V_L$} \\
    F(v)^{n-1}, & \text{ if $ V_L \le v < V_H$} \\
    \frac{1-F(V_H)^n}{n(1-F(V_H))}, & \text{ if $v \ge V_H$}
    \end{cases}
\end{equation}

We now prove the following lemma, which says that for an optimal allocation rule, the output generated by a player never goes strictly above the upper threshold $B_H$ almost surely in $v \in [0,1]$.

\begin{lemma}\label{thm:optLinearRegUp}
If $\xi$ is an optimal expected allocation function, then the induced output function $\beta$ almost surely satisfies $\beta(v) \le B_H$ for $v \in [0,1]$. This result holds whether or not the distribution $F$ is regular, and both for unit-range and for unit sum constraints.
\end{lemma}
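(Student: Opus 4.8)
The plan is to argue by contradiction: assume $\xi$ is optimal and yet $\beta(v) > B_H$ on a set of positive measure, and then construct a \emph{feasible} $\tilde\xi$ with strictly larger objective, contradicting optimality. By \eqref{eq:optOutput} and integration by parts, $\beta(v) = v\xi(v)-\int_0^v\xi(t)\,dt = \int_0^v t\,d\xi(t)$, which is non-decreasing in $v$ because its increments are driven by the non-decreasing $\xi$. Hence $\{v : \beta(v) > B_H\}$ is a top interval $(V_H,1]$ with $V_H=\beta^{-1}(B_H)<1$, and $\xi$ must actually increase on this interval (a flat $\xi$ there would keep $\beta$ equal to $B_H$). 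The objective \eqref{eq:optLinearObj} contributes exactly $B_H(1-F(V_H))$ from $(V_H,1]$ no matter how far $\beta$ overshoots $B_H$, so the part of $\xi$'s rise spent above $V_H$ is wasted; I will redeploy a sliver of it into $[V_L,V_H)$, where raising $\beta$ strictly increases $\int_{V_L}^{V_H}\beta f$.

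Concretely, I fix $t_{\mathrm{high}}$ in the support of $d\xi$ restricted to $(V_H,1]$ and a point $t_0\in[V_L,V_H)$ with $t_0>0$ and $F(t_0)<F(V_H)$ (such a $t_0$ exists since $V_H>0$; when $B_L=0$ I take $t_0$ small but positive). I then remove a small increment mass $\delta_1$ from $d\xi$ near $t_{\mathrm{high}}$ and add a mass $\delta_2$ at $t_0$. For unit-range I take $\delta_1=\delta_2=\delta$: monotonicity is preserved because the local rise of $\xi$ at $t_{\mathrm{high}}$ exceeds $\delta$, and the raised values on $[t_0,t_{\mathrm{high}})$ stay below $\xi(t_{\mathrm{high}})\le 1$. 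For unit-sum I instead set $\delta_2=\delta_1\frac{1-F(t_{\mathrm{high}})}{1-F(t_0)}<\delta_1$; a short computation then gives $\int_V^1\tilde\xi f=\int_V^1\xi f$ for $V\le t_0$ and $\int_V^1\tilde\xi f<\int_V^1\xi f$ for $V>t_0$, so \eqref{eq:optAvgAlloc1} continues to hold for every $V$ and Theorem~\ref{thm:matthews} guarantees $\tilde\xi$ is implementable.

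The effect on output is $\tilde\beta(v)=\beta(v)+\delta_2 t_0\,\mathbbm{1}[v>t_0]-\delta_1 t_{\mathrm{high}}\,\mathbbm{1}[v>t_{\mathrm{high}}]$. On $(t_0,V_H)$ we have $\tilde\beta=\beta+\delta_2 t_0$ with $\beta\in[B_L,B_H)$, so for $\delta_1$ small the clipped contribution rises by $\delta_2 t_0$ almost everywhere on this interval; on $(V_H,1]$ both $\beta$ and $\tilde\beta$ stay above $B_H$ (since $\beta>B_H$ strictly there and the net change $\delta_2 t_0-\delta_1 t_{\mathrm{high}}<0$ is small), so the clipped contribution $B_H$ is unchanged. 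The net change in \eqref{eq:optLinearObj} is therefore $\approx \delta_2 t_0\,(F(V_H)-F(t_0))>0$, contradicting optimality and forcing $\{\beta>B_H\}$ to have measure zero.

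The main obstacle is keeping feasibility while still extracting a strictly positive gain: the budget coupling \eqref{eq:optAvgAlloc1} in the unit-sum case forbids moving increments freely toward lower abilities, which is precisely why the balanced choice $\delta_2<\delta_1$ is needed and must be checked simultaneously for all $V$. A second, more delicate point is the degenerate case $V_L=V_H$, where $\beta$ jumps across the whole band $[B_L,B_H]$ and $[V_L,V_H)$ is empty; there I would instead transfer a \emph{non-infinitesimal} portion of the overshoot down to a point $t_0$ just below $V_H$, choosing the transferred mass large enough to push $\beta$ above $B_L$ on a positive-measure sliver below $V_H$ while keeping $\beta\ge B_H$ on $(V_H,1]$ (feasible because, taking $t_0\to V_H$, the available overshoot dominates the gap $B_L-\beta(V_H^-)$), which again yields a strict gain.
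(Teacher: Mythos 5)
Your variational route is genuinely different from the paper's. The paper proves this lemma in one shot: assuming overshoot, it replaces $\xi$ on an interval $[V_H-\delta,1]$ by its $f$-average, choosing $\delta$ by the intermediate value theorem so that the new output at $V_H-\delta$ equals exactly $B_H$; the averaged $\beta$ is then constant and equal to $B_H$ on the whole top interval, every player there still earns the cap, the contributors strictly expand, and feasibility is automatic because averaging only shifts expected allocation leftward (so inequality~\eqref{eq:optAvgAlloc1} is preserved at every $V$, not just in aggregate). Your two-point mass transfer with the budget-balancing ratio $\delta_2=\delta_1\frac{1-F(t_{\mathrm{high}})}{1-F(t_0)}$ is a correct alternative mechanism for unit-sum feasibility --- your check that $\int_V^1\tilde\xi f$ is unchanged for $V\le t_0$ and decreases for $V>t_0$ is right --- and it makes the first-order economics visible in a way that matches the style of Lemma~\ref{thm:optLinearRegRight}. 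What the paper's averaging buys is uniformity: it never needs to keep $\beta$ pointwise above $B_H$ after the perturbation (it deliberately collapses $\beta$ to exactly $B_H$ there), so it requires no case analysis at all, including your degenerate case $V_L=V_H$.

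There is, however, one concrete flaw in your argument. Your claim that ``$\xi$ must actually increase on $(V_H,1]$ (a flat $\xi$ there would keep $\beta$ equal to $B_H$)'' is false: if $\xi$ has a jump at $V_H$ and is flat afterwards, then $\beta$ is constant on $[V_H,1]$ at a value that can be \emph{strictly above} $B_H$ (flatness of $\xi$ pins $\beta$ at $\beta(V_H)$, not at $B_H$). In that configuration the support of $d\xi$ restricted to the open interval $(V_H,1]$ is empty, $t_{\mathrm{high}}$ does not exist, and your main construction stalls --- and when $B_L\le\beta(V_H^-)<B_H$ this is not covered by your $V_L=V_H$ paragraph either. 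The repair is the same device you already use in the degenerate case: take $t_{\mathrm{high}}=V_H$ and shave mass off the jump, which is safe because $\beta(V_H)>B_H$ strictly there. Relatedly, when $t_{\mathrm{high}}$ sits at the left edge of the region where $\beta$ strictly exceeds $B_H$, the margin $\beta-B_H$ can vanish just above $t_{\mathrm{high}}$, so ``the clipped contribution is unchanged on $(V_H,1]$'' needs a short second-order estimate (the loss there is $o(\delta_1)$ while your gain $\delta_2 t_0(F(V_H)-F(t_0))$ is $\Theta(\delta_1)$); alternatively choose the removal neighborhood strictly inside the strict-overshoot region. With these boundary cases patched, your proof goes through for both normalizations.
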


Combining the optimal expected allocation rule given in \eqref{eq:expAllocationPartial} with Lemma~\ref{thm:optLinearRegUp}, we get 
\[
\beta(V_H) = \frac{V_H(1-F(V_H)^n)}{n(1-F(V_H))} - \int_{V_L}^{V_H} F(v)^{n-1} dv = B_H.
\]
Note that the above formula is applicable only if there exists a $V_H$. It may be possible that $\beta(v)$ never reaches the threshold $B_H$, i.e., $\beta(v) < B_H$ for $v \in [0,1]$. The above formula also tells us that given $V_L$ (or $V_H$), $V_H$ (or $V_L$) can be calculated efficiently because, keeping $V_L$ (or $V_H$) fixed, $\beta(v)$ is continuous and monotone in $V_H$ (or $V_L$).

We now discuss how to efficiently compute the reserve $V_L$ and the saturation $V_H$. We shall use the following notation in the next theorem: $\eta(x) = \frac{1-x^n}{n(1-x)}$, $\psi_u(v) = v - \frac{F(u) - F(v)}{f(v)}$, $V_{\low}$ is the solution of $V_{\low} F(V_{\low})^{n-1} = B_L$, $V_{\mmid}$ is the solution of $\int_{V_{\mmid}}^1 F(v)^{n-1} dv = 1 - B_H$, and $V_{\uup}$ is the solution of $B_H = V_{up} \eta(F(V_{\uup}))$.

\begin{theorem}\label{thm:optLinearReg}
The contest that optimizes the linear threshold objective for regular distributions has the following allocation and expected allocation functions:
\begin{enumerate}
    \item For unit-range, the expected allocation function $\xi$ and the allocation function $\bx(\bv)$ are given as:
    \[ 
    \xi(v) = \begin{cases}
    0, \text{ if $v < V$ }\\
    1, \text{ if $v \ge V$ }
    \end{cases}
    \qquad x_i(\bv) = \begin{cases}
    0, \text{ if $v_i < V$ }\\
    1, \text{ if $v_i \ge V$ }
    \end{cases}
    \]
    where $V = \max(B_L,\min(B_H,\psi^{-1}(B_L)))$.
    
    \item For unit-sum, the optimal solution is given by one of the two cases below. Only one of the two cases is applicable for a given instance of the problem. The first case corresponds to the scenario when the expected output never reaches the upper threshold, while the second case captures the alternate scenario.
    \begin{enumerate}
        \item The expected allocation function $\xi$ is:
        \begin{equation}
            \xi(v) = \begin{cases}
                0, &\text{ if $v < V_L$}  \\
                F(v)^{n-1}, &\text{ if $v \ge V_L$}  
            \end{cases}
        \end{equation}
        and the allocation function $\bx(\bv)$ is:
        \begin{equation}\label{sol:linearRegSum}
            x_i(\bv) = \begin{cases}
            0, &\text{ if $\max_j(v_j) < V_L$ or $i \notin W$} \\
            1/|W|, &\text{ if $i \in W$ and $\max_j(v_j) \ge V_L$} 
            \end{cases}
        \end{equation}
        where $W = \{ k \mid v_k = \max_j(v_j) \}$ and $V_L = \min(V_{\mmid},\max(V_{\low}, \overline{V_L}))$; where $\overline{V_L}$ is the solution of $\frac{B_L}{F(\overline{V_L})^{n-1}}  - \psi(\overline{V_L}) = 0$.
        \item The expected allocation function $\xi$ is:
        \begin{equation}
            \xi(v) = \begin{cases}
                0, &\text{ if $v < V_L$}  \\
                F(v)^{n-1}, &\text{ if $V_L \le v < V_H$}  \\
                \eta(F(V_H)), &\text{ if $v \ge V_H$}  
            \end{cases}
        \end{equation}
        and the allocation function $\bx(\bv)$ is:
        \begin{equation}\label{sol:linearRegSumCase1}
            x_i(\bv) = \begin{cases}
            0, &\text{ if $\max_j(v_j) < V_L$ or $i \notin W$} \\
            1/|W|, &\text{ if $i \in W$ and $V_L \le \max_j(v_j) < V_H$} \\
            1/|\widehat{W}|, &\text{ if $i \in \widehat{W}$ and $\max_j(v_j) \ge V_H$}
            \end{cases}
        \end{equation}
        \sloppy
        where $W = \{ k \mid v_k = \max_j(v_j) \}$, $\widehat{W} = \{ k \mid v_k \ge V_H \}$, $V_L = \min(V_{\uup},\max(V_{\mmid}, \overline{V_L}))$ and $V_H = \min(1,\max(V_{\uup}, \overline{V_H}))$; where $\overline{V_L}$ and $\overline{V_H}$ are the solutions of equations $\frac{B_L}{F(\overline{V_L})^{n-1}} - \psi_{\overline{V_H}}(\overline{V_L}) = 0$ and $\overline{V_H} \eta(F(\overline{V_H})) - \int_{\overline{V_L}}^{\overline{V_H}} F(v)^{n-1} dv = B_H$.
    \end{enumerate}
\end{enumerate}
\end{theorem}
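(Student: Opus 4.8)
The plan is to use the structural lemmas (Lemmas~\ref{thm:optLinear1}--\ref{thm:optLinearRegUp}) to collapse the infinite-dimensional search over expected allocation functions $\xi$ into a finite-dimensional optimization over the reserve $V_L$ and the saturation $V_H$, and then to pin down the optimizer through first-order conditions, using regularity of $F$ both to certify global optimality and to resolve the boundary clamping.

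For the unit-range case I would first invoke Lemmas~\ref{thm:optLinear1} and \ref{thm:optLinear2} together with Lemma~\ref{thm:optLinearRegRight} in its $n=1$ specialization and Lemma~\ref{thm:optLinearRegUp}, concluding that the optimal $\xi$ is the step function $\xi(v)=\b1\{v\ge V\}$; this is exactly \eqref{eq:expAllocationPartial} with $n=1$. Substituting into $\beta(v)=v\xi(v)-\int_0^v\xi(t)dt$ gives $\beta(v)=0$ for $v<V$ and $\beta(v)=V$ for $v\ge V$, so the objective \eqref{eq:optLinearObj} collapses to $B_L F(V)+V(1-F(V))$ on the admissible range $B_L\le V\le B_H$. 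Differentiating and equating to zero gives $(V-B_L)f(V)=1-F(V)$, i.e.\ $\psi(V)=B_L$; since $\psi$ is non-decreasing for regular $F$, this stationary point is the unique maximizer, and clamping it to $[B_L,B_H]$ yields $V=\max(B_L,\min(B_H,\psi^{-1}(B_L)))$. The rule that awards a prize of $1$ to every player above $V$ clearly induces this $\xi$.

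For the unit-sum case the same lemmas fix $\xi$ to the three-piece form \eqref{eq:expAllocationPartial}, so on $[V_L,V_H)$ we have $\beta(v)=vF(v)^{n-1}-\int_{V_L}^v F(t)^{n-1}dt$ and the objective \eqref{eq:optLinearObj} becomes a function of the two scalars $V_L,V_H$. I would compute $\partial J/\partial V_L$ by the Leibniz rule, accounting for the dependence of $\beta(v)$ on $V_L$ through the lower limit (contributing $\partial\beta(v)/\partial V_L=F(V_L)^{n-1}$) and for the jump value $\beta(V_L)=V_L F(V_L)^{n-1}$. Dividing the resulting stationarity condition by $f(V_L)$ collapses it to $B_L=F(V_L)^{n-1}\psi_{V_H}(V_L)$, which is exactly the defining equation for $\overline{V_L}$ (and $\psi_1=\psi$ recovers the case-(a) equation when $V_H=1$). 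Whether the saturation is active is then decided by Lemma~\ref{thm:optLinearRegUp}: if the no-saturation choice $V_H=1$ keeps $\beta(1)=1-\int_{V_L}^1 F(t)^{n-1}dt\le B_H$ (equivalently $V_L\le V_{\mmid}$) we are in case (a) with $\xi=F^{n-1}$ above the reserve; otherwise the upper bound binds, Lemma~\ref{thm:optLinearRegUp} forces $\beta(v)=B_H$ for $v\ge V_H$, and this consistency equation $V_H\eta(F(V_H))-\int_{V_L}^{V_H}F(t)^{n-1}dt=B_H$ is precisely the defining equation for $\overline{V_H}$ in case (b).

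The main obstacle, and the most delicate part, is the boundary analysis that turns the raw stationary points $\overline{V_L},\overline{V_H}$ into the stated clamped expressions and the two-case split. Here I would exploit regularity: $\psi_u$ is non-decreasing and $F(v)^{n-1}$ is increasing, so $B_L/F(v)^{n-1}-\psi_{V_H}(v)$ is strictly decreasing and the stationary point is a genuine maximum; the feasibility constraints $V_L\ge V_{\low}$ (so the reserve output reaches $B_L$), $V_L\le V_{\mmid}$ (the edge of case (a)), and $V_L\le V_{\uup}$, $V_H\ge V_{\uup}$ (the degenerate-middle limit of the saturation equation) then pin the optimizer to the $\min/\max$ of $\overline{V_L},V_{\low},V_{\mmid}$ in case (a) and of $\overline{V_L},\overline{V_H},V_{\mmid},V_{\uup}$ in case (b). I would check that the two cases agree at the switching boundary $V_L=V_{\mmid}$, so the construction is continuous and the better of the two is globally optimal. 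Converting $\xi$ to an allocation rule is then immediate from the two templates established before the theorem: $\xi(v)=F(v)^{n-1}$ is induced by awarding the whole prize to the highest-ability player above the reserve, and the constant value $\eta(F(V_H))$ is induced by sharing the prize uniformly among all players with ability at least $V_H$, giving \eqref{sol:linearRegSum} and \eqref{sol:linearRegSumCase1}.
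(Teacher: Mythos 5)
Your proposal follows the paper's proof essentially step for step: the same lemmas reduce the problem to a finite-dimensional optimization over $V_L$ (and $V_H$), the same first-order conditions $\psi(V)=B_L$, $B_L/F(V_L)^{n-1}=\psi_{V_H}(V_L)$, and $\beta(V_H)=B_H$ emerge, regularity certifies the stationary points as global maxima, and the clamping and two-case split are handled identically. The only detail you gloss over is that in case (b) one must track $\frac{dV_H}{dV_L}\le 0$ explicitly, both to see that the terms involving $\frac{dV_H}{dV_L}$ cancel in the derivative of the objective (because $\beta(V_H)=B_H$) and to conclude that $\psi_{V_H(V_L)}(V_L)$ is non-decreasing as a function of $V_L$ alone --- a computation the paper carries out in full.
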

The values of $V_L$ and $V_H$ derived in the theorem above can be efficiently computed if the distribution $F$ is known, see the proof for more details. Also note that, although the contest may seem complicated, it is reasonably simple from a player's perspective. The optimal contest that maximizes the total output has a reserve~\citep{vojnovic2015contest}, here we have a saturation value in addition to the reserve. A player need not know how these reserve and saturation values are computed.

Implementing this mechanism in practice, i.e., finding the allocation as a function of the outputs of the players, is not difficult. 
Let us consider the unit-sum case where a $V_H$ exists, i.e., where $\beta$ reaches $B_H$ in Theorem~\ref{thm:optLinearReg}, given a player's output $b$, map it to $g(b)$ as given below (assume $\epsilon \rightarrow 0$):
\begin{align}\label{sol:linearRegSumOutput}
    g(b) = 
    \begin{cases}
    0, &\text{ if } b < \beta(V_L), \\
    b, &\text{ if } \beta(V_L) \le b \le \beta(V_H - \epsilon), \\
    \beta(V_H - \epsilon), &\text{ if } \beta(V_H - \epsilon) < b < \beta(V_H) = B_H, \\
    \beta(V_H) = B_H, &\text{ if } b \ge \beta(V_H) = B_H,
    \end{cases}
\end{align}
where $\beta(v)$ is computed based on $\xi(v)$ given in Theorem~\ref{thm:optLinearReg}.
One can then distribute the prize equally among the players who have the maximum positive $g(b)$.

In Example~\ref{ex:gen1} below, we consider the same instance as Examples~\ref{ex:rank1}~and~\ref{ex:rank2}, but we compute the optimal general all-pay contest instead of the optimal rank-order allocation contest.
\begin{example}\label{ex:gen1}
We have the same scenario as Example~\ref{ex:rank1}~and~\ref{ex:rank2}: unit-sum prizes; three players, $n=3$; uniform distribution, $F(v) = v$, $f(v) = 1$; thresholds $B_L = 0.01$ and $B_H = 0.15$. 

The optimal contest in this case has a $V_L \approx 0.129$ and $V_H \approx 0.335$. And the objective value of the designer is $\approx 0.102$; notice that this is much higher than the optimal rank-order allocation contest with an objective value of $\approx 0.084$.
\end{example}

\section{Conclusion}
In this paper, we looked at two natural and practically useful objectives for a contest designer and described optimal contests for the objectives. An interesting open problem is to find how well can a contest without a reserve and/or a saturation output, or a rank-order allocation contest, approximate the optimal general contest, possibly with an additional player; we may expect to get a result along the lines of~\citep{bulow1996auctions}. Another extension of this work would be to study other practically relevant objective functions for the designer, monotone transformations other than the threshold transformations we studied in this paper. Combining the objectives for the designer studied in this paper with non-linear utility and cost functions for the players is also an important research direction.

\bibliographystyle{alpha}
\bibliography{ref}

\appendix
\section{Appendix for Section~\ref{sec:prelim}}
\subsection{Expected Allocation} 
We use the expression given below in Section~\ref{sec:opt}. For completeness, we provide its derivation.

Consider a player $i$ with ability $v$, and fix $a$ and $b$ so that
$0 \le a \le v \le b \le 1$. For each $k \in [n-1]$, let $p(v,a,b,k)$ be the probability that all players other than player $i$ have ability at most $b$,  with $k$ of them having ability in $[a, b]$. We have:
\[ p(v,a,b,k) = \binom{n-1}{k} F(a)^{n-1-k} (F(b) - F(a))^k. \]
Now, if the player with ability $v$ is allocated $1/(k+1)$ with probability $p(v,a,b,k)$, then the expected prize allocation for the player is:
\begin{align*}
\sum_{k = 0}^{n-1} \frac{p(v,a,b,k)}{k+1} &= \sum_{k = 0}^{n-1} \binom{n-1}{k} \frac{F(a)^{n-1-k} (F(b) - F(a))^k}{k+1} \\
&= \sum_{k = 0}^{n-1} \binom{n}{k+1} \frac{F(a)^{n-1-k} (F(b) - F(a))^k}{n} \\
&= \sum_{k = 1}^{n} \binom{n}{k} \frac{F(a)^{n-k} (F(b) - F(a))^{k-1}}{n} \\
&= \frac{1}{n(F(b) - F(a))} \sum_{k = 1}^{n} \binom{n}{k} F(a)^{n-k} (F(b) - F(a))^k \\
&= \frac{1}{n(F(b) - F(a))} \left(\sum_{k = 0}^{n} \binom{n}{k} F(a)^{n-k} (F(b) - F(a))^k - F(a)^n\right) \\
&= \frac{1}{n(F(b) - F(a))} \left(F(b)^n - F(a)^n\right) = \frac{1}{n}\cdot \frac{F(b)^n - F(a)^n}{F(b) - F(a)}. 
\end{align*}

\subsection{Properties of Single-Crossing Functions}
The following are some well-known properties of single-crossing functions that we shall use to prove some of our results. (Proofs are given for the reader's convenience.)
\begin{lemma}\label{thm:singleCrossing1}
For two functions $f,g : [a,b] \rightarrow \bR$, if $f$ is single-crossing with respect to $g$, then $F(x) = \int_a^x f(x) dx$ is single-crossing with respect to $G(x) = \int_a^x g(x) dx$.
\end{lemma}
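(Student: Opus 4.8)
The plan is to reduce the single-crossing property of $F$ with respect to $G$ to a statement about the sign of a single auxiliary function. I would define the difference $D(x) = F(x) - G(x) = \int_a^x (f(t) - g(t))\,dt$ and write $d(t) = f(t) - g(t)$. By Definition~\ref{def:singleCrossing}, single-crossing of $F$ with respect to $G$ is exactly the assertion that there is a point $y^* \in [a,b]$ with $D(x) \le 0$ for $x \le y^*$ and $D(x) > 0$ for $x > y^*$; so the whole proof amounts to locating this one sign change of $D$. Two immediate facts anchor everything: $D(a) = 0$, and $D$ is continuous, being an integral of an integrable function.

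First I would translate the hypothesis. By single-crossing of $f$ with respect to $g$ there is an $x^*$ with $d(t) \le 0$ for $t \le x^*$ and $d(t) > 0$ for $t > x^*$. From this I would extract two monotonicity facts about $D$ directly from the integral, avoiding any appeal to differentiability: for $a \le x_1 \le x_2 \le x^*$ we have $D(x_2) - D(x_1) = \int_{x_1}^{x_2} d \le 0$, so $D$ is non-increasing on $[a,x^*]$; and for $x^* \le x_1 < x_2 \le b$ we have $D(x_2) - D(x_1) = \int_{x_1}^{x_2} d > 0$, so $D$ is strictly increasing on $[x^*,b]$. Combined with $D(a)=0$, the first fact gives $D(x) \le 0$ throughout $[a,x^*]$, and in particular $D(x^*) \le 0$.

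Then I would split on the sign of $D(b)$. If $D(b) \le 0$, strict monotonicity on $[x^*,b]$ forces $D(x) \le D(b) \le 0$ there, so $D \le 0$ on all of $[a,b]$ and $y^* = b$ works, the strict half of the single-crossing condition being vacuous. If $D(b) > 0$, then since $D(x^*) \le 0 < D(b)$ and $D$ is continuous and strictly increasing on $[x^*,b]$, there is a unique $y^* \in [x^*,b)$ with $D(y^*)=0$; monotonicity then gives $D(x)\le 0$ for $x \le y^*$ (using the $[a,x^*]$ bound for $x\le x^*$ and monotonicity for $x^* < x \le y^*$) and $D(x)>0$ for $x>y^*$, which is precisely single-crossing at $y^*$.

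I expect the delicate points to be bookkeeping rather than substance. The main thing to get right is the handling of the endpoint and degenerate regimes, namely $x^* = a$, $x^* = b$, and the case $D(b)\le 0$, so that $y^*$ is always produced inside $[a,b]$ and the strict inequality $f > g$ on $(x^*,b]$ is correctly promoted to a strict inequality $F > G$ on $(y^*,b]$ without accidentally weakening it at the crossing point. A secondary point is the regularity needed: I would state at the outset that $f,g$ are integrable so that $D$ is well-defined and continuous, which is all that the integral-form monotonicity arguments and the intermediate-value step require; no differentiability of $F$ or $G$ enters.
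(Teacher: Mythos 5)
Your proof is correct and follows essentially the same route as the paper's: both arguments reduce to observing that the difference $F-G$ starts at $0$, is non-positive (non-increasing from $0$) up to the crossing point $x^*$ of $f$ and $g$, and is strictly increasing thereafter, so it can cross zero from below at most once. Your version merely makes the case analysis on the sign at $b$ and the intermediate-value step explicit, where the paper compresses them into the remark that once $F$ exceeds $G$ it stays that way.
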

\begin{proof}
Let $y^* \in [a,b]$ be the maximum value such that $F(y) \le G(y)$ for all $y \le y^*$. Such a value exists because the set $\{y: F(y)\le G(y)\}$ is closed and $F(a) = G(a) = 0$.

Suppose that $f$ is single-crossing with respect to $g$ at $x^*$. Then we have $y^* \ge x^*$, because for every $x \le x^*$ we have $g(x) - f(x) \ge 0$, and therefore for every $y \le x^*$ we have $G(y) - F(y) = \int_a^y (g(x) - f(x)) dx \ge 0$. Now, for $y > x^*$, we know that $f(y) > g(y)$ so $F(y) - G(y)$ is strictly increasing. So, if $F(x)$ goes above $G(x)$ at some point in $[a,b]$, then it will remain that way.
\end{proof}

\begin{lemma}\label{thm:singleCrossing2}
For two continuous and strictly increasing functions $f,g : [a,b] \rightarrow \bR$ with $f(a) = g(a) = 0$, if $f$ is single-crossing with respect to $g$ at $x^*$, then $g^{-1}$ is single-crossing with respect to $f^{-1}$ (in the intersection of the domains of $f^{-1}$ and $g^{-1}$) at $y^* = f(x^*) = g(x^*)$.
\end{lemma}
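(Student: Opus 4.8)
The plan is to convert the single-crossing condition on $f$ versus $g$ into one on $g^{-1}$ versus $f^{-1}$ purely by exploiting monotonicity: I would substitute $x := f^{-1}(y)$ and read off the inequalities from the hypothesis. First I would pin down the crossing point of the inverses. Since $f(x) > g(x)$ for $x > x^*$, continuity gives $f(x^*) \ge g(x^*)$ in the limit $x \downarrow x^*$, while the hypothesis gives $f(x^*) \le g(x^*)$; hence $f(x^*) = g(x^*) =: y^*$. Because $x^* \in [a,b]$ and both functions are increasing with $f(a)=g(a)=0$, we have $y^* \in [0,\min(f(b),g(b))]$, so $y^*$ lies in the common domain $[0,\min(f(b),g(b))]$ of $f^{-1}$ and $g^{-1}$, which is where we argue.

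For the regime $y \le y^*$ I would set $x := f^{-1}(y)$; since $f$ is increasing and $y \le y^* = f(x^*)$, this gives $x \le x^*$, so the hypothesis yields $f(x) \le g(x)$, i.e. $y \le g(x)$. Applying the increasing map $g^{-1}$ to $y \le g(x)$ gives $g^{-1}(y) \le x = f^{-1}(y)$, exactly the weak inequality required below the crossing point. For the regime $y > y^*$ the same substitution $x := f^{-1}(y)$ now gives $x > x^*$, whence $f(x) > g(x)$, i.e. $y > g(x)$; applying the strictly increasing map $g^{-1}$ yields $g^{-1}(y) > x = f^{-1}(y)$, the strict inequality required above the crossing point. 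Together these two regimes say precisely that $g^{-1}$ is single-crossing with respect to $f^{-1}$ at $y^*$, as claimed.

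The argument is short, and the only genuinely delicate points are (i) establishing $f(x^*) = g(x^*)$, for which continuity is essential (without it the two inverses could fail to meet at a single point), and (ii) tracking strict versus weak inequalities, so that the \emph{strict} increase of $g$ is what upgrades $y > g(x)$ to $g^{-1}(y) > f^{-1}(y)$ rather than merely $g^{-1}(y) \ge f^{-1}(y)$. One should also keep the discussion within the intersection $[0,\min(f(b),g(b))]$ of the two inverse domains, which is automatic here since $y^*$ lies in it and the shared endpoint $0$ is where both inverses begin.
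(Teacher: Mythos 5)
Your proposal is correct and follows essentially the same route as the paper's proof: fix $y$ in the common domain, pull it back through $f$ to a point $x$, use the single-crossing hypothesis at $x$ relative to $x^*$, and push the resulting inequality through the monotone map $g^{-1}$, taking care with strict versus weak inequalities. The only addition is your explicit verification that $f(x^*)=g(x^*)$, which the paper's statement already builds in; otherwise the two arguments coincide.
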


\begin{proof}
Consider a point $y$ that belongs to the domains of $f^{-1}$ and $g^{-1}$. Suppose $y\le y^*$. Since $f$ and $g$ are continuous and strictly increasing, there exists a unique point $x$ with $y=f(x)$ and a unique point $z$ with $y=g(z)$. Further, $y\le y^*$ implies $x\le x^*$ and hence $f(x)\le g(x)$. It follows that $z\le x$, i.e., $g^{-1}(y) \le f^{-1}(y)$.
In a similar manner, we can prove that $g^{-1}(y) > f^{-1}(y)$ for each point $y > y^*$ in the intersection of the domains of $f^{-1}$ and $g^{-1}$.
\end{proof}

\subsection{Omitted Proofs}
\subsubsection{Proof of Lemma~\ref{thm:binaryEquiv}}
Remember that we are in an incomplete information model where the ability of every agent is picked independently from the distribution $F$, and where an agent $i$ only knows her ability $v_i$ but not the ability $v_j$ of any other agent $j \neq i$. Moreover, as we are looking at symmetric mechanisms, from equations \eqref{eq:rankOutput} and \eqref{eq:optOutput} we know that $\beta(v_i) = \beta(v_j)$ if $v_i = v_j$ for any agents $i$ and $j$, i.e., the agents use the same function to map their ability $v$ to the output $\beta(v)$.

Let $q = \bE_v[\b1\{\beta(v) \ge B\}] = \bP_{v}[\beta(v) \ge B]$ be the probability that any given agent has output above $B$. The probability that $k$ agents have output above $B$ is equal to $\binom{n}{k} q^k (1-q)^{n-k}$. The binary threshold objective can be written as
\begin{multline*}
    \bE_{\bv}[ \rho( \sum_{i \in [n]} \b1\{\beta(v_i) \ge B\} ) ] = \sum_{k \in [n]} \rho(k) \binom{n}{k} q^k (1-q)^{n-k} \\
    = \sum_{k \in [n]} (\rho(k) - \rho(k-1)) \sum_{\ell = k}^n \binom{n}{\ell} q^{\ell} (1-q)^{n-\ell} = \sum_{k \in [n]} (\rho(k) - \rho(k-1)) g_k(q),
\end{multline*}
where, let, $g_k(q) = \sum_{\ell = k}^n \binom{n}{\ell} q^{\ell} (1-q)^{n-\ell} $. As $\rho$ is non-decreasing, so $\rho(k) - \rho(k-1) \ge 0$ for every $k \in [n]$. Now, if we show that $g_k(q)$ is a non-decreasing function of $q$, then maximizing $q$ will maximize the objective, which shall complete the proof. Let us differentiate $g_k(q)$ w.r.t. $q$, we get
\begin{align*}
    \frac{d g_k(q)}{d q} &= \sum_{\ell = k}^n \binom{n}{\ell} \frac{d (q^{\ell} (1-q)^{n-\ell})}{d q} \\
    &= n q^{n-1} + \sum_{\ell = k}^{n-1} \binom{n}{\ell} \left(\ell q^{\ell-1} (1-q)^{n-\ell} - (n-\ell) q^{\ell} (1-q)^{n-\ell-1} \right) \\
    &= n q^{n-1} + n \sum_{\ell = k}^{n-1} \left( \binom{n-1}{\ell-1} q^{\ell-1} (1-q)^{n-\ell} - \binom{n-1}{\ell} q^{\ell} (1-q)^{n-\ell-1} \right) \\
    &= n \binom{n-1}{k-1} q^{k-1} (1-q)^{n-k} \ge 0 \text{ as $q \in [0,1]$.}
\end{align*}

\section{Appendix for Section~\ref{sec:rank}}
\subsection{Total Output}\label{sec:app:totalOutput}
If $B_L = 0$ and $B_H = 1$ in the linear threshold objective (Definition~\ref{def:lt}), then the objective becomes simply $\bE[\beta(v)]$, i.e., the well-studied objective of maximizing the total output. For the unit-sum case, it has been shown by \cite{glazer1988optimal} that the optimal allocation of prizes is to award the entire prize to the top-ranked player, i.e., set $\alpha_1 = 1$ and $\alpha_j = 0$ for $j > 1$. This can also be observed from the following analysis:
\begin{align*}
     \int_0^1 \beta_S(v) f(v) dv &=  \int_0^1 \left(\sum_{j \in [n-1]} \frac{\alpha_j}{j} \int_0^v xf_{n-1,j}(x) dx\right) f(v) dv \\
    &= \sum_{j \in [n-1]} \frac{\alpha_j}{j} \int_0^1 \left(\int_0^v xf_{n-1,j}(x) dx\right)  f(v) dv \\
    &= \sum_{j \in [n-1]} \frac{\alpha_j}{j} \int_0^1 \left(\int_x^1 f(v) dv\right) xf_{n-1,j}(x) dx  \\
    &= \sum_{j \in [n-1]} \alpha_j \int_0^1 \frac{x}{j} \left(1-F(x)\right) f_{n-1,j}(x) dx  \\
    &= \frac{1}{n} \sum_{j \in [n-1]} \alpha_j \int_0^1 x f_{n,j+1}(x) dx.
\end{align*}
As $\int_0^1 x f_{n,j+1}(x) dx$ is the expectation of the $(j+1)$-st highest value out of $n$ i.i.d. samples from $F$, we have  
$$
\int_0^1 x f_{n,j+1}(x) dx \ge \int_0^1 x f_{n,k+1}(x) dx \text{ for } j \le k.
$$
Therefore, the optimal output function is $\beta_{S1}$.

\noindent For the unit-range objective, following similar steps, we get:
\begin{align*}
    \int_0^1 \beta_R(v) f(v) dv &= \frac{1}{n} \sum_{j \in [n-1]} \alpha_j\cdot j\cdot\int_0^1 x f_{n,j+1}(x) dx. 
\end{align*}
While for unit-sum, $\int_0^1 x f_{n,j+1}(x) dx$ decreases as $j$ increases, for unit-range, $j\int_0^1 x f_{n,j+1}(x) dx$ does not necessarily decrease as $j$ increases. Nevertheless, we can see that the optimal contest is a simple contest and the output function in the optimal contest is $\beta_{Rj^*}$ where:
\[ j^* \in \argmax_{j \in [n-1]} \left(j\int_0^1 x f_{n,j+1}(x) dx\right). \]

\subsection{Concave and Convex Transformations}
Let $h: [0,1] \rightarrow [0,1]$ be a monotone non-decreasing differentiable function, either concave or convex. The contest designer's objective function is:
\begin{equation}\label{eq:conObj}
    \bE_{\bv}[\sum_{i \in [n]} h(\beta(v_i))] \equiv \bE_{v}[h(\beta(v))] = \int_0^1 h(\beta(v)) f(v) dv.
\end{equation}
In Section~\ref{sec:rank}, for both unit-sum and unit-range, we derived that $\beta(v) = \sum_{j \in [n-1]} \alpha_j \beta_j(v)$ where $\balpha = (\alpha_j)_{j \in [n-1]}$ s.t. $\sum_{j \in [n-1]} \alpha_j = 1$ and $\alpha_j \ge 0$ for $j \in [n-1]$. For unit-sum, $\beta_j(v) = \beta_{Sj}(v) = \frac{1}{j} \int_0^v t f_{n-1,j}(t)dt$ (equation~\eqref{eq:betaS}), and for unit-range, $\beta_j(v) = \beta_{Rj}(v) = \int_0^v t f_{n-1,j}(t)dt$ (equation~\eqref{eq:betaR}). We shall use these results in this section.

\begin{lemma}\label{thm:rankConvexConcaveLemma}
If $h$ is convex (concave), then the objective function of the contest designer is also convex (concave) w.r.t. $\balpha$.
\end{lemma}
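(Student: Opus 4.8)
The plan is to exploit the fact that the output function $\beta$ is \emph{linear} in $\balpha$, and then invoke the standard principle that precomposing a convex (concave) function with an affine map, followed by averaging against a nonnegative weight, preserves convexity (concavity). Writing the designer's objective as a function of $\balpha$,
\[ \Phi(\balpha) = \int_0^1 h(\beta(v))\, f(v)\, dv, \qquad \beta(v) = \sum_{j \in [n-1]} \alpha_j \beta_j(v), \]
the crucial observation is that for each fixed $v \in [0,1]$ the coefficients $\beta_j(v)$ do not depend on $\balpha$, so the inner map $\balpha \mapsto \beta(v)$ is linear (in particular affine) in $\balpha$.

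First I would fix two feasible vectors $\balpha$ and $\balpha'$ together with a scalar $\lambda \in [0,1]$, and use linearity to write, for every $v$, $\beta_{\lambda\balpha + (1-\lambda)\balpha'}(v) = \lambda\,\beta_{\balpha}(v) + (1-\lambda)\,\beta_{\balpha'}(v)$, where I write $\beta_{\balpha}$ to emphasize the dependence on $\balpha$. Applying convexity of $h$ pointwise in $v$ then gives
\[ h\!\left(\beta_{\lambda\balpha + (1-\lambda)\balpha'}(v)\right) \le \lambda\, h(\beta_{\balpha}(v)) + (1-\lambda)\, h(\beta_{\balpha'}(v)). \]

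Next I would multiply both sides by $f(v) \ge 0$ and integrate over $[0,1]$. Since $f$ is a density and hence nonnegative, integration preserves the direction of the inequality, yielding $\Phi(\lambda\balpha + (1-\lambda)\balpha') \le \lambda\,\Phi(\balpha) + (1-\lambda)\,\Phi(\balpha')$, which is precisely convexity of $\Phi$ with respect to $\balpha$. The concave case is identical with every inequality reversed, or can be deduced immediately from the convex case applied to $-h$.

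I do not anticipate any genuine obstacle here: the argument is the routine ``affine precomposition plus nonnegative averaging'' recipe for preserving convexity. The only points that deserve a line of care are that the $\beta_j(v)$ are constants with respect to $\balpha$, so that the inner map is truly affine rather than merely convex; that the averaging weight $f(v)\,dv$ is nonnegative, so that the integral does not flip the inequality; and that $\beta(v)$ remains in the domain $[0,1]$ of $h$ for every feasible $\balpha$, so that the composition $h(\beta(v))$ is well defined throughout.
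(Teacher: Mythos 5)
Your proof is correct and follows essentially the same route as the paper's: both exploit that $\beta(v)$ is linear in $\balpha$ for each fixed $v$, apply the convexity (concavity) of $h$ pointwise, and integrate against the nonnegative density $f$. If anything, your write-up is cleaner, since you state the convexity inequality with the correct direction ($h(\gamma x + (1-\gamma)y) \le \gamma h(x) + (1-\gamma)h(y)$ for convex $h$), whereas the paper's displayed inequality has the direction flipped even though its conclusion is the intended one.
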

\begin{proof}
Take $\gamma \in [0,1]$ and two vectors $\balpha^{(1)}$ and $\balpha^{(2)}$, and let $\balpha = \gamma \balpha^{(1)} + (1-\gamma) \balpha^{(2)}$. Let $OBJ(\balpha)$ denote the objective value corresponding to $\balpha$, similarly, $OBJ(\balpha^{(1)})$ and $OBJ(\balpha^{(2)})$.
\begin{align*}
    OBJ(\balpha) &= \int_0^1 h(\sum_{j \in [n-1]} \alpha_j \beta_j(v)) f(v) dv \\
    &= \int_0^1 h(\gamma \sum_{j \in [n-1]} \alpha^{(1)}_j \beta_j(v) + (1-\gamma) \sum_{j \in [n-1]} \alpha^{(1)}_j \beta_j(v)) f(v) dv.
\end{align*}
If $h$ is convex, then $h(\gamma x + (1-\gamma) y) \ge \gamma h(x) + (1-\gamma) h(y) $ for any $x$ and $y$, and therefore, we get $OBJ(\balpha) \ge \gamma OBJ(\balpha^{(1)}) + (1-\gamma) OBJ(\balpha^{(2)})$, so $OBJ$ is convex. Similarly, we can prove that if $h$ is concave, then $OBJ$ is concave.
\end{proof}

For convex $h$, the results are similar to the results for the total output objective (Section~\ref{sec:app:totalOutput}). For both unit-sum and unit-range, there is a simple contest that is also optimal. Moreover, for unit-sum, the simple contest that awards the entire prize to the first ranked player is also optimal. These results are similar to the results for a model where the players have concave cost functions, as studied by \cite{moldovanu2001optimal}. There is a similar correspondence for concave $h$ and convex cost functions.
\begin{theorem}\label{thm:rankConvex1}
If $h$ is convex, then there is a simple and optimal rank-order allocation contest.
\end{theorem}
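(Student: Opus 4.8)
The plan is to exploit the convexity of the objective established in Lemma~\ref{thm:rankConvexConcaveLemma} together with the geometry of the feasible set, invoking the standard principle that a convex function attains its maximum over a polytope at one of its vertices.

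First I would recall the setup. After the normalization $\sum_{j \in [n-1]} \alpha_j = 1$ argued at the start of Section~\ref{sec:rank}, the feasible set for $\balpha$ is the simplex $\Delta = \{\balpha : \alpha_j \ge 0 \text{ for all } j \in [n-1], \ \sum_{j \in [n-1]} \alpha_j = 1\}$, a compact convex polytope whose extreme points are precisely the $(n-1)$ standard basis vectors. Denote by $\balpha^{(j)}$ the vector with $\alpha_j = 1$ and $\alpha_k = 0$ for $k \neq j$. By equations~\eqref{eq:betaS} and~\eqref{eq:betaR}, the output function induced by $\balpha^{(j)}$ is exactly $\beta_j$, so each vertex of $\Delta$ corresponds to a simple contest; note that this identification holds uniformly for both the unit-sum and the unit-range settings.

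Next I would invoke convexity. By Lemma~\ref{thm:rankConvexConcaveLemma}, $OBJ$ is convex on $\Delta$. Writing an arbitrary feasible point as a convex combination of the vertices, $\balpha = \sum_{j \in [n-1]} \alpha_j\, \balpha^{(j)}$, the convexity inequality (Jensen's inequality for the convex function $OBJ$, obtained by iterating the two-point inequality of Lemma~\ref{thm:rankConvexConcaveLemma}) gives $OBJ(\balpha) \le \sum_{j \in [n-1]} \alpha_j\, OBJ(\balpha^{(j)}) \le \max_{j \in [n-1]} OBJ(\balpha^{(j)})$, where the last step uses that a weighted average is bounded by its largest term. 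Hence every contest is dominated by the best simple contest $\balpha^{(j^*)}$ with $j^* \in \argmax_{j \in [n-1]} OBJ(\balpha^{(j)})$, and since $\balpha^{(j^*)}$ is itself feasible, it is optimal.

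I do not expect a genuine obstacle here, as the content is entirely the maximum principle for convex functions over a simplex. The only points requiring care are confirming that the feasible region is exactly the simplex (so that its extreme points are the simple contests $\beta_j$) and that the two-point convexity estimate of Lemma~\ref{thm:rankConvexConcaveLemma} extends by a routine induction to convex combinations of all $n-1$ vertices.
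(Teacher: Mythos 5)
Your argument is correct and is essentially the paper's own proof: both rest on Lemma~\ref{thm:rankConvexConcaveLemma} and the fact that a convex function over the simplex $\{\balpha : \alpha_j \ge 0, \sum_j \alpha_j = 1\}$ attains its maximum at a vertex, which corresponds to a simple contest. Your write-up merely makes the Jensen step explicit, which the paper leaves implicit.
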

\begin{proof}
From Lemma~\ref{thm:rankConvexConcaveLemma}, we know that if $h$ is convex then the objective function is also convex w.r.t. $\balpha$. Any $\balpha$ can be written as a convex combination of the corner points, where a corner point has $\alpha_j = 1$ for some $j \in [n-1]$ and $\alpha_k = 0$ for all $k \neq j$. So, the optimal value of the objective is achieved at some of the corner points (it may be achieved at other points also).
\end{proof}

\begin{theorem}
If $h$ is convex, and the prizes are unit-sum, then the simple contest that awards the entire prize to the first-ranked player is optimal.
\end{theorem}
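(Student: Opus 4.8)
The plan is to reduce the problem to a pairwise comparison among the $n-1$ simple unit-sum contests and then show that the top-prize contest $\beta_{S1}$ (which awards the entire prize to the first-ranked player) dominates each of them. By Theorem~\ref{thm:rankConvex1}, for convex $h$ there is an optimal contest that is simple, so it suffices to prove $\int_0^1 h(\beta_{S1}(v)) f(v) dv \ge \int_0^1 h(\beta_{Sj}(v)) f(v) dv$ for every $j \in [n-1]$; optimality of $\beta_{S1}$ then follows immediately.

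First I would assemble two facts about the pair $(\beta_{S1},\beta_{Sj})$. By Theorem~\ref{thm:singleCrossing5}, applied by moving the entire weight from index $j$ to index $1$, $\beta_{S1}$ is single-crossing with respect to $\beta_{Sj}$; let $x^*$ be the crossing point and $c = \beta_{S1}(x^*) = \beta_{Sj}(x^*)$ their common value there (equality holds by continuity of both integrals). Since both $\beta_{S1}$ and $\beta_{Sj}$ are non-decreasing, Definition~\ref{def:singleCrossing} gives $\beta_{S1}(v) \le \beta_{Sj}(v) \le c$ for $v \le x^*$ and $\beta_{S1}(v) \ge \beta_{Sj}(v) \ge c$ for $v \ge x^*$. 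Second, from the total-output analysis in Section~\ref{sec:totalOutput}, $\int_0^1 \beta_{S1}(v) f(v) dv \ge \int_0^1 \beta_{Sj}(v) f(v) dv$, because the expected second-highest order statistic dominates the expected $(j+1)$-st.

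The crux is to combine these with convexity via a tangent-line detrending at $c$. Set $G(t) = h(t) - h(c) - h'(c)(t-c)$. Since $h$ is convex, $G \ge 0$ and $G$ is convex with $G'(c)=0$, so $G$ is non-increasing on $(-\infty,c]$ and non-decreasing on $[c,\infty)$. The single-crossing arrangement then forces $G(\beta_{S1}(v)) \ge G(\beta_{Sj}(v))$ pointwise: for $v \ge x^*$ both arguments lie above $c$ with $\beta_{S1}(v) \ge \beta_{Sj}(v)$, while for $v \le x^*$ both lie below $c$ with $\beta_{S1}(v) \le \beta_{Sj}(v)$, and in each regime $G$ moves the favorable way. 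Writing $h(t) = G(t) + h(c) + h'(c)(t-c)$ and integrating against $f$, the constant cancels and I obtain
\[
\int_0^1 [h(\beta_{S1}(v)) - h(\beta_{Sj}(v))] f(v)\,dv = \int_0^1 [G(\beta_{S1}(v)) - G(\beta_{Sj}(v))] f(v)\,dv + h'(c)\int_0^1 [\beta_{S1}(v) - \beta_{Sj}(v)] f(v)\,dv.
\]
The first integral is non-negative by the pointwise inequality, and the second is non-negative since $h'(c) \ge 0$ ($h$ non-decreasing) and the total-output domination makes the remaining factor non-negative; hence the left-hand side is $\ge 0$.

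I expect the main obstacle to be precisely this last combination step: single-crossing alone makes the integrand $h(\beta_{S1})-h(\beta_{Sj})$ change sign, while the total-output comparison only controls the means, so neither ingredient suffices in isolation. The tangent-line detrending at the crossing value $c$ is what reconciles them, converting the sign-changing comparison into a clean pointwise inequality for the convex remainder $G$ while isolating the mean comparison in the linear part. Minor care is needed for degenerate crossings (such as $x^*=0$, where $h$ non-decreasing already settles the claim) and, if one does not wish to assume differentiability of $h$ at $c$, one can replace $h'(c)$ by any subgradient of $h$ at $c$ without changing the argument.
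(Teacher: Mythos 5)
Your proof is correct, and it rests on exactly the same three ingredients as the paper's: the reduction to simple contests via Theorem~\ref{thm:rankConvex1}, the single-crossing of $\beta_{S1}$ with respect to $\beta_{Sj}$, and the total-output domination $\int_0^1 (\beta_{S1}(v)-\beta_{Sj}(v))f(v)\,dv \ge 0$ from Section~\ref{sec:totalOutput}. The only divergence is in how convexity is cashed in at the end. The paper linearizes $h$ at the \emph{moving} point $\beta_{Sj}(v)$, giving $\int_0^1 h'(\beta_{Sj}(v))(\beta_{S1}(v)-\beta_{Sj}(v))f(v)\,dv$, and then needs (but leaves implicit) a Chebyshev-type step: since the weight $h'(\beta_{Sj}(v))$ is non-negative and non-decreasing while the integrand changes sign once from $-$ to $+$ at $x^*$, the integral is bounded below by $h'(\beta_{Sj}(x^*))\int_0^1(\beta_{S1}-\beta_{Sj})f \ge 0$. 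You instead linearize at the \emph{fixed} crossing value $c=\beta_{S1}(x^*)=\beta_{Sj}(x^*)$, which turns the convex remainder $G$ into a genuinely pointwise inequality and isolates the mean comparison in the affine part. The two manipulations are interchangeable here; your version has the merit of making fully explicit the step the paper compresses into "using (i), (ii), (iii)", and of handling non-differentiable $h$ cleanly via subgradients, at the cost of having to verify the monotone arrangement of $\beta_{S1}(v),\beta_{Sj}(v)$ around $c$ on each side of $x^*$ (which you do correctly using monotonicity of the $\beta_{Sj}$).
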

\begin{proof}
From Theorem~\ref{thm:rankConvex1}, we know that there is an optimal simple contest. Let this simple contest have $\alpha_j = 1$ for some $j \in [n-1]$ and $\alpha_k = 0$ for all $k \neq j$, so $\beta(v) = \beta_{Sj}(v)$. Comparing the objective value of the contest where $\alpha_1 = 1$ with this contest, we have:
\begin{align*}
    \int_0^1 h(\beta_{S1}(v)) f(v) dv - \int_0^1 h(\beta_{Sj}(v)) f(v) dv &= \int_0^1 (h(\beta_{S1}(v)) - h(\beta_{Sj}(v))) f(v) dv \\
    \ge \int_0^1 h'(\beta_{Sj}(v)) (\beta_{S1}(v) &- \beta_{Sj}(v)) f(v) dv, \quad \text{ as $h$ is convex.}
\end{align*}
Using (i) $h'(\beta_{Sj}(v))$ is monotonically increasing and non-negative because $\beta_{Sj}(v)$ is monotonically increasing and $h$ is convex, (ii) $\beta_{S1}(v)$ is single-crossing w.r.t. $\beta_{Sj}(v)$, and (iii) $\int_0^1 (\beta_{S1}(v) - \beta_{Sj}(v)) f(v) dv \ge 0$, as proved while studying total output (Section~\ref{sec:app:totalOutput}), we have our required result.
\end{proof}

\begin{theorem}
If $h$ is concave, then the optimal rank-order allocation contest need not be simple, but can be efficiently found by solving a concave maximization problem.
\end{theorem}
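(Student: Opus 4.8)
The plan is to treat the two assertions of the theorem separately, since they are of quite different character. The efficient-solvability claim follows almost immediately from the machinery already in place. By Lemma~\ref{thm:rankConvexConcaveLemma}, when $h$ is concave the objective $OBJ(\balpha) = \int_0^1 h\big(\sum_{j \in [n-1]} \alpha_j \beta_j(v)\big) f(v)\,dv$ is concave in $\balpha$. The feasible region is the simplex $\{\balpha : \alpha_j \ge 0 \text{ for all } j \in [n-1],\ \sum_{j \in [n-1]} \alpha_j \le 1\}$, which is a convex polytope. Hence the problem is the maximization of a concave function over a convex body, which can be solved to arbitrary accuracy in polynomial time by standard convex-optimization methods (interior-point or the ellipsoid method), provided one can evaluate $OBJ$ and its gradient; the latter is given by $\partial OBJ/\partial \alpha_i = \int_0^1 h'\big(\sum_j \alpha_j \beta_j(v)\big)\,\beta_i(v)\, f(v)\,dv$, each integral being efficiently computable once $F$ and $h$ are known. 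This settles the ``efficiently found by solving a concave maximization problem'' part.

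For the assertion that the optimum need not be simple, I would exhibit a concrete counterexample rather than argue abstractly. The key observation is that the linear threshold objective with \emph{only} an upper threshold, i.e.\ $B_L = 0$, is precisely the concave case $h(x) = \min(B_H, x)$, since the pointwise minimum of the two affine (hence concave) functions $x \mapsto x$ and $x \mapsto B_H$ is concave and monotone non-decreasing, with range contained in $[0,1]$. The worked instance presented earlier in Section~\ref{sec:rankLinear} (unit-sum, $n=3$, uniform $F$, and the stated $B_H \approx 0.1605$) already shows that the mixed contest with $\alpha_1 = \alpha_2 = 1/2$ attains objective value $\approx 0.0864$, strictly exceeding the values $\approx 0.0857$ and $\le 0.0833$ of the two simple contests. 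Thus for this piecewise-linear concave $h$, no simple (vertex) contest is optimal. Conceptually this is unsurprising: a strictly concave $OBJ$ has a unique maximizer, which need not be a vertex of the simplex.

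The remaining point is that the theorem hypothesizes $h$ \emph{differentiable}, whereas $\min(B_H, \cdot)$ has a kink at $B_H$. I would close this gap by a smoothing argument: replace $h$ by a differentiable, strictly concave function $h_\delta$ that agrees with $\min(B_H, \cdot)$ outside a $\delta$-neighborhood of $B_H$ and satisfies $\|h_\delta - h\|_\infty \le \delta$. Since for any fixed $\balpha$ the map $h \mapsto \int_0^1 h(\beta(v)) f(v)\,dv$ changes by at most $\|h_\delta - h\|_\infty$, the objective values of the mixed and the two simple contests each move by at most $\delta$ when passing from $h$ to $h_\delta$. The numerical margin in the example (about $0.0864 - 0.0857 = 7\times 10^{-4}$) therefore survives for all $\delta$ small enough that $2\delta$ is below this margin, so under $h_\delta$ the mixed contest still strictly beats every simple contest, giving a differentiable, strictly concave instance with a non-simple optimum. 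The main obstacle is exactly this transfer step: one must verify continuity of the objective in $h$ uniformly over the three contests and check that the perturbation stays below the (small but positive) gap of the example; the efficient-solvability direction, by contrast, is essentially a direct corollary of Lemma~\ref{thm:rankConvexConcaveLemma}.
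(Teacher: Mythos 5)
Your proposal is correct and follows essentially the same route as the paper's proof: concavity of the objective via Lemma~\ref{thm:rankConvexConcaveLemma} gives efficient solvability, and the mixed-contest example from Section~\ref{sec:rankLinear} (the upper-threshold-only objective, viewed as a concave $h$) shows the optimum need not be simple. The only difference is that you spell out the smoothing of the kink in $\min(B_H,\cdot)$ and verify the numerical margin survives the perturbation, a detail the paper dispatches with a parenthetical remark.
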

\begin{proof}
The linear threshold objective with only an upper threshold is an example of a concave $h$ (we smooth the non-differentiable point), and for this objective, we already gave an example that shows that a simple contest may not be optimal (Section~\ref{sec:rankLinear}). From Lemma~\ref{thm:rankConvexConcaveLemma}, we know that the objective is concave if $h$ is concave, so the optimal contest can be solved efficiently by solving a concave maximization problem (equivalently a convex minimization problem).
\end{proof}

\subsection{Omitted Proofs}\label{sec:app:rank}
\subsubsection{Proof of Theorem~\ref{thm:singleCrossing5}}
First, let us prove that $t f_{n-1,j}(t)$ is single-crossing w.r.t. $t f_{n-1,k}(t)$ for $j < k$. We will argue that the following inequality is true for sufficiently large values of $t$:
\begin{align*}
    & t f_{n-1,j}(t) > t f_{n-1,k}(t) \\
    &\iff t \frac{(n-1)!}{(j-1)!(n-j-1)!}F(t)^{n-j-1}(1-F(t))^{j-1}f(t) \\
    & \qquad > t \frac{(n-1)!}{(k-1)!(n-k-1)!}F(t)^{n-k-1}(1-F(t))^{k-1}f(t) \\
    &\iff \frac{(k-1)!(n-k-1)!}{(j-1)!(n-j-1)!} > \left(\frac{1-F(t)}{F(t)}\right)^{k-j}.
\end{align*}
Indeed, for fixed values of $j$, $k$, and $n$, the left-hand side of the above inequality is a positive constant. As $t$ increases, the right-hand side monotonically decreases from $\infty$ to $0$. Thus, the above inequality is true for all $t$ above some $t^*$, and we have proved that $t f_{n-1,j}(t)$ is single-crossing w.r.t. $t f_{n-1,k}(t)$.

Following similar steps, we can prove that $t \frac{1}{j} f_{n-1,j}(t)$ is single-crossing w.r.t. $t \frac{1}{k} f_{n-1,k}(t)$ for $j < k$. Instead of the inequality $\frac{(k-1)!(n-k-1)!}{(j-1)!(n-j-1)!} > (\frac{1-F(t)}{F(t)})^{k-j}$ above, we have the inequality $\frac{k!(n-k-1)!}{j!(n-j-1)!} > (\frac{1-F(t)}{F(t)})^{k-j}$, but the subsequent argument applies.

Now, let us prove that $\beta$ is single-crossing w.r.t. $\beta'$. Let us assume that the following inequality is true:
\[ \beta'(v) > \beta(v). \]
For unit-range, we have:
\begin{align*}
    \beta'_R(v) &> \beta_R(v) \\
    \iff \sum_{l \in [n-1]} \alpha'_l \int_0^v xf_{n-1,l}(x) dx &> \sum_{l \in [n-1]} \alpha_l \int_0^v xf_{n-1,l}(x) dx \\
    \iff \int_0^v x f_{n-1,j}(x) dx &> \int_0^v x f_{n-1,k}(x) dx.
\end{align*}
Using Lemma~\ref{thm:singleCrossing1} and the fact that $t f_{n-1,j}(t)$ is single-crossing w.r.t. $t f_{n-1,k}(t)$ for $j < k$, we obtain that $\beta_R'$ is single-crossing w.r.t. $\beta_R$ for unit-range. Similarly, for unit-sum, we use Lemma~\ref{thm:singleCrossing1} together with the result that $t \frac{1}{j} f_{n-1,j}(t)$ is single-crossing w.r.t. $t \frac{1}{k} f_{n-1,k}(t)$ for $j < k$ to prove that $\beta_S'$ is single-crossing w.r.t. $\beta_S$.

As $\beta'$ is single-crossing w.r.t. $\beta$, using Lemma~\ref{thm:singleCrossing2}, we get that $\beta^{-1}$ is single-crossing w.r.t. $\beta'^{-1}$.

\subsubsection{Proof of Theorem~\ref{thm:rankBinary}}
We need to prove that for any threshold value $B$, the optimal $\balpha$ has $\alpha_j = 1$ for some $j$ and $\alpha_k = 0$ for $k \neq j$. In other words, $\beta = \beta_j$ for some $j \in [n-1]$.

From Definition~\ref{def:bt}, and equations (\ref{eq:betaS}) and (\ref{eq:betaR}), we have:
\begin{multline*}
    \int_0^1 \b1\{\beta(v) \ge B\} f(v) dv = \int_0^1 \b1\{v \ge \beta^{-1}(B)\} f(v) dv = \int_{\beta^{-1}(B)}^1 f(v) dv 
    = 1 - F(\beta^{-1}(B)).
\end{multline*}
Thus, to maximize the binary threshold objective, we need to minimize $F(\beta^{-1}(B))$, and as $F$ is non-decreasing, we need to minimize $\beta^{-1}(B)$. For any $\balpha$ we have $\alpha_j \ge 0$ and $\sum_j \alpha_j = 1$, and for every $j$, the $\beta_j$ functions are monotone. Therefore $\min_j \beta_j^{-1} (B) \le \beta^{-1}(B)$. Thus, the optimal contest is simple, and the output function for the optimal contest is $\beta_{j^*}$, where $j^*$ is:
\[ j^* = \argmin_j \beta_j^{-1}(B). \]

\subsubsection{Proof of Theorem~\ref{thm:rankLinearUp}}
Let us assume that $\balpha$ is optimal and $\beta$ is the induced output function. Let $V_H = \beta^{-1}(B_H)$. Let $I = \{i \in [n-1] \mid \alpha_i > 0\}$. If $|I|<2$, we are trivially done so we can assume w.l.o.g. that $|I| \ge 2$. Select arbitrary $i,j \in I$, $i \neq j$. Let  $\alpha_{ij} = \alpha_i + \alpha_j$ and $\gamma = \alpha_i / \alpha_{ij}$. Observe that $\alpha_i = \gamma \alpha_{ij}$ and $\alpha_j = (1-\gamma) \alpha_{ij}$. Also, as $\alpha_i, \alpha_j > 0$, we have $0 < \gamma < 1$.

Now, let us fix $\alpha_{ij}$ and $\alpha_k$ ($k \notin \{i,j\}$) and focus on $\gamma$. As $\balpha$ is optimal and $\gamma$ is strictly between $0$ and $1$, we have $\frac{\partial LT}{\partial \gamma} = 0$:
\begin{align}\label{prf:rankLinearUp:eq:1}
    &\frac{\partial  LT}{\partial \gamma} = \frac{\partial \int_0^1 \min(B_H,\sum_{k \in [n-1]} \alpha_k \beta_{k}(v)) f(v) dv}{\partial \gamma} = 0 \nonumber\\
    \implies &\frac{\partial \int_0^1 \min(B_H,\sum_{k \in [n-1] \setminus \{i,j\}} \alpha_k \beta_k(v) + \alpha_{ij} (\gamma \beta_i(v) + (1-\gamma) \beta_j(v))) f(v) dv}{\partial \gamma} = 0 \nonumber\\
    \implies &\int_0^{V_H} (\beta_i(v) - \beta_j(v)) f(v) dv = 0 \nonumber\\
    \implies &\int_0^{V_H} \beta_i(v) f(v) dv = \int_0^{V_H} \beta_j(v) f(v) dv.
\end{align}
We use the Leibniz integral rule for the partial derivative w.r.t.\ $\gamma$ above. The non-differentiability of the $\min$ function is not a concern because it happens at only one point of zero measure, and we have an integration over this function; see the footnote for an alternate explanation by a detailed application of the Leibniz integral rule.\footnote{\label{footnote:1}
    We have 
    \[
        \int_0^1 \min(B_H,\beta(v)) f(v) dv = \int_0^{V_H} \beta(v) f(v) dv + \int_{V_H}^1 B_H f(v) dv.
    \]   
    Taking derivative w.r.t.\ $\gamma$, using Leibniz integral rule, we have
    \begin{align*}
        \frac{\partial \int_0^1 \min(B_H,\beta(v)) f(v) dv}{\partial \gamma} &= \frac{\partial \int_0^{V_H} \beta(v) f(v) dv}{\partial \gamma} + \frac{\partial \int_{V_H}^1 B_H f(v) dv}{\partial \gamma}\\
        &= \beta(V_H) f(V_H) \frac{\partial V_H}{\partial \gamma} + \int_0^{V_H} \frac{\partial \beta(v)}{\partial \gamma} f(v) dv - B_H  f(V_H) \frac{\partial V_H}{\partial \gamma}.
    \end{align*}
    As $\beta(V_H) = B_H$, the $\beta(V_H) f(V_H) \frac{\partial V_H}{\partial \gamma}$ and $-B_H  f(V_H) \frac{\partial V_H}{\partial \gamma}$ terms cancel out, and we have 
    \[
        \frac{\partial \int_0^1 \min(B_H,\beta(v)) f(v) dv}{\partial \gamma} = \int_0^{V_H} \frac{\partial \beta(v)}{\partial \gamma} f(v) dv.
    \]
    The argument would be similar if we had both upper and lower thresholds (or equivalently, if we had both $\min$ and $\max$ operations inside the integral).
}

For any $i,j \in I$, we have $\int_0^{V_H} \beta_i(v) f(v) dv = \int_0^{V_H} \beta_j(v) f(v) dv = \int_0^{V_H} \beta(v) f(v) dv$.

Now, we will construct an $\balpha'$ with at most two of its components strictly greater than $0$. The intuition is that we need to satisfy two equations (corresponding to the designer's welfare by $\beta$ below and above $B_H$), which can be achieved by giving positive weights to at most two components of $\balpha$, as shown below. 
As $\beta$ is a convex combination of the $\beta_i$ output functions, we must have one of the following cases:
\begin{enumerate}
    \item $\beta_i(V_H) = \beta(V_H)$ for some $i \in I$. Then set $\alpha'_i = 1$ and  $\alpha'_k = 0$ for $k \in [n-1] \setminus \{i\}$.
    \item $\beta_i(V_H) > \beta(V_H)$ and $\beta_j(V_H) < \beta(V_H)$ for some $i, j \in I$. Because, if $\beta_i(V_H) > \beta(V_H)$ for all $i$, then the convex combination of these $\beta_i(V_H)$ values cannot be equal to $\beta(V_H)$. Set $\alpha'_i = \frac{B_H - \beta_j(V_H)}{\beta_i(V_H) - \beta_j(V_H)}$ and $\alpha'_j = (1 - \alpha'_i)$ and $\alpha'_k = 0$ for $k \in [n-1] \setminus \{i,j\}$.
\end{enumerate}
We can check that $\sum_{k \in [n-1]} \alpha_k \beta_k(V_H) = \sum_{k \in [n-1]} \alpha'_k \beta_k(V_H) = B_H$. So, we get $\beta'^{-1}(B_H) = \beta^{-1}(B_H) = V_H$, where $\beta'$ is the output function induced by $\balpha'$. Also, it is easy to verify that the objective value does not change:
\begin{align*}
    &\int_0^1 \min(B_H,\sum_{j \in [n-1]} \alpha_j \beta_{j}(v)) f(v) dv = \int_0^1 \min(B_H,\sum_{j \in I} \alpha_j \beta_{j}(v)) f(v) dv \\
    &= \int_0^{V_H} \sum_{j \in I} \alpha_j \beta_{j}(v) f(v) dv + B_H(1 - F(V_H)) \\
    &= \sum_{j \in I} \alpha_j  \int_0^{V_H} \beta_{j}(v) f(v) dv + B_H(1 - F(V_H)) \\
    &= \sum_{j \in I} \alpha'_j  \int_0^{V_H} \beta_{j}(v) f(v) dv + B_H(1 - F(V_H)) \quad \text{ (using equation \eqref{prf:rankLinearUp:eq:1})} \\
    &= \int_0^1 \min(B_H,\sum_{j \in [n-1]} \alpha'_j \beta_{j}(v)) f(v) dv.
\end{align*}

\subsubsection{Proof of Theorem~\ref{thm:rankLinear}}
This proof uses the same underlying argument as the proof of Theorem~\ref{thm:rankLinearUp}; the added details are to accommodate the new lower threshold.

Let us assume that $\balpha$ is optimal and $\beta$ is the induced output function. Let $V_{L} = \beta^{-1}(B_L)$ and $V_{H} = \beta^{-1}(B_H)$. Let $V_H = \beta^{-1}(B_H)$. Let $I = \{i \in [n-1] \mid \alpha_i > 0\}$; if $|I|<2$, we are trivially done, so assume w.l.o.g. that $|I| \ge 2$. Select arbitrary $i,j \in I$, $i \neq j$. Let $\alpha_{ij} = \alpha_i + \alpha_j$ and $\gamma = \alpha_i / \alpha_{ij}$. Observe that $\alpha_i = \gamma \alpha_{ij}$ and $\alpha_j = (1-\gamma) \alpha_{ij}$. Also, as $\alpha_i, \alpha_j > 0$, we have $0 < \gamma < 1$.

Let us fix $\alpha_{ij}$ and $\alpha_l$ ($l \notin \{i,j\}$) and focus on $\gamma$. As $\balpha$ is optimal and $\gamma$ is strictly between $0$ and $1$, we have $\frac{\partial LT}{\partial \gamma} = 0$:
\begin{align*}
    &\frac{\partial  LT}{\partial \gamma} = \frac{\partial \int_0^1 \max(B_L, \min(B_H,\sum_{l \in [n-1]} \alpha_l \beta_l(v))) f(v) dv}{\partial \gamma} = 0 \\
    &\implies \frac{\splitfrac{\partial \int_0^1 \max(B_L, \min(B_H, \sum_{k \in [n-1] \setminus \{i,j\}} \alpha_k \beta_k(v)}{+ \alpha_{ij} (\gamma \beta_i(v) + (1-\gamma) \beta_j(v))  )) f(v) dv}}{\partial \gamma} = 0 \\
    &\implies \int_{V_L}^{V_H} (\beta_i(v) - \beta_j(v)) f(v) dv = 0 \\
    &\implies \int_{V_L}^{V_H} \beta_i(v) f(v) dv = \int_{V_L}^{V_H} \beta_j(v) f(v) dv.
\end{align*}
We use the Leibniz integral rule for the partial derivative w.r.t.\ $\gamma$ above. The non-differentiability of the $\min$ and $\max$ functions is not a concern because that happens at only two points of zero measure, and we integrate over this; see footnote~\ref{footnote:1} for an alternate explanation by a detailed application of the Leibniz integral rule.

Thus, for every $i \in I$, we have $\int_{V_L}^{V_H} \beta_i(v) f(v) dv = \int_{V_L}^{V_H} \beta(v) f(v) dv$.

Now, let us look at the constraints that $\balpha$ satisfies:
\begin{enumerate}
    \item \label{thm:rankLinear:const:1} $\beta(V_L) = \sum_{i \in I} \alpha_i \beta_i(V_L) = B_L$;
    \item \label{thm:rankLinear:const:2} $\beta(V_H) = \sum_{i \in I} \alpha_i \beta_i(V_H) = B_H$;
    \item \label{thm:rankLinear:const:3} $\sum_{i \in I} \alpha_i = 1$;
    \item \label{thm:rankLinear:const:4} $\alpha_i \ge 0$ for $i \in I$; 
    \item \label{thm:rankLinear:const:5} $\alpha_i = 0$ for $i \notin I$. 
\end{enumerate}
Observe that any other $\balpha'$ that satisfies all the constraints given above will also be optimal (because $\int_{V_L}^{V_H} \beta_i(v) f(v) dv = \int_{V_L}^{V_H} \beta(v) f(v) dv$ for all $i \in I$, and any $\balpha'$ that satisfies the constraints will have the same value for the objective).
Let us now construct such an $\balpha'$ that satisfies all the constraints above but has at most three strictly positive components. Set $\alpha_i' = 0$ for $i \notin I$ as asked in constraint \ref{thm:rankLinear:const:5} above. Now, constraints \ref{thm:rankLinear:const:1} to \ref{thm:rankLinear:const:4} total $|I|+3$ constraints in $|I|$ variables. This region is a bounded $|I|$-dimensional (linear) polytope. As it is bounded, it has corner points. As it is $|I|$-dimensional, every corner point has $|I|$ tight constraints (satisfied with equality). $|I|-3$ of these tight constraints must be of the form $\alpha'_i = 0$. Hence, selecting such a corner point will give us a solution with at most $3$ of the coordinates strictly greater than $0$.

\subsubsection{Proof of Theorem~\ref{thm:rankSimpleVsOptimal}}
Let us take an optimal solution $\balpha$ with the minimum number of non-zero entries, i.e., the minimum number of indices $i$ with $\alpha_i > 0$.
\begin{enumerate}
    \item If there is only one such index, then we have an approximation ratio of $1$ and we are done. 
    \item Now suppose there are three such indices. 
    Let $i, j, k$ be the three indices for which $\alpha_i, \alpha_j, \alpha_k > 0$ in the optimal solution. W.l.o.g. let $\beta_i(V_L) \le \beta_j(V_L) \le \beta_k(V_L)$. We claim that $\beta_i(V_H) \ge \beta_j(V_H) \ge \beta_k(V_H)$. If this were not true, then for a pair, say $i,j$, we would have had $\beta_i(V_L) \le \beta_j(V_L)$ and $\beta_i(V_H) \le \beta_j(V_H)$, with at least one of the two inequalities strict. As the output functions are single-crossing, we would have $\beta_i(v) \le \beta_j(v)$ for all $v \in [V_L, V_H]$, and we could increase $\alpha_j$ by $\alpha_i$ and decrease $\alpha_i$ to $0$ to get a better solution. 
    
    As $\beta_i(V_L) \le \beta_j(V_L) \le \beta_k(V_L)$ and $\beta_i(V_H) \ge \beta_j(V_H) \ge \beta_k(V_H)$, their convex combination, $\beta$, has: $\beta_i(V_L) \le \beta(V_L) \le \beta_k(V_L)$ and $\beta_i(V_H) \ge \beta(V_H) \ge \beta_k(V_H)$.
    \item Finally, if there are only two positive entries, say $\alpha_i$ and $\alpha_k$, then also we can prove a similar condition: $\beta_i(V_L) \le \beta(V_L) \le \beta_k(V_L)$ and $\beta_i(V_H) \ge \beta(V_H) \ge \beta_k(V_H)$.
\end{enumerate}

\begin{figure}
\centering
    \includegraphics[width=0.75\textwidth]{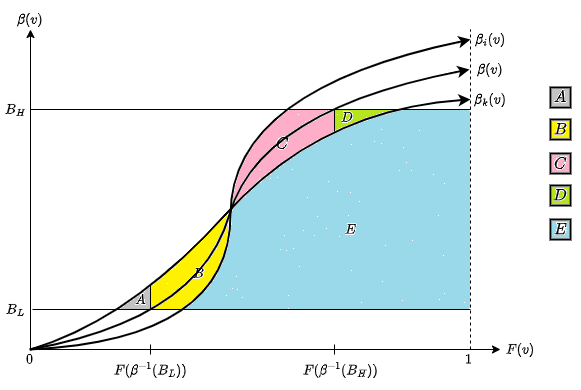}
	\caption{Plot of $\beta_i, \beta_k, \beta$ where $i < k$.}
	\label{fig:one}
\end{figure}

Look at a generic plot of $\beta_i, \beta_k, \beta$ given in Figure~\ref{fig:one}. With reference to the figure, we have:
\begin{itemize}
    \item $\int_0^1 \max(B_L, \min(B_H,\beta_i(v))) f(v) dv = B_L + E + C + D$;
    \item $\int_0^1 \max(B_L, \min(B_H,\beta_k(v))) f(v) dv = B_L + A + B + E$;
    \item $\int_0^1 \max(B_L, \min(B_H,\beta(v))) f(v) dv = B_L F(V_L) + \int_{V_L}^{V_H} \beta(v) f(v) dv + B_H (1-F(V_H))$, which is, by Theorem~\ref{thm:rankLinear}, equal to: $B_L F(V_L) + \int_{V_L}^{V_H} \beta_k(v) f(v) dv + B_H (1-F(V_H)) = B_L + B + E + D$.
\end{itemize}
The approximation ratio is at most
\begin{multline*}
    \frac{B + D + E + B_L}{\max(A + B + E + B_L, C + D + E + B_L)} 
    = \frac{E + B_L + B + D}{E + B_L + \max(A + B, C + D)} \\
    \le \frac{B + D}{\max(A + B, C + D)} 
    \le \frac{B + D}{\max(B, D)} \le 2.
\end{multline*}

\section{Appendix for Section~\ref{sec:opt}}
\subsection{Linear Threshold Objective: Irregular Distributions}\label{sec:app:irregular}

In the study of the optimal linear threshold contest for regular distributions, we used the regularity condition at two places: first, in Lemma~\ref{thm:optLinearRegRight}, to pack the area under $\xi$ to the right; second, in Theorem~$\ref{thm:optLinearReg}$, to prove that we obtain the optimal values for $V_L$ and $V_H$ by solving for the roots of the specific equations given in the theorem statement using an efficient root-finding method. For irregular $F$, first, we give Lemma~\ref{thm:optLinearIrRight} analogous to Lemma~\ref{thm:optLinearRegRight}; second, we find an approximate solution by discretizing the feasible space of $V_L$, $V_H$, and $\xi(V_H)$.\footnote{We would like to note here that the algorithm we provide finds an approximate solution in time polynomial in the reciprocal of the parameter used to discretize $V_L$, $V_H$, and $\xi(V_H)$. One could have discretized the entire optimization problem (the functions are evaluated at discrete values, the integrals are written as finite summations, etc.) and directly found an approximately optimal discretized $\xi$ and $\beta$ using linear programming, also in polynomial time. The advantage of our analysis is that it characterizes the optimal solution and gives a more intuitive algorithm.}

We first introduce some additional notation. Consider the function $\psi_{U,V}(v) = v - \frac{F(V) - F(v)}{f(v)}$, defined on the interval $[U,V]$. We now define $\overline{\psi}_{U,V}$, which is the ironed version of $\psi_{U,V}$, also defined on $[U,V]$. Our definition proceeds in several steps.
\begin{enumerate}
    \item Let $h_{U,V}(y) = \psi_{U,V}(F^{-1}(y))$
    and $H_{U,V}(y) = \int_{F^{-1}(U)}^y h_{U,V}(y) dy$;
    \item Let $\overline{H}_{U,V}(y)$ be the point-wise maximum convex function less than or equal to $H_{U,V}(y)$. Note that at the boundary $\overline{H}_{U,V}(F^{-1}(U)) = H_{U,V}(F^{-1}(U))$ and $\overline{H}_{U,V}(F^{-1}(V)) = H_{U,V}(F^{-1}(V))$;
    \item Let $\overline{h}_{U,V}(y) = \overline{H}_{U,V}'(y)$ 
    and $\overline{\psi}_{U,V}(v) = \overline{h}_{U,V}(F(v))$.
\end{enumerate}
Let $l_{U,V}(v) = \min_{u \in [U,V], \overline{\psi}_{U,V}(u) = \overline{\psi}_{U,V}(v)} u$ and $r_{U,V}(v) = \max_{u \in [U,V], \overline{\psi}_{U,V}(u) = \overline{\psi}_{U,V}(v)} u$, and let $l(u) = l_{0,1}(u)$ and $r(u) = r_{0,1}(u)$.

\begin{lemma}\label{thm:optLinearIrRight}
For an irregular distribution $F$, in the unit-sum setting there is an optimal $\xi$ such that $\int_v^1 \xi(t) f(t) dt = \frac{1}{n}(1-F(v)^n)$ for all $v$ where $\beta(v) < B_H$, $\xi(v) > 0$, and $\overline{H}(F^{-1}(v)) = H(F^{-1}(v))$.
\end{lemma}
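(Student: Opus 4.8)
The plan is to follow the exchange argument behind the regular-case Lemma~\ref{thm:optLinearRegRight}, replacing the now non-monotone virtual ability by its ironed version $\overline{\psi}$ and restricting the conclusion to abilities that survive ironing. First I would recast the objective in a form suited to ironing. Using Lemma~\ref{thm:optLinearRegUp} (so that $\beta \le B_H$ almost surely) together with the structural reductions of Lemmas~\ref{thm:optLinear1} and~\ref{thm:optLinear2} (so that $\xi = 0$ below $V_L$, $\xi$ is constant above $V_H$, and hence $\beta \equiv B_H$ on $[V_H,1]$), the objective~\eqref{eq:optLinearObj} collapses to $B_L F(V_L) + \int_{V_L}^1 \beta(v) f(v)\,dv$. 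Substituting $\beta(v) = v\xi(v) - \int_0^v \xi(t)\,dt$ from~\eqref{eq:optOutput} and integrating by parts over $[V_L,1]$ (the upper boundary now contributes $1-F$, not $F(V_H)-F$, because the integral runs all the way to $1$) gives
\[ \int_{V_L}^1 \beta(v) f(v)\,dv = \int_{V_L}^1 \xi(v)\,\psi(v)\,f(v)\,dv, \]
where $\psi = \psi_{0,1}$ is the ordinary virtual ability. This is exactly why the lemma is phrased in terms of the global ironing $H = H_{0,1}$ and $\overline{H} = \overline{H}_{0,1}$ rather than a truncated variant.

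Next I would invoke the standard convex-hull identity. Writing $y = F(v)$ and $\tilde{\xi}(y) = \xi(F^{-1}(y))$ (non-decreasing), integration by parts, together with $\overline{H} = H$ at the endpoints and $\overline{h} = \overline{H}'$, yields
\[ \int_{V_L}^1 \xi\,\psi\,f\,dv = \int_{V_L}^1 \xi\,\overline{\psi}\,f\,dv - \int \big(H - \overline{H}\big)\,d\tilde{\xi}. \]
Since $H - \overline{H} \ge 0$, vanishes off the ironed intervals, and $\tilde{\xi}$ is non-decreasing, the correction term is non-positive and is \emph{unchanged} by any modification of $\xi$ that keeps it constant on each ironed interval. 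The decisive point is that $\overline{\psi}$ is non-decreasing, so transferring allocation mass from a lower ability to a higher ability never decreases $\int \xi\,\overline{\psi}\,f\,dv$.

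With these two ingredients I would run the exchange step. Suppose $\xi$ is optimal and that at some $v_0$ with $\beta(v_0) < B_H$ (hence $v_0 < V_H$), $\xi(v_0) > 0$, and $\overline{H}(F^{-1}(v_0)) = H(F^{-1}(v_0))$ the constraint~\eqref{eq:optAvgAlloc1} is slack, $\int_{v_0}^1 \xi f < \frac{1}{n}(1 - F(v_0)^n)$. Because $v_0$ is not interior to an ironed interval, one can perturb $\xi$ locally around $v_0$ — lowering it just below and raising it just above — so as to push mass rightward and make the constraint tight at $v_0$, while (i) keeping $\xi$ non-decreasing, (ii) leaving $\xi$ constant on every ironed interval so the correction term is untouched, and (iii) remaining feasible, which by Theorem~\ref{thm:matthews} requires only~\eqref{eq:optAvgAlloc1}. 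The slack supplies the room for (iii), and monotonicity of $\overline{\psi}$ ensures the ironed objective does not fall; the perturbed $\xi$ is therefore still optimal and now tight at $v_0$. Carrying this out at every qualifying ability yields an optimal $\xi$ with the asserted tightness, whose rightward-packed form on the tight, non-ironed portion is $\xi(v) = F(v)^{n-1}$.

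The hard part will be making the perturbation fully rigorous: the move must simultaneously preserve monotonicity of $\xi$, respect~\eqref{eq:optAvgAlloc1} at \emph{every} ability rather than at $v_0$ alone, and never disturb $\xi$ on an ironed interval whose endpoints may abut $v_0$. Controlling the transitions at the boundaries of the ironed intervals — where the flat value of $\xi$ must agree with $F(\cdot)^{n-1}$ to stay monotone — and upgrading the one-point improvement into a single optimal $\xi$ that is tight at all qualifying points at once (instead of only almost everywhere, or one point at a time) is the delicate part. A clean way to sidestep the iteration-and-limit bookkeeping is to define the candidate $\xi$ directly (tight and equal to $F(\cdot)^{n-1}$ where non-ironed, flat on each ironed interval) and then certify its optimality through the identity above, rather than perturbing an arbitrary optimum.
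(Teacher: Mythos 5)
Your proposal is correct and follows essentially the same route as the paper, whose own argument is a sketch that adapts the regular-case mass-transfer proof of Lemma~\ref{thm:optLinearRegRight} by (a) flattening $\xi$ on each ironed interval $[l(v),r(v)]$ and (b) transferring allocation mass rightward in aggregate between ironed intervals, relying on the monotonicity of $\overline{\psi}$ exactly as you do. Your virtual-surplus reformulation $\int \xi\,\psi\,f\,dv = \int \xi\,\overline{\psi}\,f\,dv - \int (H-\overline{H})\,d\tilde{\xi}$ is a cleaner way to justify the paper's unproved claim that flattening does not decrease the objective, and the delicate points you flag (preserving monotonicity, enforcing~\eqref{eq:optAvgAlloc1} at every point, and the ironed interval abutting $V_L$) are precisely the ones the paper also treats only informally, via its choice of $u$, $v$, $\epsilon$, $\delta$ and its footnoted corner case.
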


\sloppy
Note that Lemma~\ref{thm:optLinearIrRight} does not apply to points $v \in [0,1]$ where $\overline{H}(F^{-1}(v)) < H(F^{-1}(v))$. So, unlike the case for regular distributions, where we had $\int_{V_H}^1 \xi(t) f(t) dt = \frac{1}{n}(1-F(V_H)^n)$ by applying Lemma~\ref{thm:optLinearRegRight} for $v \rightarrow V_H$, we may not have a similar result for irregular distributions.

For regular $F$, the allocation function $\bx$ has three cases depending upon the highest ability, as given in (\ref{sol:linearRegSum}). On the other hand, for irregular $F$, the allocation function $\bx$ has five cases depending upon the highest ability, as given in the theorem below:
\begin{theorem}\label{thm:optLinearIr}
The contest that optimizes the linear threshold objective for irregular distributions and unit-sum constraints has the following allocation function:
\begin{equation}\label{sol:linearIrSum}
    x_i(\bv) = 
    \begin{cases} 0,  &\text{if $\max_j(v_j) < V_L$ or $i \notin W$}\\
    \frac{n \xi(V_L) (F(r_{V_L,V_H}(V_L)) - F(V_L))}{|W_L| (F(r_{V_L,V_H}(V_L))^n - F(V_L)^n)}, &\text{if $ V_L \le \max_j(v_j) < r_{V_L,V_H}(V_L)$ and $i \in W_L$} \\
    1/|W|, &\text{if $r_{V_L,V_H}(V_L) \le \max_j(v_j) < l_{V_L,V_H}(V_H)$ and $i \in W$} \\
    \frac{n \xi(l_{V_L,V_H}(V_H)) (F(V_H) - F(l_{V_L,V_H}(V_H)))}{|W_H|(F(V_H)^n - F(l_{V_L,V_H}(V_H))^n)}, &\text{if $l_{V_L,V_H}(V_H) \le \max_j(v_j) < V_H$ and $i \in W_H$} \\
    \frac{n \xi(V_H) (1 - F(V_H))}{|\widehat{W}|(1 - F(V_H)^n)}, &\text{if $V_H \le \max_j(v_j)$ and $i \in \widehat{W}$}
    \end{cases}
\end{equation}
where $W = \{ k \mid  r_{V_L,V_H}(V_L) \le v_k < l_{V_L,V_H}(V_H), \psi_{V_L,V_H}(v_k) = \max_j(\psi_{V_L,V_H}(v_j)) \}$, $W_L = \{ k \mid v_k \in [V_L,r_{V_L,V_H}(V_L)]\}$, $W_H = \{ k \mid v_k \in [l_{V_L,V_H}(V_H),V_H]\}$, and $\widehat{W} = \{ k \mid v_k \ge V_H \}$. 
\end{theorem}

We can transform the allocation function $\bx(\bv)$ given in the Theorem~\ref{thm:optLinearIr} to an allocation function based on outputs in a manner analogous to (\ref{sol:linearRegSumOutput}). The optimal contest for unit-range can be derived by combining ideas for unit-range with regular $F$ and unit-sum with irregular $F$.

Now, we sketch an approximation algorithm to find the parameters used in Theorem~\ref{thm:optLinearIr}.
We perform an approximate search on the three parameters $V_H$, $\xi(V_H)$, and $V_L$, to maximize the linear threshold objective, using the following algorithm: 
\begin{enumerate}
    \item Select values for $V_H$ and $\xi(V_H)$. 
    \item Assume that $\xi$ is constant in the interval $[l(V_H),V_H)$. Applying Lemma~\ref{thm:optLinearIrRight} to the point $l(V_H)$, compute $\xi(l(V_H))$:
    \begin{multline*}
        \int_{l(V_H)}^1 \xi(t) f(t) dt = \xi(l(V_H))(F(V_H) - F(l(V_H))) + \xi(V_H)(1- F(V_H)) 
        = \frac{1-F(l(V_H))^n}{n(1-F(l(V_H)))} \\
        \implies \xi(l(V_H)) = \frac{1-F(l(V_H))^n}{n(1-F(l(V_H)))(F(V_H) - F(l(V_H)))} - \frac{\xi(V_H)(1- F(V_H))}{F(V_H) - F(l(V_H))}.
    \end{multline*}
    
    \item Compute $\overline{V_L}$ by solving $\beta(V_H) = V_H \xi(V_H) - \int_{\overline{V_L}}^{V_H} \xi(v)dv$ by the bisection method, where the value of $\xi(v)$ for $\overline{V_L} \le v < l(V_H)$ is given as
    \begin{equation*}
        \xi(v) = \begin{cases}
            F(v)^{n-1}, &\text{ if $\overline{H}(F^{-1}(v)) = H(F^{-1}(v))$ or $v \in [l(\overline{V_L}),r(\overline{V_L})]$} \\
            \frac{F(r(v))^n - F(l(v))^n}{n(F(r(v)) - F(l(v)))}, &\text{ if $\overline{H}(F^{-1}(v)) < H(F^{-1}(v))$ and $v > r(\overline{V_L})$}
        \end{cases}
    \end{equation*}
    \item If $\overline{H}(F^{-1}(\overline{V_L})) = H(F^{-1}(\overline{V_L}))$, then set $V_L = \overline{V_L}$, otherwise search for the $V_L \in [l(\overline{V_L}),r(\overline{V_L})]$ (and automatically for $\xi(V_L) \ge B_L$) that satisfies $\int_V^{r(\overline{V_L})} F(v)^{n-1} dv = \int_{V_L}^{r_{V_L,V_H}(V_L)} \xi(V_L) dv + \int_{r_{V_L,V_H}(V_L)}^{r(\overline{V_L})} F(v)^{n-1} dv$ and maximizes $B_L F(V_L) + \int_{V_L}^{r(\overline{V_L})} \beta(v) f(v) dv$. This step selects $V_L$ to optimally redistribute the area under $\xi$ in the interval $[\overline{V_L},r(\overline{V_L})]$.
    \footnote{We skipped a corner case: if $l(V_H)\xi(l(V_H)) < B_L$, then $V_L$ must be greater than $l(V_H)$. Find the $V_L$ in $[l(V_H),V_H]$ that optimizes the objective following a procedure similar to step (4). Also, we need to redistribute $\xi$ in $[l(V_H),V_H)$ according to $\overline{\psi}_{V_L,V_H}$ to find $\xi(l_{V_L,V_H}(V_H))$ in a manner similar to step (2).} 
\end{enumerate}

\subsection{Omitted Proofs}
\subsubsection{Proof of Theorem~\ref{thm:optBinary}}
For unit-range allocations, we optimize $\xi$ subject to the constraints: $0 \le \xi(v) \le 1$. We have 
\[
    \beta(v) = v \xi(v) - \int_0^v \xi(t)dt \le v \xi(v) \le v,
\] 
where the first inequality holds because $\xi(t) \ge 0$ and the second inequality holds because $\xi(t) \le 1$ for all $t \in [0,1]$. We have $\beta(v) \le v \implies B \le \beta^{-1}(B)$. Set $\xi(v) = 0$ for $v < B$ and $\xi(v) = 1$ for $v \ge B$. We have $\xi(B) = 1$ and $\int_0^B \xi(t)dt = 0$, so we get $\beta(B) = B \xi(B) - \int_0^B \xi(t)dt = B \implies \beta^{-1}(B) \le B$. As we have already seen that $\beta^{-1}(B) \ge B$, this is optimal.

\sloppy
For unit-sum, we have an additional constraint on $\xi$, inequality~(\ref{eq:optAvgAlloc1}): $\int_V^1 \xi(v) f(v) dv \le \frac{1-F(V)^n}{n}$ for every $V$. 
\begin{lemma}\label{thm:optBinarySum1}
If $\xi$ is optimal, we can assume w.l.o.g. that $\xi(v) = 0$ for $v < \beta^{-1}(B)$.
\end{lemma}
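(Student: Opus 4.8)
The plan is to show that from any optimal $\xi$ we can construct another \emph{feasible} expected allocation $\xi'$ achieving the same (optimal) objective value but satisfying $\xi'(v)=0$ for $v<\beta^{-1}(B)$. Write $V^\star=\beta^{-1}(B)$; by the definition of $\beta^{-1}$ as the minimum value with $\beta(V^\star)\ge B$, we have $\beta(v)<B$ for all $v<V^\star$. The intuition is that allocation placed on abilities below $V^\star$ is wasted for the binary objective: the induced outputs lie strictly below $B$ and contribute nothing, while consuming budget that only tightens the unit-sum constraint~(\ref{eq:optAvgAlloc1}) for larger abilities. Recall that maximizing the binary threshold objective is equivalent to minimizing $\beta^{-1}(B)$, so ``not getting worse'' means $\beta'^{-1}(B)\le V^\star$.

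First I would define the truncated allocation $\xi'(v)=0$ for $v<V^\star$ and $\xi'(v)=\xi(v)$ for $v\ge V^\star$, and verify feasibility. Monotonicity is immediate: $\xi'$ is $0$ on $[0,V^\star)$, agrees with the non-decreasing $\xi$ on $[V^\star,1]$, and $0\le\xi(V^\star)$, so there is no downward jump at the seam; the bounds $0\le\xi'\le1$ are inherited. For the unit-sum constraint~(\ref{eq:optAvgAlloc1}) I would note that for $V\ge V^\star$ the quantity $\int_V^1\xi'(v)f(v)\,dv$ is unchanged, while for $V<V^\star$,
\[
\int_V^1\xi'(v)f(v)\,dv=\int_{V^\star}^1\xi(v)f(v)\,dv\le\int_V^1\xi(v)f(v)\,dv\le\frac{1-F(V)^n}{n},
\]
so the inequality continues to hold, in fact with additional slack.

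Next I would verify the objective does not worsen. Using $\beta'(v)=v\xi'(v)-\int_0^v\xi'(t)\,dt$ together with $\int_0^{V^\star}\xi'(t)\,dt=0$,
\[
\beta'(V^\star)=V^\star\xi(V^\star)=\beta(V^\star)+\int_0^{V^\star}\xi(t)\,dt\ge B,
\]
where the last step uses $\beta(V^\star)\ge B$ and $\int_0^{V^\star}\xi(t)\,dt\ge0$. Hence $\beta'^{-1}(B)\le V^\star$. Since $\xi$ is optimal, $V^\star$ is the minimum attainable value of $\beta^{-1}(B)$, which forces $\beta'^{-1}(B)=V^\star$; thus $\xi'$ is also optimal and satisfies $\xi'(v)=0$ for $v<\beta'^{-1}(B)=\beta^{-1}(B)$, as claimed. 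As a sanity check, for $v>V^\star$ the truncation strictly increases the output, $\beta'(v)=v\xi(v)-\int_{V^\star}^v\xi(t)\,dt>\beta(v)$, so nothing above the threshold is lost.

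The main obstacle---indeed the only place needing care---is confirming that the truncation preserves feasibility: the unit-sum inequality~(\ref{eq:optAvgAlloc1}) across all $V$ and the non-decreasing property at the seam $V^\star$. Once feasibility is established, the objective comparison is the one-line computation above, driven entirely by $\beta(V^\star)\ge B$ and the nonnegativity of the discarded integral. A minor edge case to dispatch is when $B$ lies outside the range of $\beta$ for every feasible allocation, in which case the objective is identically $0$ and the statement holds vacuously.
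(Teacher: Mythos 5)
Your proof is correct and follows essentially the same route as the paper's: the paper's one-line argument is precisely that $\beta(V) \le V\xi(V)$, so truncating $\xi$ to zero below $V = \beta^{-1}(B)$ keeps $\beta(V) \ge B$ and hence does not increase $\beta^{-1}(B)$. You additionally verify feasibility of the truncated allocation (monotonicity at the seam and preservation of inequality~(\ref{eq:optAvgAlloc1})), which the paper leaves implicit; those checks are correct and a reasonable amount of added care.
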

\begin{proof}
We know that $\beta(V) = V \xi(V) - \int_0^V \xi(x) dx \le V \xi(V)$. Hence, by setting $\xi(v) = 0$ for $v < V$ we still have $\beta^{-1}(B) = V$.
\end{proof}

\begin{lemma}\label{thm:optBinarySum2}
If $\xi$ is optimal, we can assume w.l.o.g. that $\xi(v) = \xi(V)$ for $v \ge \beta^{-1}(B)$, i.e., $\xi(v)$ is constant for $v \ge \beta^{-1}(B)$.
\end{lemma}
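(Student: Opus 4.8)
The plan is to take an optimal $\xi$, invoke Lemma~\ref{thm:optBinarySum1} to assume $\xi(v)=0$ for $v<V$ where $V=\beta^{-1}(B)$, and then \emph{truncate} $\xi$ from above at its own value $\xi(V)$, arguing that the truncation preserves both feasibility and optimality. Concretely, I would define $\tilde\xi(v)=\min(\xi(v),\xi(V))$. Since $\xi$ is non-decreasing, $\tilde\xi$ agrees with $\xi$ on $[0,V]$ and equals the constant $\xi(V)$ on $(V,1]$, which is exactly the claimed form, and it is non-decreasing as the minimum of a non-decreasing function and a constant.

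First, I would pin down $\xi(V)$. Because $\xi\equiv 0$ below $V$, equation~\eqref{eq:optOutput} gives $\beta(V)=V\xi(V)-\int_0^V\xi(t)\,dt=V\xi(V)$, and $\beta(V)=B$ forces $\xi(V)=B/V$. The bound $\beta(v)\le v$ (which follows from $\xi\ge 0$ and $\xi\le 1$, exactly as in the proof of Theorem~\ref{thm:optBinary}) yields $B=\beta(V)\le V$, hence $\xi(V)=B/V\in[0,1]$, so the truncated function still respects $0\le\tilde\xi\le 1$.

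The key point is that truncation only \emph{decreases} $\xi$: because $\xi$ is non-decreasing, $\tilde\xi(v)\le\xi(v)$ for every $v$. Consequently, for every $W\in[0,1]$ we have $\int_W^1\tilde\xi(v)f(v)\,dv\le\int_W^1\xi(v)f(v)\,dv\le\frac{1-F(W)^n}{n}$, so $\tilde\xi$ inherits the feasibility constraint~\eqref{eq:optAvgAlloc1} from $\xi$ for free. This pointwise-domination observation is the crux, and I expect it to be the one step that needs care: one must confirm that passing to a \emph{smaller} allocation never violates~\eqref{eq:optAvgAlloc1}, which it cannot, since the constraint is an upper bound on a monotone integral of $\xi$.

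Finally, I would verify that the objective is unchanged. Since $\tilde\xi$ coincides with $\xi$ on $[0,V]$, the induced output $\tilde\beta$ coincides with $\beta$ there, so $\tilde\beta(V)=B$ while $\tilde\beta(v)<B$ for $v<V$; and as $\tilde\xi$ is constant on $(V,1]$, equation~\eqref{eq:optOutput} makes $\tilde\beta\equiv B$ on $(V,1]$, so in particular $\tilde\beta$ is non-decreasing. Hence $\tilde\beta^{-1}(B)=V$, and the binary threshold value $1-F(\tilde\beta^{-1}(B))=1-F(V)$ matches that of $\xi$. Therefore $\tilde\xi$ is optimal and constant above $V$, which is what we wanted.
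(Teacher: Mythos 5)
Your proof is correct and follows the same skeleton as the paper's---replace $\xi$ on $[\beta^{-1}(B),1]$ by a constant and check that feasibility and $\beta^{-1}(B)$ are preserved---but you choose a different constant. The paper \emph{averages}: it sets $\bar{\xi}(v)=\frac{1}{1-F(V)}\int_V^1\xi(t)f(t)\,dt$ for $v\ge V$, which preserves $\int_V^1\xi f$ exactly and, by monotonicity of $\xi$, weakly raises $\xi(V)$ and hence $\beta(V)$; it then still needs the (omitted) check that the averaged function satisfies~\eqref{eq:optAvgAlloc1} at cutoffs $W>V$, which follows because $\frac{1-x^n}{n(1-x)}$ is increasing in $x$. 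You instead \emph{truncate}, $\tilde\xi=\min(\xi,\xi(V))$, which is pointwise dominated by $\xi$, so~\eqref{eq:optAvgAlloc1} and the bound $\tilde\xi\le 1$ are inherited for free and $\tilde\beta(V)=\beta(V)$ exactly; this is arguably the cleaner feasibility argument. Both transformations leave the objective $1-F(\beta^{-1}(B))$ unchanged, and the downstream argument in Theorem~\ref{thm:optBinary} only uses $\xi(V)\le\frac{1-F(V)^n}{n(1-F(V))}$, so either version supports the rest of the proof.

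One small imprecision: since $\beta^{-1}(B)$ is defined as the \emph{minimum} $x$ with $\beta(x)\ge B$, and $\beta$ may jump at $V$ (indeed $\xi$ jumps there after applying Lemma~\ref{thm:optBinarySum1}), you only get $\beta(V)\ge B$, not $\beta(V)=B$. Hence $\xi(V)\ge B/V$ rather than $\xi(V)=B/V$, and $\tilde\beta\equiv\tilde\beta(V)\ge B$ on $[V,1]$ rather than $\tilde\beta\equiv B$. This does not affect your argument, since all you need is $\tilde\beta(V)\ge B$ together with $\tilde\beta(v)=\beta(v)<B$ for $v<V$.
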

\begin{proof}
We define a transformed expected allocation function $\bar{\xi}$, where $\bar{\xi}(v) = \xi(v)$ for $v < V$, and $\bar{\xi}(v) = \frac{1}{1-F(V)} \int_V^1 \xi(t) f(t) dt$ for $v \ge V$. Let $\bar{\beta}$ be the output function for $\bar{\xi}$. As $\xi$ is monotone, $\bar{\xi}(V) \ge \xi(V)$, and therefore $\bar{\beta}(V) \ge \beta(V) \ge B$. So, we still have $\bar{\beta}^{-1}(B) = V$.
\end{proof}

From the previous two lemmas, we know that the allocation function equally distributes the prize among the players who have an ability above some value $V$, where $\beta(v) \ge B$ for $v \ge V$ and $\beta(v) < B$ otherwise. As $\xi$ satisfies $\int_V^1 \xi(v) f(v) dv \le \frac{1-F(V)^n}{n}$, we have:
\[ \xi(V) \le \frac{1 - F(V)^n}{n(1 - F(V))} \implies B \le \beta(V) \le \frac{V(1 - F(V)^n)}{n(1 - F(V))}.\]
As the expression $\frac{V(1 - F(V)^n)}{n(1 - F(V))}$ increases with $V$, and we want to minimize $V$, it is optimal to satisfy the above inequality with an equality. This gives us $B = \frac{V(1 - F(V)^n)}{n(1 - F(V))}$.

Also, as $\frac{V(1 - F(V)^n)}{n(1 - F(V))}$ is continuous and non-decreasing in $[0,1]$, we can efficiently find $V$. However, we do not need to explicitly compute $V$ because the contest that equally distributes the prize to all players who generate an output above $B$, if there are any such players in a given ability profile, automatically induces the required contest.

\subsubsection{Proof of Lemma~\ref{thm:optLinearRegRight}}
We will show that pushing the area under $\xi$ to the right does not decrease the objective. For the initial portion of the proof, let us disregard these two constraints: $\xi$ is monotone and $\int_v^1 \xi(t) f(t) dt \le \frac{1}{n}(1-F(v)^n)$. At the end, we will briefly explain how to incorporate these constraints into the proof.

Take two points $u$ and $v$ such that $u < v$, $\xi(u) > 0$ and $\beta(v) < B_H$. For very small $\delta$ and $\epsilon$ greater than $0$, let us decrease the area under $\xi$ in a small neighborhood $[u-\epsilon,u+\epsilon)$ of $u$ by $\delta/f(u)$ and increase the area under $\xi$ in a small neighborhood $[v-\epsilon,v+\epsilon)$ of $v$ by $\delta/f(v)$. Let $\bar{\xi}$ and $\bar{\beta}$ be the transformed $\xi$ and $\beta$ after the update. Select $\epsilon$ and $\delta$ small enough to maintain $\bar{\xi}(u-\epsilon) \ge 0$ and $\bar{\beta}(v+\epsilon) \le B_H$.

Let us compute the change in the linear threshold objective value. To do that, first, let us look at $\bar{\beta}(y) - \beta(y)$ for $y \in [0,1]$:
\begin{equation*}
    \bar{\beta}(y) - \beta(y) = 
    \begin{cases}
    0, &\text{ if } y \in [0, u-\epsilon) \\
    \frac{-u \delta}{2 \epsilon f(u)}, &\text{ if } y \in [u-\epsilon,u+\epsilon) \\
    \frac{\delta}{f(u)},  &\text{ if } y \in [u+\epsilon,v-\epsilon) \\
    \frac{v \delta}{2 \epsilon f(v)} + \frac{\delta}{f(u)}, &\text{ if } y \in [v-\epsilon,v+\epsilon) \\
    \frac{\delta}{f(u)} - \frac{\delta}{f(v)},  &\text{ if } y \in [v+\epsilon,1] \\
    \end{cases}
\end{equation*}

As we are moving area from left to right, i.e., $u < v$, it is easy to check that for the lower threshold $B_L$, $V_L = \bar{\beta}^{-1}(B_L)$ does not decrease, so neither does $B_L F(V_L)$. For the remaining portion of the linear threshold objective (\ref{eq:optLinearObj}), we have the following:
\begin{align*}
    &\int_{y \ge V_L} (\min(B_H,\bar{\beta}(y)) - \min(B_H,\beta(y))) f(y) dy \\
    &= \int_{y = u-\epsilon}^{u+\epsilon} \frac{-u \delta}{2 \epsilon f(u)} f(y) dy 
        + \int_{y = u+\epsilon}^{v-\epsilon} \frac{\delta}{f(u)} f(y) dy 
        + \int_{y = v-\epsilon}^{v+\epsilon} (\frac{v \delta}{2 \epsilon f(v)} + \frac{\delta}{f(u)}) f(y) dy \\
        & \quad + \int_{y = v+\epsilon}^{1} (\min(B_H,\bar{\beta}(y)) - \min(B_H,\beta(y))) f(y) dy,  \quad \text{ because } \bar{\beta}(v+\epsilon) \le B_H\\
    &= -u \delta + \frac{\delta}{f(u)} (F(v) - F(u)) + v \delta \\
        &\quad + \int_{y = v}^{1} (\min(B_H,\bar{\beta}(y)) - \min(B_H,\beta(y))) f(y) dy,  \quad \text{ as } \epsilon \rightarrow 0.
\end{align*}
We will now consider two cases: (1) $f(u) \le f(v)$ and (2) $f(u) > f(v)$. 

Case (1): $f(u) \le f(v) \implies \delta/f(u) \ge \delta/f(v)$. For $y \ge v+\epsilon$, $\bar{\beta}(y) - \beta(y) = \delta/f(u) - \delta/f(v) \ge 0$, so $\min(B_H,\bar{\beta}(y)) - \min(B_H,\beta(y)) \ge 0$. The total change in the objective is:
\begin{multline*}
    -u \delta + \frac{\delta}{f(u)} (F(v) - F(u)) + v \delta + \int_{y = v}^{1} (\min(B_H,\bar{\beta}(y)) - \min(B_H,\beta(y))) f(y) dy \\
    \ge 0, \quad \text{ as } v > u, F(v) \ge F(u), \text{ and } \min(B_H,\bar{\beta}(y)) \ge \min(B_H,\beta(y)).
\end{multline*}

Case (2): $f(u) > f(v) \implies \delta/f(u) \le \delta/f(v)$. For $y \ge v+\epsilon$, $\bar{\beta}(y) - \beta(y) = \delta/f(u) - \delta/f(v) \le 0$, so $\min(B_H,\bar{\beta}(y)) - \min(B_H,\beta(y)) \ge \frac{\delta}{f(u)} - \frac{\delta}{f(v)}$. The total change in the objective is:
\begin{align*}
    & \quad -u \delta + \frac{\delta}{f(u)} (F(v) - F(u)) + v \delta + \int_{y = v}^{1} (\min(B_H,\bar{\beta}(y)) - \min(B_H,\beta(y))) f(y) dy \\
    &= -u \delta + \frac{\delta}{f(u)} (F(v) - F(u)) + v \delta + \int_{y = v}^{1} (\frac{\delta}{f(u)} - \frac{\delta}{f(v)}) f(y) dy \\
    &= -u \delta + \frac{\delta}{f(u)} (F(v) - F(u)) + v \delta + (\frac{\delta}{f(u)} - \frac{\delta}{f(v)}) (1 - F(v)) \\
    &= \delta(-u + \frac{F(v) - F(u)}{f(u)} + v + \frac{1-F(v)}{f(u)} - \frac{1-F(v)}{f(v)}) \\
    &= \delta(v - \frac{1-F(v)}{f(v)} - (u - \frac{1-F(u)}{f(u)})) \\
    &= \delta(\psi(v) - \psi(u)) \ge 0,  \quad \text{ because $F$ is regular.}
\end{align*}
We proved that transforming $\xi$ in this manner does not decrease the objective value. 

For the rest of the proof, let $a(y) = \int_y^1 \xi(t) f(t) dt$ and $b(y) = (1-F(y)^n)/n$. 

Now, we explain how to incorporate the two constraints we disregarded in the beginning: non-decreasing property of $\xi$ and $a(y) \le b(y)$. For the $a(y) \le b(y)$ constraint: select $v$ to be close to the rightmost\footnote{If the set is open, we select $v$ in limit. Same for $u$.} point for which $a(v) < b(v)$ and has $\beta(v) < B_H$, select $u$ to be close to the rightmost point less than $v$ for which $a(u) = b(u)$ and $\beta(u) > 0$, if there is no such point $u$ less than $v$ with $a(u) = b(u)$, then select any point with $\beta(u) > 0$. If we select a small enough area to move from $u$ to $v$ (parameterized by $\delta$ in the first part of the proof) and select the neighborhood of $u$ and $v$ suitably (parameterized by $\epsilon$ earlier) we can satisfy the constraint. For the monotonically non-decreasing property, selecting $v$ to be close to the rightmost point with $a(v) < b(v)$ and $\beta(v) < B_H$, and increasing $\xi$ in very small increments, and repeating until convergence, maintains the non-decreasing property of $\xi$ in the aggregate.

Starting from an arbitrary $\xi$, one can reach a $\xi$ that satisfies the condition given in the statement of the lemma by transformations to $\xi$ as given above. This completes the proof.

\subsubsection{Proof of Lemma~\ref{thm:optLinearRegUp}}
Let us assume that the lemma is false, which gives us $\int_{V_H}^1 \beta(v) f(v) dv > B_H (1 - F(V_H))$.

We average out $\xi$ in the interval $v \in [V_H-\delta, 1]$, for some $\delta > 0$, while maintaining $\beta(v) \ge B_H$ for $v \ge V_H - \delta$. This will improve our objective. We can check that the transformed $\xi$ satisfies $\xi(v) = \frac{1}{1-F(V_H-\delta)} \int_{V_H-\delta}^1 \xi(v) f(v) dv$ for $v \ge V_H-\delta$. We can find the $\delta$ by solving the following equation:
\[
B_H = \beta(V_H-\delta) = (V_H-\delta) \frac{1}{1-F(V_H-\delta)} \int_{V-\delta}^1 \xi(v) f(v) dv - \int_0^{V_H-\delta}\xi(v) dv.
\]
Observe that the right-hand side is a non-increasing continuous function of $\delta$. As $\delta$ goes from $0$ to $V_H$, the right-hand side goes from strictly above $B_H$ to $0$, so we obtain the required solution for $\delta$.

\subsubsection{Proof of Theorem~\ref{thm:optLinearReg}}
We provide separate proofs for unit-range and unit-sum settings.

\paragraph{Unit-Range} Note that Lemmas~\ref{thm:optLinear1},\ref{thm:optLinear2}, and~\ref{thm:optLinearRegUp} are  applicable for unit-range, and Lemma~\ref{thm:optLinearRegRight} is applicable with slight modification. Together, they imply that $\xi(v) = 0$ for $v < V$ and $\xi(v) = 1$ for $v \ge V$, for some $V \in [0,1]$. From Lemma~\ref{thm:optLinear1} we also have $V_L = V \ge B_H$, and from Lemma~\ref{thm:optLinearRegUp}, $V \le V_H \le B_H$ if there exists a $V_H$, or $V < B_H$ if not. The objective value can be written as:
\[
    B_L F(V) + V (1-F(V)).
\]
Differentiating w.r.t. $V$ and equating to $0$, we obtain
\begin{multline*}
    B_L f(V) - Vf(V) + (1-F(V)) = f(V) (B_L - (V - \frac{1-F(V)}{f(V)}))
    = f(V) (B_L - \psi(V)) = 0 \\
    \implies \psi(V) = B_L \implies V = \psi^{-1}(B_L).
\end{multline*}
We can also observe that the solution to the above equation is a global maximum because the derivative of the objective is greater than $0$ for $V < \psi^{-1}(B_L)$ and less than $0$ afterwards. Plugging in the constraints on $V$, $B_L \le V \le B_H$, the optimal solution is $V = \max(B_L,\min(B_H,\psi^{-1}(B_L)))$.

\paragraph{Unit-Sum} We divide the analysis into two cases, depending on whether $\beta$ hits the upper threshold $B_H$.
\begin{enumerate}
    \item $\beta(v) < B_H$ for $v \in [0,1]$. We do not have a $V_H$, and for $V_L$ we have the following inequality:
\begin{multline*}
     \beta(1) < B_H \implies 1 \xi(1) - \int_{V_L}^1 \xi(v) dv < B_H \implies \int_{V_L}^1 F(v)^{n-1} dv > 1-B_H 
     \implies V_L < V_{\mmid},
\end{multline*}
    where $V_{\mmid}$ is the solution of the equation $\int_{V_{\mmid}}^1 F(v)^{n-1} dv = 1 - B_H$. We also know that
    \[
        \beta(V_L) \ge B_L \implies V_L \xi(V_L) = V_L F(V_L)^{n-1} \ge B_L \implies V_L \ge V_{\low},
    \]
    where $V_{\low}$ is the solution to $V_{\low} F(V_{\low})^{n-1} = B_L$. Thus, a value of $V_L$ in $[V_{\low},V_{\mmid})$ maximizes the objective:
    \[
        \mathit{OBJ} = B_L F(V_L) + \int_{V_L}^1 \beta(v) f(v) dv.
    \]
    Now, $\beta(v) = v F(v)^{n-1} - \int_{V_L}^v F(t)^{n-1} dt \implies \frac{d\beta(v)}{dV_L} = F(V_L)^{n-1}$. Differentiating $\mathit{OBJ}$ w.r.t. $V_H$ we get:
    \begin{align*}
        \frac{d \mathit{OBJ}}{d V_H} &= B_L f(V_L) - \beta(V_L) f(V_L) + \int_{V_L}^1 \frac{d\beta(v)}{dV_L} f(v) dt \\
        &= B_L f(V_L) - V_L F(V_L)^{n-1} f(V_L) + F(V_L)^{n-1} (1 - F(V_L)) \\
        &= f(V_L) F(V_L)^{n-1} (\frac{B_L}{F(V_L)^{n-1}}  - V_L +  \frac{1 - F(V_L)}{f(V_L)}) \\
        &= f(V_L) F(V_L)^{n-1} (\frac{B_L}{F(V_L)^{n-1}}  - \psi(V_L)).
    \end{align*}
    As $\frac{B_L}{F(V_L)^{n-1}}$ decreases with $V_L$, $\psi(V_L)$ increases with $V_L$, and $f(V_L) F(V_L)^{n-1}$ is non-negative, the root of $\frac{B_L}{F(V_L)^{n-1}}  - \psi(V_L) = 0$
    is the global maximum. Also, as the function is continuous, we can efficiently find a solution using a root finding algorithm such as the bisection method; let the solution be $\overline{V_L}$. The optimal $V_L$ for this case will be $V_L = \max(V_{\low}, \min(V_{\mmid}, \overline{V_L}))$.
    
    \item $\beta(v) \ge B_H$ for $v \ge V_H \in [0,1]$. We have the following equality:
    \[
        \beta(V_H) =  V_H \xi(V_H) - \int_{V_L}^{V_H} \xi(v) dv = V_H \eta(F(V_H)) - \int_{V_L}^{V_H} F(v)^{n-1} dv = B_H,
    \]
    where $\eta(x) = \frac{1-x^n}{n(1-x)}$. Note that $\eta'(x) = \frac{1-x^n}{n(1-x)^2} - \frac{nx^{n-1}}{1-x} = \frac{1}{1-x}(\eta(x) - x^{n-1})$ and also that $\eta(x) \ge x^{n-1}$ and $\eta'(x) \ge 0$ for $x \in [0,1]$. 
    
    For $u \ge v$, $\psi_u(v) = v - \frac{F(u) - F(v)}{f(v)}$. Observe that $\psi_u(v)$ is non-decreasing in $v$ because $\frac{\psi_u(v)}{dv} = 2 + \frac{(F(u) - F(v)) f'(v)}{f(v)^2}$ is obviously non-negative if $f'(v) \ge 0$, and if $f'(v) < 0$, then $\frac{\psi_u(v)}{dv}  = 2 + \frac{(F(u) - F(v)) f'(v)}{f(v)^2} \ge 2 + \frac{(1 - F(v)) f'(v)}{f(v)^2} = \frac{\psi(v)}{dv} \ge 0$.
    
    As $\beta(V_H) = B_H$, we get $V_H \in [V_{\uup},1]$ and $V_L \in [V_{\mmid}, V_{\uup}]$ where $V_{\uup}$ is the solution of the equation $B_H = V_{\uup} \eta(F(V_{\uup}))$. Differentiating $\beta(V_H) = B_H$ w.r.t. $V_L$ we get:
    \begin{align*}
        &\frac{d\beta(V_H)}{d V_L} = \frac{dB_H}{d V_L} = 0\\
        &\implies (V_H \eta'(F(V_H)) f(V_H) + \eta(F(V_H)) - F(V_H)^{n-1}) \frac{d V_H}{d V_L} + F(V_L)^{n-1}  = 0 \\
        &\implies (V_H \eta'(F(V_H)) f(V_H) + (1-F(V_H)) \eta'(F(V_H))) \frac{d V_H}{d V_L} + F(V_L)^{n-1}  = 0 \\
        &\implies \frac{d V_H}{d V_L} = \frac{ - F(V_L)^{n-1}}{ \eta'(F(V_H)) (V_H f(V_H) + (1-F(V_H)))} \le 0.
    \end{align*}
    Given $V_L$ and $V_H$, the objective can be written as:
    \begin{align*}
        \mathit{OBJ} = B_L F(V_L) + \int_{V_L}^{V_H} \beta(v) f(v) dv + B_H (1 - F(V_H)).
    \end{align*}
    Differentiating $\mathit{OBJ}$ w.r.t. $V_L$ we get:
    \begin{align*}
        \frac{d \mathit{OBJ}}{d V_L} &= B_L f(V_L) + \beta(V_H) f(V_H) \frac{d V_H}{d V_L} - \beta(V_L) f(V_L) \\
            &\quad + \int_{V_L}^{V_H} F(V_L)^{n-1}  f(v) dv - B_H f(V_H) \frac{d V_H}{d V_L} \\
        &= B_L f(V_L)  -  V_L F(V_L)^{n-1} f(V_L) + \int_{V_L}^{V_H} F(V_L)^{n-1} f(v) dv \\
        &\quad + (\beta(V_H) - B_H) f(V_H) \frac{d V_H}{d V_L} \\
        &= B_L f(V_L)  -  V_L F(V_L)^{n-1} f(V_L) + \int_{V_L}^{V_H} F(V_L)^{n-1} f(v) dv \\ 
        &= F(V_L)^{n-1} f(V_L) (\frac{B_L}{F(V_L)^{n-1}} - V_L + \frac{F(V_H) - F(V_L)}{f(V_L)}) \\
        &= F(V_L)^{n-1} f(V_L) (\frac{B_L}{F(V_L)^{n-1}} - \psi_{V_H}(V_L)).
    \end{align*}
    From the equation above, to find the solution of $\frac{d \mathit{OBJ}}{d V_L} = 0$, we need to solve for the values of $V_L$ and $V_H$ that satisfy $\frac{B_L}{F(V_L)^{n-1}} - \psi_{V_H}(V_L) = 0$ (and $\beta(V_H) = B_H$). As the $F$ and $\psi_{V_H}$ are continuous, we can efficiently find a solution using a root finding algorithm. Moreover, the pair of values for $V_L$ and $V_H$ that satisfies $\frac{B_L}{F(V_L)^{n-1}} - \psi_{V_H}(V_L) = 0$ is optimal because:
    \begin{itemize}
        \item The first term, $\frac{B_L}{F(V_L)^{n-1}}$, decreases with $V_L$.
        \item The second term, $\psi_{V_H}(V_L)$, has a derivative: $\frac{d\psi_{V_H}(V_L)}{dV_L} = \frac{\partial \psi_{V_H}(V_L)}{\partial V_H} \frac{d V_H}{d V_L} + \frac{\partial \psi_{V_H}(V_L)}{\partial V_L}$. As $\frac{\partial \psi_{V_H}(V_L)}{\partial V_H} = \frac{-f(V_H)}{f(V_L)} \le 0$ and $\frac{d V_H}{d V_L} \le 0$, we get $\frac{\partial \psi_{V_H}(V_L)}{\partial V_H} \frac{d V_H}{d V_L} \ge 0$. Also,  $\frac{\partial \psi_{V_H}(V_L)}{\partial V_L} \ge 0$ as shown earlier. So, $\psi_{V_H}(V_L)$ is a non-decreasing function of $V_L$.
    \end{itemize}
    Let the values of $V_L$ and $V_H$ that satisfy $\frac{B_L}{F(V_L)^{n-1}} - \psi_{V_H}(V_L) = 0$ be $\overline{V_L}$ and $\overline{V_H}$, respectively. Overall, we have the optimal $V_L = \min(V_{\uup},\max(V_{\mmid}, \overline{V_L}))$ and the optimal $V_H = \min(1,\max(V_{\uup}, \overline{V_H}))$.
    
\end{enumerate}
One of the two cases, either $\beta(v)$ touches the upper threshold $B_H$ or it does not, will give us the overall optimal solution. 

Given the optimal expected allocation function $\xi(v)$, we can easily derive the optimal allocation function $\bx(\bv)$, given in the theorem statement.

\subsubsection{Proof Sketch of Lemma~\ref{thm:optLinearIrRight}}
The proof is very similar to Lemma~\ref{thm:optLinearRegRight}. The main modifications are: first, we can check that in $[l(v), r(v)]$ if we flatten $\xi$, i.e., we set $\xi(y) = \frac{\int_{l(v)}^{r(v)} \xi(t) f(t) dt}{F(r(v)) - F(l(v))}$ for $y \in [l(v), r(v)]$, then we do not decrease the objective;\footnote{We do this for all points except for the points in $[l(V_L), r(V_L)]$ if $l(V_L) < V_L$, otherwise it might change $\beta^{-1}(B_L)$. Note that the statement of the lemma accommodates for this.} second, we account for the change in the objective for transferring area under $\xi$ from $[l(u),r(u)]$ to $[l(v),r(v)]$, in aggregate, rather than from $u$ to $v$ as we did in Lemma~\ref{thm:optLinearRegRight}.

\end{document}